\def\BibTeX{{\rm B\kern-.05em{\sc i\kern-.025em b}\kern-.08em
    T\kern-.1667em\lower.7ex\hbox{E}\kern-.125emX}}
\newcommand{\Ones}{\mathds{1}}
\newcommand*{\eratio}{\rho}
\newcommand{\ZGn}{Z_n^{\Gibbs}}
\newcommand{\ZDBn}{Z_n^{\Bethe,2}}
\newcommand{\CGn}{C^{\Gibbs}}
\newcommand{\CDBn}{C^{\Bethe,2}}
\newcommand{\PGn}{P_{\Gibbs}}
\newcommand{\PDBn}{P_{\Bethe,2}}
\tikzset{join/.code=\tikzset{after node path={%
    \ifx\tikzchainprevious\pgfutil@empty\else(\tikzchainprevious)%
    edge[every join]#1(\tikzchaincurrent)\fi}}
}
\tikzset{
    >=stealth',
    every on chain/.append style={join},
    every join/.style={->},
    labeled/.style={
        execute at begin node=$\scriptstyle,
        execute at end node=$
    }
}
\begin{document}

\title{Double-Cover-Based\,Analysis\,of\,the\,Bethe\,Permanent of Block-Structured Positive Matrices
}

\author{\IEEEauthorblockN{Binghong Wu}
\IEEEauthorblockA{\textit{Department of Information Engineering} \\
\textit{The Chinese University of Hong Kong}\\
Shatin, N.T., Hong Kong\\
wb024@ie.cuhk.edu.hk}
\and
\IEEEauthorblockN{Pascal O. Vontobel}
\IEEEauthorblockA{\textit{Department of Information Engineering} \\
\textit{The Chinese University of Hong Kong}\\
Shatin, N.T., Hong Kong\\
pascal.vontobel@ieee.org}
}

\maketitle

\pagestyle{plain}
\thispagestyle{plain}

\begin{abstract}
We consider the permanent of a square matrix with non-negative entries. A tractable approximation is given by the so-called Bethe permanent, which can be efficiently computed by running the sum-product algorithm on a suitable factor graph. While the ratio of the permanent of a matrix to its Bethe permanent is, in the worst case, upper and lower bounded by expressions that are exponentially far apart in the matrix size, in practice it is observed for many ensembles of matrices of interest that this ratio is strongly concentrated around some value that depends only on the matrix size. In this paper, for an ensemble of block-structured matrices where entries in a block take the same value, we numerically study the ratio of the permanent of a matrix to its Bethe permanent. It is observed that also for this ensemble the ratio is strongly concentrated around some value depending only on a few key parameters of the ensemble. We use graph-cover-based approaches, combined with multivariate analytic combinatorics, to explain the reasons for this behavior and to quantify the observed value.
\end{abstract}

\begin{IEEEkeywords}
Permanent, Bethe approximation, normal factor graph, graph cover, analytic combinatorics in several variables (ACSV), pattern maximum likelihood (PML).
\end{IEEEkeywords}

\section{Introduction}
\label{sec:intro}
Let $n$ be a positive integer and let $\matr{A} \defeq (a_{i,j}) \in \mathbb{R}_{\ge 0}^{n\times n}$. The permanent of the matrix $\matr{A}$ is defined to be
\begin{align}
    \perm(\matr{A})
    &\defeq
    \sum_{\sigma\in\set{S}_n}\ \prod_{i \in [n]} a_{i,\sigma(i)} ,
    \label{eq:intro:perm}
\end{align}
where $\set{S}_n$ is the set of all permutations of $[n] \defeq \{ 1, \ldots, n \}$. The permanent is a fundamental quantity in combinatorics and statistical physics~\cite{Minc:78}. In graphical-model terms~\cite{Kschischang:Frey:Loeliger:01,Loeliger:04:1,Yedidia:Freeman:Weiss:05:1}, one can formulate a graphical model such that its partition function equals $\perm(\matr{A})$. Equivalently, $\perm(\matr{A})$ is characterized by the minimum of the Gibbs free energy function associated with the graphical model, and therefore we will refer to it as the Gibbs permanent. Computing $\perm(\matr{A})$ exactly is \#P-hard, motivating efficient surrogates. The Bethe permanent $\perm_{\Bethe}(\matr{A})$ is obtained by minimizing the Bethe free energy function (or, equivalently, by running the sum-product algorithm (SPA) on a suitable graphical model)~\cite{Yedidia:Freeman:Weiss:05:1,6352911}. While often remarkably accurate, the Bethe approximation incurs a systematic gap, making the ratio $\perm(\matr{A})/\perm_{\Bethe}(\matr{A})$ a key quantity to understand.

A main motivation for the investigations in the present paper is the pattern maximum likelihood (PML) problem~\cite{10.5555/1036843.1036895}. Several works~\cite{6283654,6804280,pmlr-v134-anari21a} show that determining the PML is equivalent to computing $\arg\max_{\matr{A}}\perm(\matr{A})$ over a structured family of non-negative matrices. Since exact permanents are intractable for large matrices, Vontobel~\cite{6283654,6804280} proposed the practical surrogate objective $\arg\max_{\matr{A}}\ \perm_{\Bethe}(\matr{A})$, which makes it essential to understand the approximation ratio $\perm(\matr{A})/\perm_{\Bethe}(\matr{A})$ on PML-induced instances. More recently, Anari \emph{et al.}~\cite{pmlr-v134-anari21a} highlighted that these instances possess pronounced low-complexity structure (e.g., few distinct rows/columns, low non-negative rank) and leveraged this viewpoint to justify and analyze scalable surrogates, including both Bethe- and Sinkhorn-type approximations, for approximate PML optimization.

Therefore, studying the ratio $\perm(\matr{A})/\perm_{\Bethe}(\matr{A})$ for highly structured matrices is of particular interest. Specifically, block-structured matrices, where all entries in a block take the same value, directly model the low-complexity regime of having only a few distinct rows and columns. A simple instance for such a matrix with $n \! = \! 5$ and $m \! = \! 2$ is
\begin{align}
    \hspace{-0.25cm}
    \matr{A}
    &\defeq
    \matr{A}\bigl( \matr{B}, \bm{k}, \bm{\ell} \bigr)
    \defeq
    \mbox{\tiny $
    \renewcommand{\arraystretch}{1.3}
    \left(
        \begin{array}{cccc|c}
            b_{11} & b_{11} & b_{11} & b_{11} & b_{12}\\
            b_{11} & b_{11} & b_{11} & b_{11} & b_{12}\\
            b_{11} & b_{11} & b_{11} & b_{11} & b_{12}\\
            \hline
            b_{21} & b_{21} & b_{21} & b_{21} & b_{22}\\
            b_{21} & b_{21} & b_{21} & b_{21} & b_{22}
        \end{array}
    \right)
    $}                                    
    \in \mathbb{R}_{> 0}^{n \times n},
    \label{eq:block:structured:matrix:example:1}
\end{align}
which is based on the matrix $\matr{B} \defeq (b_{i,j}) \in \mathbb{R}_{> 0}^{m \times m}$, row block sizes $\bm{k} = (k_1; k_2) = (3; 2)$ and column block sizes $\bm{\ell} = (\ell_1; \ell_2) = (4;1)$. In the case of the PML, we have $b_{i,j} \defeq q_i^{\mu_j}$, with probabilities $\{q_i\}_i$ and frequencies $\{\mu_j\}_j$. (See, e.g., \cite{6283654,6804280} for details.)

The Bethe permanent has a substantial literature, ranging from variational viewpoints to approximation guarantees (see, e.g., \cite{Chertkov:Kroc:Vergassola:08:1,Huang:Jebara:09:1,Watanabe:Chertkov:10:1,Gurvits:11:2,Gurvits:Samorodnitsky:14:1,8948682,Straszak:Vishnoi:19:1,pmlr-v134-anari21a}). In particular, for an arbitrary matrix $\matr{A} \in \mathbb{R}_{\ge 0}^{n\times n}$ it was proven that
\begin{align}
    1
    &\leq
    \frac{\perm(\matr{A})}{\perm_{\Bethe}(\matr{A})}
    \leq
    2^{n/2},
\end{align}
with the lower and upper bounds being tight. In contrast to this substantial gap between the worst case upper and lower bounds, in practice it is observed that for many matrix ensembles of interest the ratio $\perm(\matr{A}) / \perm_{\Bethe}(\matr{A})$ is strongly concentrated. See, e.g., the paper~\cite{Ng:Vontobel:22:2}, that studied matrices with i.i.d. entries.

In the present paper, we make a similar observation for block-structured matrices $\matr{A}\bigl( \matr{B}, \bm{k}, \bm{\ell} \bigr)$, see Fig.~\ref{fig:ratiocomparison_all}. The results in this figure are obtained as follows: we fix $n \! = \! 5$ and $m \! = \! 2$, and construct $\matr{A} \defeq \matr{A}\bigl( \matr{B}, \bm{k}, \bm{\ell} \bigr)$ based on $\matr{B}$-matrices whose entries are generated i.i.d. according to some distribution and based on block sizes $\bm{k} = (k_1; k_2)$ and $\bm{\ell} = (\ell_1; \ell_2)$ that are randomly generated. For each such matrix, we draw a blue triangle at the location $(x,y) = \bigl( \perm(\matr{A}), \perm_{\Bethe}(\matr{A}) \bigr)$. We see that the resulting blue triangles follow closely the line $x \mapsto y \defeq \sqrt{\e / (2 \pi n)} \cdot x$, i.e., the ratio $\perm(\matr{A}) / \perm_{\Bethe}(\matr{A})$ is close to $\sqrt{2 \pi n / \e}$. Interestingly, this ratio is the same as the ratio that was observed for the ensemble under consideration in~\cite{Ng:Vontobel:22:2}. These observations reinforce the practical reliability of Bethe-based surrogates on PML-type structured matrices.

While $\perm_{\Bethe}(\matr{A})$ can be computed efficiently, it turns out to be rather difficult to characterize this quantity analytically except for a few special cases. This is the reason why in this paper we analyze the degree-$2$ Bethe permanent $\perm_{\Bethe,2}(\matr{A})$ of $\matr{A}$ instead of $\perm_{\Bethe}(\matr{A})$, similar to what was done in~\cite{Ng:Vontobel:22:2}. (See also the recent related work~\cite{zhou2026complexvaluedmatrixpermanentsspabasedapproximations} for complex-valued matrices.) While $\perm_{\Bethe,2}(\matr{A})$ is different from $\perm_{\Bethe}(\matr{A})$, these investigations are valuable because
\begin{itemize}
    \item the ratio $\perm(\matr{A}) / \perm_{\Bethe,2}(\matr{A})$ is predictive of the ratio $\perm(\matr{A}) / \perm_{\Bethe}(\matr{A})$,

    \item the analysis of $\perm_{\Bethe,2}(\matr{A})$ gives insight into the value taken by the ratio $\perm(\matr{A}) / \perm_{\Bethe,2}(\matr{A})$, and with that insight into the value taken by the ratio $\perm(\matr{A}) / \perm_{\Bethe}(\matr{A})$.

\end{itemize}
Continuing the above simulation study, Fig.~\ref{fig:ratiocomparison_all} also shows the numerical results for the degree-$2$ Bethe permanent $\perm_{\Bethe,2}(\matr{A})$, where a red circle is drawn at the location $(x,y) = \bigl( \perm(\matr{A}), \perm_{\Bethe,2}(\matr{A}) \bigr)$. We see that the resulting red circles follow closely the line $x \mapsto y \defeq \sqrt[4]{\e / (\pi n)} \cdot x$, i.e., the ratio $\perm(\matr{A}) / \perm_{\Bethe,2}(\matr{A})$ is close to $\sqrt[4]{\pi n / \e}$.

Recall that the degree-$2$ Bethe permanent $\perm_{\Bethe,2}(\matr{A})$, and, more generally, the degree-$M$ Bethe permanent $\perm_{\Bethe,M}(\matr{A})$ are defined as follows~\cite{6570731,6352911}. Namely,
\begin{align}
    \perm_{\Bethe,M}(\matr{A})
    &\defeq
    \sqrt[M]{\Big\langle\,\perm\!\big(\matr{A}^{\uparrow \matr{\tilde{P}}}\big)\,\Big\rangle_{\matr{\tilde{P}}\in\tilde\Phi_M}} ,
    \label{eq:intro:cover}
\end{align}
where $\langle \, \cdot \, \rangle$ is the arithmetic average over all degree-$M$ covers (equivalently, over all cover configurations $\matr{\tilde{P}}\in\tilde\Phi_M$), and $\matr{A}^{\uparrow \matr{\tilde{P}}}$ is the corresponding $Mn\!\times\! Mn$ lifted matrix. (See~\cite{6570731,6352911} for details.) As was shown in~\cite{6570731,6352911},
\begin{align}
    \perm_{\Bethe,1}(\matr{A})
    &= \perm(\matr{A}), \\
    \limsup_{M\to\infty}\ \perm_{\Bethe,M}(\matr{A})
    &= \perm_{\Bethe}(\matr{A}).
\end{align}
It is straightforward to see that one can write
\begin{align}
    \frac{\perm(\matr{A})}{\perm_{\Bethe}(\matr{A})}
    =
    \frac{\perm(\matr{A})}{\perm_{\Bethe,2}(\matr{A})}
    \cdot
    \frac{\perm_{\Bethe,2}(\matr{A})}{\perm_{\Bethe}(\matr{A})}.
\label{eq:intro:ratio-decomp}
\end{align}
A recurring empirical observation, and a useful working heuristic for asymptotics, is that the two factors on the right-hand side are often very similar~\cite{7746637,Ng:Vontobel:22:2,Vontobel2025ITW}. Consequently,
\begin{align}
    \frac{\perm(\matr{A})}{\perm_{\Bethe}(\matr{A})}
    \approx
    \left(\frac{\perm(\matr{A})}{\perm_{\Bethe,2}(\matr{A})}\right)^{\!\! 2},
    \label{eq:tworatio}
\end{align}
and so understanding the ratio $\perm(\matr{A}) / \perm_{\Bethe,2}(\matr{A})$ goes a long way toward understanding the ratio $\perm(\matr{A}) / \perm_{\Bethe}(\matr{A})$.

\begin{figure}[t]
    \vspace{0.05in}
    \centering
    \includegraphics[width=0.9\linewidth]{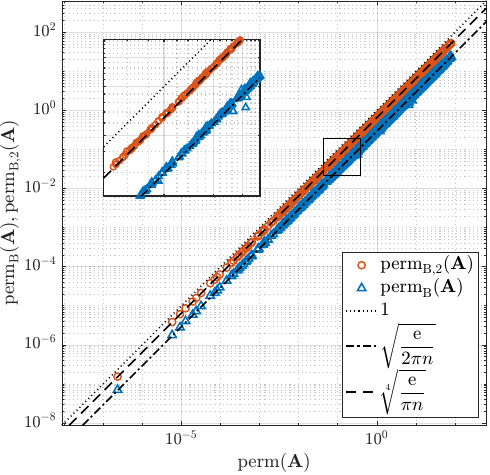}
    \caption{Numerical results for block-structured matrices $\matr{A}$, along with theoretical asymptotics, for $n\!=\!5$ and $m\!=\!2$. (See Section~\ref{sec:intro} for details.)}
    \label{fig:ratiocomparison_all}
\end{figure}

In particular, Ng and Vontobel~\cite{Ng:Vontobel:22:2} established the following results. For the all-one matrix $\matr{A} \defeq \Ones_{n\times n}$, they proved that\footnote{The notation $a(n) \sim b(n)$ stands for $\lim\limits_{n \to \infty} \frac{a(n)}{b(n)} = 1$.}
\begin{align}
    \frac{\perm(\matr{A})}{\perm_{\Bethe}(\matr{A})}
    \sim
    \sqrt{\frac{2\pi n}{\e}},
    \qquad
    \frac{\perm(\matr{A})}{\perm_{\Bethe,2}(\matr{A})}
    \sim
    \sqrt[4]{\frac{\pi n}{\e}},
    \label{eq:intro:allone}
\end{align}
showing analytically that the expression in~\eqref{eq:tworatio} holds up to a factor $\sqrt{2}$. Moreover, for matrices with i.i.d.\ entries, they showed that
\begin{align}
    \gamma_{\Bethe,2}(n)
    \defeq
      \frac{\sqrt{\mathbb{E}\bigl[ \perm(\matr{A})^2 \bigr]}}
           {\sqrt{\mathbb{E}\bigl[ \perm_{\Bethe,2}(\matr{A})^2 \bigr]}}
    \;\sim\;
    \gamma'_{\Bethe,2}(n)
    \defeq
    \sqrt[4]{\frac{\pi n}{\e}}.
\label{eq:intro:expectation}
\end{align}
Notably, the ratios $\sqrt{2\pi n/\e}$ and $\sqrt[4]{\pi n/\e}$ that appear in~\eqref{eq:intro:allone}--\eqref{eq:intro:expectation} also play a key role for the matrices considered in Fig.~\ref{fig:ratiocomparison_all}.

The main result of the present paper is, in the asymptotic setting stated in Assumption~\ref{assump:general_block} below, that for a matrix $\matr{A} \defeq \matr{A}\bigl( \matr{B}, \bm{k}, \bm{\ell} \bigr)$ it holds that
\begin{align}
    \frac{\perm(\matr{A})}{\perm_{\Bethe,2}(\matr{A})}
    &\sim
    \sqrt[4]{\frac{\pi n}{\e}}
    \cdot
    \prod_{i=2}^{m} \sqrt[4]{\frac{\e^{-\eratio_i}}{1-\eratio_i}},
    \label{eq:intro:main}
\end{align}
where $\{ \rho_i \}_{i=2}^{m}$ are real numbers in the interval $[0,1)$ computed based on $\matr{B}$, $\bm{k}$, and $\bm{\ell}$. For many matrix ensembles $\matr{A}\bigl( \matr{B}, \bm{k}, \bm{\ell} \bigr)$ of interest (in particular the matrix ensembles of interest in the PML setting), we observe that $\rho_i \ll 1$, \ $2 \leq i \leq m$, and so
\begin{align}
    \frac{\perm(\matr{A})}{\perm_{\Bethe,2}(\matr{A})}
    \sim
    \sqrt[4]{\frac{\pi n}{\e}}
    \cdot
    \Biggl( 1 + O\Biggl( \sum_{i=2}^{m}\eratio_i^2 \Biggr) \Biggr).
    \label{eq:intro:main-smallr}
\end{align}
Note that, whereas prior results in~\eqref{eq:intro:expectation} characterized ``only'' the expectation-value ratio $\mathbb{E}\bigl[ \perm(\matr{A})^2 \bigr] \ / \ \mathbb{E}\bigl[\perm_{\Bethe,2}(\matr{A})^2 \bigr]$ for the matrix ensemble under consideration in~\cite{Ng:Vontobel:22:2}, the new results in~\eqref{eq:intro:main}--\eqref{eq:intro:main-smallr} characterize the ratio $\perm(\matr{A}) \ / \ \perm_{\Bethe,2}(\matr{A})$ for specific matrices $\matr{A} \defeq$ $\matr{A}\bigl( \matr{B}, \bm{k}, \bm{\ell} \bigr)$, albeit in the asymptotic setting of Assumption~\ref{assump:general_block}.

The rest of this paper is structured as follows:
\begin{itemize}

    \item In \Cref{sec:bg-nfg-dc-cycleindex}, we review the normal factor graph (NFG) representations of the Gibbs and degree-$2$ Bethe permanents, respectively. We also revisit the cycle-index framework that is used to compute quantities of interest.

    \item In \Cref{sec:general_setup}, we formulate the block constraints via multivariate generating functions and extend the ``analytic combinatorics in several variables (ACSV)'' framework~\cite{pemantle2024analytic} to obtain the asymptotics.

    \item In \Cref{sec:numerics}, we present numerical experiments on representative structured instances from the PML setup, demonstrating the robustness of the predicted ratio.

    \item In \Cref{sec:conclusion:outlook}, we conclude the paper and present an outlook.
  
    \item All technical proofs and auxiliary derivations are deferred to the appendices.

\end{itemize}

\noindent
\textbf{Notation.}
Unless stated otherwise, all vectors are column vectors. We use $(\bm{x}; \bm{y})$ to denote $[\bm{x}^\transp\!, \bm{y}^\transp]^\transp$, we use $\bm{1}_m$ to denote the all-one vector of size $m \times 1$, and we use $\Ones_{a \times b}$ to denote the all-one matrix of size $a \!\times\! b$. For $\bm{k} \in \mathbb{Z}_{\geq 0}^{m}$, we define $\bm{k}! \!\defeq\! \prod_{i \in [m]} (k_i!)$. For $\bm{w} \in \mathbb{R}_{\geq 0}^{m}$ and $\bm{k} \in \mathbb{Z}_{\geq 0}^{m}$, we define $\bm{w}^{\bm{k}} \!\defeq\! \prod_{i \in [m]} w_i^{k_i}$.

\section{Background: \\ NFGs, Double Covers, and the Cycle Index}
\label{sec:bg-nfg-dc-cycleindex}

\subsection{Normal factor graphs and partition sums}

A factor graph $\graphN$ can be used to depict a multivariate function that is the product of (simpler) multivariate functions. The former function is called the global function and the latter functions are called local functions. The corresponding factor graph consists of function nodes and variable nodes, where for each local function one draws a function node, for every variable one draws a variable node, and for every variable appearing as an argument of a local function one draws an edge between the corresponding variable and function nodes. A valid configuration is an assignment of values to the variable nodes such that the global function takes on a nonzero
value.

The partition sum $Z(\graphN)$ is defined to be the sum of the global function over all possible variable assignments, or, equivalently, the sum of the global function over all valid configurations. With suitable local functions, partition sums encode many combinatorial quantities of interest.

In the present paper, we use normal factor graphs (NFGs) in the sense of~\cite{Kschischang:Frey:Loeliger:01,Loeliger:04:1}, where the factorization is formulated such that all variable nodes have degree two or one, and because of this the variable nodes are then usually omitted in drawings, i.e., there are no variable nodes in drawings, and variables are simply associated with edges or half-edges.

Our analysis builds upon the double-cover NFG framework for the permanent~\cite{6352911,6570731,7746637,Ng:Vontobel:22:2,Vontobel2025ITW}. We adopt the formulation from these papers, restating here only the key definitions and propositions required for our derivation. (We refer to Appendix~A for a much more detailed discussion.) The corresponding graphs are shown in Fig.~\ref{fig:ffg:permanent:1}.

\begin{figure}[t]
    \vspace{0.05in}
    \begin{subfigure}{0.45\linewidth}
        \centering
        \includegraphics[width=0.85\linewidth]{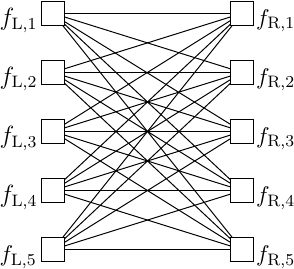}
    \end{subfigure}
    \begin{subfigure}{0.45\linewidth}
        \centering
        \includegraphics[width=0.85\linewidth]{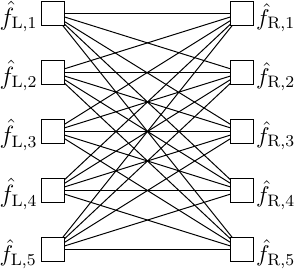}
    \end{subfigure}
    \caption{NFGs used to represent $\perm(\matr{A})$ (left) and $\perm_{\Bethe,2}(\matr{A})$ (right).}
    \label{fig:ffg:permanent:1}
\end{figure}

\subsection{NFGs for permanents}
\label{subsec:bg:nfg-doublecover}

Given $\matr{A} = (a_{i,j})\in\mathbb{R}_{\geq 0}^{n\times n}$, one builds an NFG $\graphN(\matr{A})$ on the complete bipartite graph, see Fig.~\ref{fig:ffg:permanent:1}~(left), with function nodes $\{ f_{\mathrm{L},i} \}_{i \in [n]}$ on the left-hand side, function nodes $\{ f_{\mathrm{R},j} \}_{j \in [n]}$ on the right-hand side, and, for every $(i,j) \in [n]^2$, the variable $x_{i,j} \in \{ 0, 1 \}$ is associated with the edge connecting $f_{\mathrm{L},i}$ and $f_{\mathrm{R},j}$. The local functions are defined such that they take on a nonzero value if and only if exactly one of the variables associated with the incident edges equals~$1$.\footnote{In order to avoid awkward language, we assume that $a_{i,j} > 0$, $(i,j) \in [n]^2$ in this description.} With this, every valid configuration of $\graphN(\matr{A})$ corresponds to a perfect matching in the bipartite graph. In particular, if the perfect matching corresponds to a permutation $\sigma \in \set{S}_n$, then, because of the way that the local functions are defined, the global function equals $\prod_{i \in [n]} a_{i,\sigma(i)}$, and so the partition function of $\graphN(\matr{A})$ satisfies $Z\!\bigl(\graphN(\matr{A})\bigr) = \perm(\matr{A})$.

The paper~\cite{Ng:Vontobel:22:2} defines an NFG $\hat\graphN(\matr{A})$, shown in Fig.~\ref{fig:ffg:permanent:1}~(right), whose partition sum satisfies $Z\bigl( \hat\graphN(\matr{A}) \bigr) = \bigl( \perm_{\Bethe,2}(\matr{A}) \bigr)^2$. While the graph underlying $\hat\graphN(\matr{A})$ is the same as the graph underlying $\graphN(\matr{A})$ in Fig.~\ref{fig:ffg:permanent:1}~(left), its variable alphabets and local functions are different. Namely, for every $(i,j) \in [n]^2$,
\begin{align}
  \hat{x}_{i,j}
    &\in \{0,1\}^2
     = \{(0,0),(0,1),(1,0),(1,1)\},
    \label{eq:bg:edge-pair}
\end{align}
and with the definition of the local functions appearing in Appendix~A. The properties of these local functions turn out to have the following important implication for the valid configurations of $\hat\graphN(\matr{A})$. Namely, after disregarding edges whose associated variable takes the value $(0,0)$, every valid configuration of $\hat\graphN(\matr{A})$ consists of disconnected components corresponding to
\begin{itemize}
    \item single edges whose assoc.\ variable takes the value $(1,1)$,
    \item simple cycles whose assoc.\ variables take the value $(0,1)$,
\end{itemize}
and with every function node being part of one of these disconnected components. Crucially, the value $(1,0)$ does not appear in any valid configuration.

Similar to the construction of $\hat\graphN(\matr{A})$, an NFG can be constructed such that its partition sum equals $\bigl( \perm(\matr{A}) \bigr)^2$. This NFG has local function nodes that have the following important implication for the valid configurations. Namely, after disregarding edges whose associated variable takes the value $(0,0)$, every valid configuration of this NFG consists of disconnected components corresponding to
\begin{itemize}
    \item single edges whose assoc.\ variable takes the value $(1,1)$,
    \item simple cycles whose assoc.\ variables take the value $(0,1)$,
    \item simple cycles whose assoc.\ variables take the value $(1,0)$,
\end{itemize}
and with every function node being part of one of these disconnected components.

With the help of $\hat\graphN(\matr{A})$, Ng and Vontobel~\cite[Proposition~2]{Ng:Vontobel:22:2} show that
\begin{align}
    \hspace{-1.75cm}
    \frac{\bigl( \perm(\matr{A}) \bigr)^2}{\bigl( \perm_{\Bethe,2}(\matr{A}) \bigr)^2}
    &\! = \!
        \Biggl(
            \sum_{\sigma_1,\sigma_2\in\set{S}_n} \!\!\!
            p(\sigma_1)\,p(\sigma_2)
            \, 2^{-c(\sigma_1 \circ \sigma_2^{-1})}
        \Biggr)^{-1} \!\!\! , \hspace{-1cm}
    \label{eq:intro:cycle-penalty}
\end{align}
where $p(\sigma) \defeq \bigl( \prod_{i \in [n]} a_{i,\sigma(i)} \bigr) /\perm(\matr{A})$ and where $c(\sigma)$ is the number of cycles of length at least two in $\sigma$.\footnote{If $\sigma \in \set{S}_6$ is such that $\sigma(1) = 2$, $\sigma(2) = 3$, $\sigma(3) = 1$, $\sigma(4) = 5$, $\sigma(5) = 4$, $\sigma(6) = 6$, then in cycle notation $\sigma$ reads $(123)(45)(6)$, showing that $\sigma$ consists of one cycle of length $3$, one cycle of length $2$, and one cycle of length $1$, and with this $c(\sigma) = 2$.} Note that the term $2^{-c(\sigma_1 \circ \sigma_2^{-1})}$ is the result of the non-existence of $(1,0)$-cycles in the valid configurations of $\hat\graphN(\matr{A})$. In particular,
\begin{itemize}
    \item without the term $2^{-c(\sigma_1 \circ \sigma_2^{-1})}$, the expression on the right-hand side of Eq.~\eqref{eq:intro:cycle-penalty} is equal to $1$,
    \item with the term $2^{-c(\sigma_1 \circ \sigma_2^{-1})}$, the expression on the right-hand side of Eq.~\eqref{eq:intro:cycle-penalty} is at least $1$.
\end{itemize}
With this, Eq.~\eqref{eq:intro:cycle-penalty} shows the clear influence of the cycles in $\hat\graphN(\matr{A})$ on the ratio $\perm(\matr{A}) / \perm_{\Bethe,2}(\matr{A})$.

Eq.~\eqref{eq:intro:cycle-penalty} will serve as our main black-box tool. Since the weight depends on $(\sigma_1,\sigma_2)$ only through the cycle structure of $\sigma_1\circ\sigma_2^{-1}$, it naturally leads to a cycle-index organization where the factor $1/2$ per non-trivial cycle becomes a simple substitution at the level of generating functions.

\section{Positive Block-Constant Model}
\label{sec:general_setup}

We now specialize to positive block-constant matrices with fixed row and column partitions and set up the multivariate coefficient-extraction framework used in the subsequent analysis.

\begin{assumption}
\label{assump:general_block}

Let the positive integer $m$ and the matrix $\matr{B} \defeq (b_{i,j})\in\mathbb{R}_{>0}^{m\times m}$ be fixed. For a positive integer $n$, let the vectors $\bm{k}, \bm{\ell} \in \mathbb{Z}_{>0}^{m}$ be such that $\sum_{i \in [m]} k_i \! = \! \sum_{j \in [m]} \ell_j \! = \! n$. The block-constant matrix $\matr{A} \defeq \matr{A}(\matr{B},\bm{k},\bm{\ell})\in\mathbb{R}_{>0}^{n\times n}$ is defined to be the matrix that is obtained by expanding each $b_{i,j}$ into a constant-value block of size $k_i\times \ell_j$, i.e.,
\begingroup
\setlength{\abovedisplayskip}{0.5em}
\setlength{\belowdisplayskip}{0.5em}
\setlength{\abovedisplayshortskip}{0.2em}
\setlength{\belowdisplayshortskip}{0.2em}
\begin{align}
    \matr{A} \defeq
    \begin{pmatrix}
        b_{1,1} \Ones_{k_1 \times \ell_1} & \cdots & b_{1,m} \Ones_{k_1 \times \ell_m} \\
        \vdots & \ddots & \vdots \\
        b_{m,1} \Ones_{k_m \times\ell_1} & \cdots & b_{m,m} \Ones_{k_m \times \ell_m}
    \end{pmatrix},
\end{align}
\endgroup
where $\Ones_{k\times\ell}$ denotes the all-one matrix of size $k\times\ell$. In this paper we consider the asymptotic setting $n\to\infty$ with $k_i\sim\alpha_i n$ and $\ell_j\sim\beta_j n$ for all $i,j\in[m]$, where $\bm\alpha,\bm\beta\in\mathbb R_{>0}^m$ are fixed vectors satisfying $\sum_{i\in[m]}\alpha_i=\sum_{j\in[m]}\beta_j=1$.
\end{assumption}

To capture the combinatorics of closed walks on $\hat\graphN(\matr{A})$, we introduce indeterminates $\bm{t}=(t_1;\dots;t_m)$ and $\bm{u}=(u_1;\dots;u_m)$ as counting indeterminates in the generating functions, marking row and column types so that the partitioning $(\bm{k},\bm{\ell})$ is selected by $[\bm{t}^{\bm{k}}\bm{u}^{\bm{\ell}}]$. To encode the relevant walk weights in a form that is convenient for spectral analysis, we introduce the following symmetric state-transition matrix.

\begin{definition}
\label{def:S-general}
We define the symmetric state-transition matrix $\matr{S}\defeq\matr{S}(\bm{t},\bm{u})$ of size $m \times m$ by
\begin{align}
    \matr{S}
    \defeq
    \diag(\bm{u})^{1/2}
    \cdot \matr{B}^\transp
    \cdot \diag(\bm{t})
    \cdot \matr{B}
    \cdot \diag(\bm{u})^{1/2}.
    \label{eq:Sdef}
\end{align}
The total weight of length-$h$ closed walks is given by $\trace(\matr{S}^h)$.
\end{definition}

Let $\lambda_1, \ldots, \lambda_m$ be the eigenvalues of $\matr{S}$. With this, $\trace(\matr{S}^h)=\sum_{i\in[m]}\lambda_i^h$ for $h\in\mathbb{Z}_{\ge 0}$. Moreover, when $\bm{t},\bm{u}\in\mathbb{R}_{>0}^m$ (as will hold at the critical point identified later), the matrix $\matr{S}(\bm{t},\bm{u})$ is real symmetric and positive semi-definite, and so, without loss of generality, we can assume that $\lambda_1 \ge \lambda_2 \ge \cdots \ge \lambda_m \ge 0$. Under the strict-positivity assumption on $\matr{B}$ (and $\bm{t},\bm{u}$), Perron--Frobenius theory yields a leading eigenvalue $\lambda_1$ with a spectral gap, i.e., $\lambda_1 > \lambda_2$. For details from a factor-graph perspective, please see Appendix~B.

Cycle-index-based generating functions can be used to analyze $\bigl( \perm(\matr{A}) \bigr)^2$ and $\bigl( \perm_{\Bethe,2}(\matr{A}) \bigr)^2$ based on the observations before~Eq.~\eqref{eq:intro:cycle-penalty} and Eq.~\eqref{eq:intro:cycle-penalty} itself, as shown in the following two lemmas.

\begin{lemma}
\label{lem:block_trace_egf}
  
Under Assumption~\ref{assump:general_block} and Definition~\ref{def:S-general}, let $(\bm{t},\bm{u})$ be such that $\bm{t},\bm{u}\in\mathbb{R}_{>0}^m$ and $\lambda_1\!\left(\matr{S}(\bm{t},\bm{u})\right) < 1$. Define
\begingroup
\setlength{\abovedisplayskip}{0.4em}
\setlength{\belowdisplayskip}{0.4em}
\setlength{\abovedisplayshortskip}{0.2em}
\setlength{\belowdisplayshortskip}{0.2em}
\begin{align}
    \CGn(\bm{t},\bm{u})
    &\defeq \exp\!\left(\sum_{h\ge 1}\frac{\trace\bigl(\matr{S}(\bm{t},\bm{u})^h\bigr)}{h}\right),
    \label{eq:CG-trace}\\
    \CDBn(\bm{t},\bm{u})
    &\defeq \exp\!\left(\trace\bigl(\matr{S}(\bm{t},\bm{u})\bigr) + \sum_{h\ge 2}\frac{\trace\bigl(\matr{S}(\bm{t},\bm{u})^h\bigr)}{2 h}\right).
    \label{eq:CB-trace}
\end{align}
\endgroup
Then\footnote{After the original submission of the paper, we realized that~\eqref{eq:CGCB-det:1} is a special case of~\cite[Theorem 1]{chabaud2022quantum} (a generalization of MacMahon's master theorem~\cite{macmahon1915combinatory}); however, our proof differs from the one presented therein.}
\begingroup
\setlength{\abovedisplayskip}{0.4em}
\setlength{\belowdisplayskip}{0.4em}
\setlength{\abovedisplayshortskip}{0.2em}
\setlength{\belowdisplayshortskip}{0.2em}
\begin{align}
    \CGn(\bm{t},\!\bm{u})
    & = \frac{1}{\det\bigl(\matr{I}\!-\!\matr{S}(\bm{t},\!\bm{u})\bigr)},
    \label{eq:CGCB-det:1} \\
    \CDBn(\bm{t},\!\bm{u})
    & = \frac{\exp\Bigl(\tfrac{1}{2}\trace\bigl(\matr{S}(\bm{t},\!\bm{u})\bigr)\Bigr)}{\sqrt{\det\bigl(\matr{I}\!-\!\matr{S}(\bm{t},\!\bm{u})\bigr)}}.
    \label{eq:CGCB-det:2}
\end{align}
\endgroup
\end{lemma}

\begin{proof}
The result uses that for $\bm{t},\bm{u}\in\mathbb{R}_{>0}^m$ such that $\lambda_1(\matr{S}(\bm{t},\bm{u}))<1$, the expansion $\log(\matr{I}\!-\!\matr{S})=-\sum_{h\ge1}\matr{S}^h/h$ converges. Moreover, for any square matrix $\matr{M}$, it holds that $\exp\bigl( \trace(\matr{M}) \bigr) = \det\bigl( \exp(\matr{M}) \bigr)$.
\end{proof}

\begin{lemma}
\label{lem:bridge-ogf-nfg}
  
Consider the matrix $\matr{A} \defeq \matr{A}(\matr{B},\bm{k},\bm{\ell})$ from Assumption~\ref{assump:general_block}. It holds that
\begin{align}
    \bigl( \perm(\matr{A}) \bigr)^2
    &= \bm{k}! \cdot \bm{\ell}! \cdot \ZGn(\bm{k},\bm{\ell}),
    \label{eq:partition:1} \\
    \bigl( \perm_{\Bethe,2}(\matr{A}) \bigr)^2
    &= \bm{k}! \cdot \bm{\ell}! \cdot \ZDBn(\bm{k},\bm{\ell}),
    \label{eq:partition:2}
\end{align}
where
\begin{align}
    \ZGn(\bm{k}, \bm{\ell})
    &= [\bm{t}^{\bm{k}}\bm{u}^{\bm{\ell}}] \, \CGn(\bm{t}, \bm{u}),
    \label{eq:ZGZB-coeff:1} \\
    \ZDBn(\bm{k}, \bm{\ell})
    &= [\bm{t}^{\bm{k}}\bm{u}^{\bm{\ell}}] \, \CDBn(\bm{t}, \bm{u}).
    \label{eq:ZGZB-coeff:2}
\end{align}
Here, $[\bm{t}^{\bm{k}}\bm{u}^{\bm{\ell}}] \, \CGn(\bm{t}, \bm{u})$ is the coefficient $c_{\bm{k},\bm{\ell}}$ of the monomial $c_{\bm{k},\bm{\ell}} \bm{t}^{\bm{k}}\bm{u}^{\bm{\ell}}$ in the power series expansion of $\CGn(\bm{t}, \bm{u})$, with a similar interpretation for $[\bm{t}^{\bm{k}}\bm{u}^{\bm{\ell}}] \, \CDBn(\bm{t}, \bm{u})$.
\end{lemma}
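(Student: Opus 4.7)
The plan is to organize the sums
\[
    \bigl(\perm(\matr{A})\bigr)^2 = \sum_{\sigma_1,\sigma_2\in\set{S}_n} \prod_{i\in[n]} a_{i,\sigma_1(i)}\, a_{i,\sigma_2(i)}
\]
and (after rearranging Eq.~\eqref{eq:intro:cycle-penalty})
\[
    \bigl(\perm_{\Bethe,2}(\matr{A})\bigr)^2 = \sum_{\sigma_1,\sigma_2\in\set{S}_n} \biggl(\prod_{i\in[n]} a_{i,\sigma_1(i)}\, a_{i,\sigma_2(i)}\biggr) \cdot 2^{-c(\sigma_1\circ\sigma_2^{-1})}
\]
by the cycle decomposition of $\sigma_1\circ\sigma_2^{-1}$, project each cycle onto its (row-type, column-type) pattern, and recognize the resulting type-level enumeration as a coefficient of $\CGn$ or $\CBn$. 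Under Assumption~\ref{assump:general_block}, the contribution of each component depends only on its type data: a fixed point of $\sigma_1\circ\sigma_2^{-1}$ (equivalently, a doubled edge of the bipartite union multigraph) between a row of type $r$ and a column of type $c$ contributes $b_{r,c}^2$, while an $h$-cycle visiting types $(\tau_1,\kappa_1),\ldots,(\tau_h,\kappa_h)$ contributes $\prod_{\alpha} b_{\tau_\alpha,\kappa_\alpha}\, b_{\tau_\alpha,\kappa_{\alpha+1}}$.

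Next, a direct expansion of $\matr{W}(\bm{t},\bm{u}) = \matr{B}^\transp\diag(\bm{t})\matr{B}\diag(\bm{u})$ combined with $\trace(\matr{S}^h) = \trace(\matr{W}^h)$ yields
\[
    \trace\bigl(\matr{S}(\bm{t},\bm{u})^h\bigr) = \sum_{\tau,\kappa\in[m]^h} \biggl(\prod_{\alpha} t_{\tau_\alpha} u_{\kappa_\alpha}\biggr)\, \prod_{\alpha} b_{\tau_\alpha,\kappa_\alpha}\, b_{\tau_\alpha,\kappa_{\alpha+1}},
\]
which is precisely the generating polynomial in $\bm{t},\bm{u}$ enumerating ordered length-$h$ cyclic type-sequences weighted by the corresponding cycle weight. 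Dividing by $h$ (as in $\CGn$) quotients by the $h$-fold cyclic overcounting of the trace, and further dividing by $2$ for $h\geq 2$ (as in $\CBn$) quotients by reversal--thereby counting oriented (Gibbs) or unoriented (Bethe) type-cycles. The $h=1$ term $\trace(\matr{S})$ in $\CBn$ is kept without the $1/2$ factor because doubled edges correspond to fixed points of $\sigma_1\circ\sigma_2^{-1}$, which the cycle count $c$ in Eq.~\eqref{eq:intro:cycle-penalty} excludes. Applying the exponential formula then interprets $\CGn$ and $\CBn$ as generating functions for unordered multisets of components, with the $1/c!$ factors in $\exp$ handling indistinguishability of identical copies, and the coefficient extraction $[\bm{t}^{\bm{k}}\bm{u}^{\bm{\ell}}]$ restricts to configurations whose aggregate row-type and column-type counts match $\bm{k}$ and $\bm{\ell}$.

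Finally, for each type-level configuration I would count the number of pairs $(\sigma_1,\sigma_2)$ realizing it at the label level; this equals $\bm{k}!\,\bm{\ell}!$ divided by the combined automorphism factor of the configuration (cyclic or dihedral symmetries of individual cycles, together with multiset symmetries among identical cycles). Since the exponential-formula coefficient already encodes exactly these automorphism factors, multiplying by $\bm{k}!\,\bm{\ell}!$ recovers $\bigl(\perm(\matr{A})\bigr)^2$ and $\bigl(\perm_{\Bethe,2}(\matr{A})\bigr)^2$, respectively. The main technical step is the bookkeeping for cycles with nontrivial internal type-symmetry (palindromic or periodic type-patterns), where both the trace $\trace(\matr{S}^h)$ and the label-level enumeration include matching stabilizer factors: the trace sums symmetric sequences with reduced multiplicity, and the label-level count produces proportionally fewer realizations. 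The cleanest way to verify this cancellation is to treat the identity as a formal power-series statement and expand both sides by cycle-type profile, checking the coefficient of each monomial $\bm{t}^{\bm{k}}\bm{u}^{\bm{\ell}}$ via the standard cycle-index calculation underlying the identity $\exp\bigl(\sum_h \trace(\matr{M}^h)/h\bigr) = 1/\det(\matr{I}-\matr{M})$ already invoked in Lemma~\ref{lem:block_trace_egf}.
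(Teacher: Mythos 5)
Your proposal is correct and follows essentially the same route as the paper: organize $\bigl(\perm(\matr{A})\bigr)^2$ and $\bigl(\perm_{\Bethe,2}(\matr{A})\bigr)^2$ (via Eq.~\eqref{eq:intro:cycle-penalty}) by the cycle structure of $\sigma_1\circ\sigma_2^{-1}$, project onto row/column types so that cycle weights collapse to $\trace(\matr{S}^h)$ with the $1/h$ (rotation) and $1/2$ (orientation, absent for doubled edges) normalizations matching $\CGn$ and $\CBn$, and restore labels within each type class to produce the factor $\bm{k}!\,\bm{\ell}!$. You actually supply more of the combinatorial bookkeeping than the paper's own one-paragraph proof does (the paper relegates the trace-kernel justification to Appendix~\ref{app:fgtrace}), and the one step you defer---the stabilizer cancellation for symmetric type-patterns, handled by the standard cycle-index identity already invoked in Lemma~\ref{lem:block_trace_egf}---is likewise left implicit in the paper.
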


\begin{proof}
The coefficients $\ZGn(\bm{k},\bm{\ell})$ and $\ZDBn(\bm{k},\bm{\ell})$ arise from the cycle-index enumeration driven by $\matr{S}(\bm{t},\bm{u})$, hence depend only on the $m\times m$ matrix $\matr{B}$. However, they do not account for the $k_i\!\times\! \ell_j$ block repetitions in $\matr{A}(\matr{B},\bm{k},\bm{\ell})$. Restoring labels within each row/column type class contributes the multiplicity factor $\bm{k}! \cdot \bm{\ell}!$ for both $\bigl( \perm(\matr{A}) \bigr)^2$ and $\bigl( \perm_{\Bethe,2}(\matr{A}) \bigr)^2$, thereby yielding~\eqref{eq:partition:1}--\eqref{eq:partition:2}.
\end{proof}

\begin{proposition}
\label{prop:block_asymp}

Let $\bm{z} \defeq (\bm{t}; \bm{u})$ denote the set of $2m$ counting indeterminates. Let $\bm{w}\defeq (\bm{t^*};\bm{u^*}) \in \mathbb{R}_{>0}^{2m}$ be the unique positive critical point within the ACSV framework governing the coefficient growth associated with the partitioning $\bm{r} \defeq (\bm{k};\bm{\ell})$. Let $\lambda_1 > \lambda_2 \geq \dots \geq \lambda_m \geq 0$ be the eigenvalues of the matrix $\matr{S}(\bm{w})$. Defining the spectral ratios $\eratio_i \defeq \lambda_i/\lambda_1$ at this critical point, the coefficient extraction yields
\begin{align}
    \ZGn & \sim
    \frac{n \sqrt{2m} \cdot \bm{w}^{-\bm{r}}}{(2\pi)^{m-1} \|\bm{r}\|_2^m \sqrt{\det(\mathcal{H})}}
    \!\cdot \!
    \prod_{i=2}^{m}\frac{1}{1\!-\!\eratio_i}, \\
    \ZDBn & \sim
    \frac{n \sqrt{2m} \cdot \bm{w}^{-\bm{r}}}{(2\pi)^{m-1} \|\bm{r}\|_2^m \sqrt{\det(\mathcal{H})}}
    \!\cdot \! 
    \sqrt{\frac{\e}{\pi n}}
    \!\cdot \!
    \prod_{i=2}^{m}\sqrt{\frac{\e^{\eratio_i}}{1\!-\!\eratio_i}},
\end{align}
as $n\to\infty$, where $\det(\mathcal{H})$ denotes the determinant of the $(2m\!-\!2) \times (2m\!-\!2)$ Hessian matrix restricted to the tangent space of the singular manifold at $\bm{w}$.
\end{proposition}

\begin{proof}
See Appendix~C for the ACSV derivation, and Appendix~D for the Sinkhorn-scaling-algorithm-based characterization of the positive critical point.
\end{proof}

Combining the above results yields the following theorem.

\begin{theorem}
\label{thm:general_ratio}

Consider the matrix $\matr{A} \defeq \matr{A}(\matr{B},\bm{k},\bm{\ell})$ from Assumption~\ref{assump:general_block}. Let $\eratio_i \defeq \lambda_i/\lambda_1$, $i = 2,\dots,m$, be defined as in Proposition~\ref{prop:block_asymp}. As $n\to\infty$, it holds that
\begin{align}
    \frac{\perm(\matr{A})}{\perm_{\Bethe,2}(\matr{A})}
    \;\sim\;
    \sqrt[4]{\frac{\pi n}{\e}}
    \cdot
    \prod_{i=2}^{m} \sqrt[4]{\frac{\e^{-\eratio_i}}{1-\eratio_i}}.
    \label{eq:ratio_general_exact}
\end{align}
\end{theorem}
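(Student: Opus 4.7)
The plan is to obtain the theorem as a direct corollary of Lemma~\ref{lem:bridge-ogf-nfg} and Proposition~\ref{prop:block_asymp}. The strategy is to work at the level of squared permanents so that the combinatorial multiplicity $\bm{k}!\,\bm{\ell}!$ drops from the ratio, substitute the ACSV asymptotics, observe that the geometric prefactors common to the Gibbs and Bethe extractions cancel, and finally take a fourth root.

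Specifically, I would first invoke Lemma~\ref{lem:bridge-ogf-nfg} to write
\begin{align}
\frac{\bigl(\perm(\matr{A})\bigr)^2}{\bigl(\perm_{\Bethe,2}(\matr{A})\bigr)^2}
\;=\; \frac{\bm{k}!\,\bm{\ell}!\,\ZGn(\bm{k},\bm{\ell})}{\bm{k}!\,\bm{\ell}!\,\ZBn(\bm{k},\bm{\ell})}
\;=\; \frac{\ZGn(\bm{k},\bm{\ell})}{\ZBn(\bm{k},\bm{\ell})}.
\end{align}
Both coefficient extractions target the same dominant singular manifold $Q(\bm{z}) = 1-\lambda_1(\bm{z}) = 0$, so Proposition~\ref{prop:block_asymp} supplies the same saddle point $\bm{w}$, the same tangent-space Hessian $\det(\mathcal{H})$, the same gradient norm $\|\nabla_{\log}Q(\bm{w})\|_2$, and the same geometric prefactor $\sqrt{2m}\cdot\bm{w}^{-\bm{r}}/\sqrt{(2\pi\|\bm{r}\|_2)^{2m-2}}$ for both $\ZGn$ and $\ZBn$. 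All of these cancel in the ratio, leaving only the spectrum-dependent correction together with the $\sqrt{\e/(\pi n)}$ factor carried by the regular factor $\exp\bigl(\tfrac{1}{2}\trace(\matr{S})\bigr)$ present in $\CBn$ but absent in $\CGn$. A short manipulation yields
\begin{align}
\frac{\ZGn(\bm{k},\bm{\ell})}{\ZBn(\bm{k},\bm{\ell})}
\;\sim\;
\frac{\prod_{i=2}^{m}(1-\eratio_i)^{-1}}{\sqrt{\prod_{i=2}^{m}\e^{\eratio_i}(1-\eratio_i)^{-1}}}
\cdot
\sqrt{\frac{\pi n}{\e}}
\;=\;
\sqrt{\frac{\pi n}{\e} \cdot \frac{1}{\prod_{i=2}^{m}\e^{\eratio_i}(1-\eratio_i)}},
\end{align}
and taking fourth roots of both sides produces exactly Eq.~\eqref{eq:ratio_general_exact}.

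From this vantage point, the theorem itself is little more than algebraic bookkeeping; the genuine difficulty is packaged into Proposition~\ref{prop:block_asymp}. The main obstacle in the overall argument therefore lies upstream, in certifying a smooth-point ACSV expansion for both $\CGn$ and $\CBn$: one must verify that the Perron--Frobenius eigenvalue $\lambda_1(\bm{z})$ is simple and smooth in $\bm{z}$ near the positive saddle $\bm{w}$ (so that $Q$ is locally a smooth hypersurface and the spectral gap $\eratio_1 = 1 > \eratio_2 \geq \cdots \geq \eratio_m$ delivers a well-defined dominant contribution), that $\bm{w}$ is the unique minimal point governing $[\bm{t}^{\bm{k}}\bm{u}^{\bm{\ell}}]$, and that the residual regular factor $\exp\bigl(\tfrac{1}{2}\trace(\matr{S})\bigr)$ is correctly tracked so that its Gaussian integration over the $(2m{-}2)$-dimensional tangent directions contributes precisely the universal $\sqrt{\e/(\pi n)}$ correction while the square-root of the determinant factor contributes $\sqrt{\prod_{i\ge 2}(1-\eratio_i)^{-1}}$. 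Once Proposition~\ref{prop:block_asymp} is established, the cancellation and fourth-root step above are immediate.
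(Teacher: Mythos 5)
Your proposal is correct and follows essentially the same route as the paper, which likewise obtains the theorem by combining Lemma~\ref{lem:bridge-ogf-nfg} with Proposition~\ref{prop:block_asymp}, cancelling the common factor $\bm{k}!\,\bm{\ell}!$ and the shared geometric prefactor, and extracting the root; your displayed algebra for $\ZGn/\ZBn$ is right. Only the closing phrase is imprecise: since $\ZGn/\ZBn = \bigl(\perm(\matr{A})/\perm_{\Bethe,2}(\matr{A})\bigr)^2$, the final step is a \emph{square} root of both sides (the fourth root then appears on the right because the right-hand side is itself already a square root), exactly as the paper states.
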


\begin{proof}
The statement follows by combining Proposition~\ref{prop:block_asymp} with \eqref{eq:partition:1}--\eqref{eq:partition:2} and taking the square root.
\end{proof}

\begin{corollary}
\label{cor:small_ratio}

We consider the same setup as in Theorem~\ref{thm:general_ratio}. If $\eratio_2 \ll 1$, then
\begin{align}
    \frac{\perm(\matr{A})}{\perm_{\Bethe,2}(\matr{A})}
    \;\sim\;
    \sqrt[4]{\frac{\pi n}{\e}}
    \cdot
    \Biggl(1+O\Biggl(\sum_{i=2}^m \eratio_i^2\Biggr)\Biggr).
    \label{eq:ratio_block_approx}
\end{align}
\end{corollary}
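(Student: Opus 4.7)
The plan is to start from the exact asymptotic of Theorem~\ref{thm:general_ratio} and Taylor-expand the spectral correction factor in powers of the small parameters $\{\rho_i\}_{i=2}^{m}$. By Proposition~\ref{prop:block_asymp}, the spectral gap gives the ordering $1 > \rho_2 \geq \rho_3 \geq \cdots \geq \rho_m \geq 0$, so the hypothesis $\rho_2 \ll 1$ forces every $\rho_i$ with $i \geq 2$ to be small. This is exactly what is needed to make every power series below converge with a uniform remainder bound.

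First I would isolate the $n$-independent factor
\[
F \defeq \sqrt[4]{\frac{1}{\prod_{i=2}^{m}\e^{\rho_i}(1-\rho_i)}}
\]
appearing on the right-hand side of~\eqref{eq:ratio_general_exact}, and take its logarithm:
\[
\log F \;=\; -\tfrac{1}{4}\sum_{i=2}^{m}\bigl(\rho_i + \log(1-\rho_i)\bigr).
\]
I would then substitute the standard power series $\log(1-\rho_i) = -\rho_i - \sum_{k\geq 2}\rho_i^{k}/k$. The linear term in $\rho_i$ cancels the $\rho_i$ coming from $\log(\e^{\rho_i})$, and we obtain
\[
\log F \;=\; \tfrac{1}{4}\sum_{i=2}^{m}\sum_{k\geq 2}\frac{\rho_i^{k}}{k} \;=\; \tfrac{1}{8}\sum_{i=2}^{m}\rho_i^{2} \;+\; O\!\Bigl(\sum_{i=2}^{m}\rho_i^{3}\Bigr).
\]
This linear cancellation is the whole point of the statement: the pairing $\e^{\rho_i}(1-\rho_i) = 1-\rho_i^{2}/2+O(\rho_i^{3})$ is already $1+O(\rho_i^{2})$ per index, which is exactly why the correction factor is $1+O(\rho^{2})$ rather than $1+O(\rho)$.

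To finish, I would exponentiate to get $F = 1 + O\bigl(\sum_{i=2}^{m}\rho_i^{2}\bigr)$, and substitute back into~\eqref{eq:ratio_general_exact} to conclude~\eqref{eq:ratio_block_approx}. There is no substantive obstacle; the only point requiring care is that the implicit constant in the $O(\cdot)$ must be uniform in $m$ and in the spectrum. This holds because $\rho_2$ lies in a fixed compact subset of $[0,1)$ under the hypothesis $\rho_2 \ll 1$, so each tail $\sum_{k\geq 2}\rho_i^{k}/k$ is bounded by an absolute constant times $\rho_i^{2}$, and the cross terms produced by exponentiating the sum are controlled by the same quantity since $\bigl(\sum_{i=2}^{m}\rho_i^{2}\bigr)^{2} \leq \rho_2^{2}\cdot\sum_{i=2}^{m}\rho_i^{2}$ is negligible compared to $\sum_{i=2}^{m}\rho_i^{2}$.
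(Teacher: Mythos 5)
Your proposal is correct and is essentially the paper's own proof, which simply states that the result is ``obtained by applying a Taylor expansion to \eqref{eq:ratio_general_exact} in the regime $\eratio_2\ll 1$''; your computation of $\log F=\tfrac14\sum_{i\ge2}\sum_{k\ge2}\rho_i^k/k=\tfrac18\sum_{i\ge2}\rho_i^2+O(\sum_{i\ge2}\rho_i^3)$ is exactly the cancellation that expansion relies on. (The only nit: your inequality $\bigl(\sum_i\rho_i^2\bigr)^2\le\rho_2^2\sum_i\rho_i^2$ should carry a factor $m-1$, but since $m$ is fixed this does not affect the conclusion.)
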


\begin{proof}
Obtained by applying a Taylor expansion to \eqref{eq:ratio_general_exact} in the regime $\eratio_2\ll 1$.
\end{proof}

\section{Numerical Validation}
\label{sec:numerics}

\begin{figure}[t]
    \centering
    \includegraphics[width=0.8\linewidth]{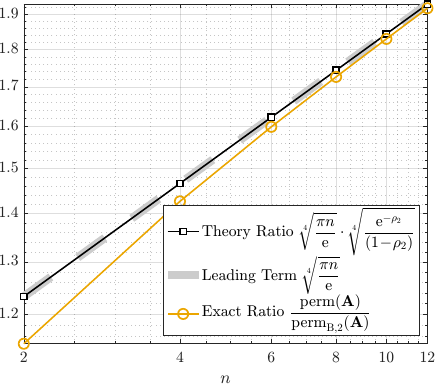}
    \caption{Numerical results discussed in Section~\ref{sec:numerics}.}
    \label{fig:ratiocomparison_full}
\end{figure}

To validate our theory, we consider the following setup:
\begin{itemize}
    \item The integer $n$ takes the values $2, 4, 6, 8, 10, 12$.
    \item The integer $m$ takes the value $2$.
    \item The partitionings take the values $\bm{k} = \bm{\ell} = (n/2; n/2)$.
    \item The matrix $\matr{B}$ has entries $b_{i,j} = q_i^{\mu_j}$ with $(q_1; q_2) = (0.6; 0.4)$ and $(\mu_1; \mu_2) = (2;1)$.
\end{itemize}
For each $n$, we numerically determine the positive saddle point $\bm{w}$ associated with the partitioning $(\bm{k};\bm{\ell})$, and evaluate the spectrum of the symmetric state-transition matrix $\matr{S}(\bm{w})$. For the above setup, we observe that the spectral ratio is small (numerically, $\eratio_2 \approx 0.0102$), making the $\rho_i$-based correction factor in Theorem~\ref{thm:general_ratio} negligible. As shown in Fig.~\ref{fig:ratiocomparison_full}, the ratio $\perm(\matr{A})/\perm_{\Bethe,2}(\matr{A})$ closely follows the predicted $\sqrt[4]{\pi n/\e}$ asymptotic scaling already for small values of $n$.

\section{Conclusion and Outlook}
\label{sec:conclusion:outlook}

The results in this paper significantly extend the class of matrices for which the ratio $\perm(\matr{A})/\perm_{\Bethe,2}(\matr{A})$ can be analyzed asymptotically.

After the initial submission of the present paper, we found that precise asymptotic results can be given not only for $\perm(\matr{A})/\perm_{\Bethe,2}(\matr{A})$, but also for $\perm(\matr{A})/\perm_{\Bethe}(\matr{A})$. These results follow from the fact that the critical point $\bm{w}\defeq (\bm{t^*};\bm{u^*})$ is obtained with the help of the Sinkhorn scaling algorithm and the fact that the SPA behaves similarly to the Sinkhorn scaling algorithm for the matrices under consideration in this paper. These findings will be presented in a forthcoming paper.

\clearpage

\IEEEtriggeratref{15}
\bibliographystyle{IEEEtran}
\bibliography{refs}

\vfill
\noindent
\begin{minipage}{0.95\columnwidth}
\footnotesize
\textsuperscript{*}Compared with the official ISIT version, this arXiv version corrects a few minor typographical issues in the main text and references: an incorrect occurrence of $\hat{\graphN}(\matr A)$ is replaced by ``this NFG'' when referring to the NFG for $(\perm(\matr A))^2$; the block-size condition $k_i,\ell_j\sim n$ is replaced by $k_i\sim\alpha_i n$ and $\ell_j\sim\beta_j n$; and the capitalization of ``SPA-based'' is corrected in the title of reference [18]. The results are unchanged.
\end{minipage}

\clearpage

\appendices

\section{Detailed NFG and Double-Cover Definitions}
\label{app:nfg-detailed}

This section collects the normal factor graph (NFG) definitions and the double-cover construction that underlie the cycle-index viewpoint used in the main text. We follow \cite{Yedidia:Freeman:Weiss:05:1,6352911,6570731,7746637,Ng:Vontobel:22:2,Vontobel2025ITW} and mainly focus on~\cite{Ng:Vontobel:22:2}.

\subsection{NFG $\graphN(\matr{A})$ whose partition function equals $\perm(\matr{A})$}
Let $\matr{A} = (a_{i,j})$ be an arbitrary non-negative matrix of size $n \times n$. We use the NFG $\graphN(\matr{A})$ shown in Fig.~\ref{fig:ffg:permanent:1}~(left), which is the same as in~\cite{Ng:Vontobel:22:2}:

\begin{itemize}
    \item $\graphN(\matr{A})$ is based on the complete bipartite graph with $n$ left nodes $\{f_{\mathrm{L},i}\}_{i\in[n]}$ and $n$ right nodes $\{f_{\mathrm{R},j}\}_{j\in[n]}$.
    
    \item For $i,j\in[n]$, the edge $(f_{\mathrm{L},i},f_{\mathrm{R},j})$ carries a binary variable $x_{i,j}\in\setset{X}\defeq\{0,1\}$.

    \item For $i \in [n]$, let
    \begin{align*}
        \!\!\!\!\!
        f_{\mathrm{L},i}(x_{i,1}, \ldots, x_{i,n})
        &\defeq
        \begin{cases}\!
            \sqrt{a_{i,j}} 
            & \!\!\!\!
            \begin{array}{c}
                \text{$\exists j \in [n]$ s.t. $x_{i,j} = 1$;} \\
                \text{$x_{i,j'} = 0$, $\forall j' \in [n] \setminus \{ j \}$}
            \end{array} \\[0.50cm]
            0 & \text{otherwise}
        \end{cases}
    \end{align*}
    For $j \in [n]$, let
    \begin{align*}
        \!\!\!\!\!
        f_{\mathrm{R},j}(x_{1,j}, \ldots, x_{n,j})
        &\defeq
        \begin{cases}\!
            \sqrt{a_{i,j}} 
            & \!\!\!\!
            \begin{array}{c}
                \text{$\exists i \in [n]$ s.t. $x_{i,j} = 1$;} \\
                 \text{$x_{i',j} = 0$, $\forall i' \in [n] \setminus \{ i \}$}
            \end{array} \\[0.50cm]
            0 & \text{otherwise}
        \end{cases}
    \end{align*}
    
    \item The global function and the partition function are
    \begin{align*}
        g(x_{1,1},\ldots,x_{n,n})
        &\defeq \Big(\prod_{i\in[n]} f_{\mathrm{L},i}\Big)\cdot \Big(\prod_{j\in[n]} f_{\mathrm{R},j}\Big),\\
        Z(\graphN) &\defeq \sum_{x_{1,1},\ldots,x_{n,n}} g(x_{1,1},\ldots,x_{n,n}).
    \end{align*}
\end{itemize}

One checks that $g(x)=\prod_{i\in[n]} a_{i,\sigma(i)}$ if and only if the $1$-entries $\{x_{i,\sigma(i)}\}_{i\in[n]}$ form a permutation pattern, and $g(x)=0$ otherwise. Therefore
\begin{align}
    Z\bigl(\graphN(\matr{A})\bigr)=\perm(\matr{A}).
\end{align}

\subsection{Graph Cover Characterization of the Bethe Approximation}
For NFGs with non-negative local functions, the Bethe approximation $Z_{\Bethe}(\graphN)$ admits the graph-cover characterization \cite{Ng:Vontobel:22:2}
\begin{align*}
    Z_{\Bethe}(\graphN)
    & = \limsup_{M\to\infty} Z_{\Bethe,M}(\graphN), \\
    Z_{\Bethe,M}(\graphN)
    & \defeq \sqrt[M]{\big\langle Z(\graph{\tilde{N}}) \big\rangle_{\graph{\tilde{N}}\in\setset{\tilde{N}}_M}},
\end{align*}
where the average is over all $M$-covers $\graph{\tilde{N}}$ of $\graphN$, $M\ge1$. Specializing to matrices yields the (degree-$M$) Bethe permanent
\begin{align*}
    \perm_{\Bethe,M}(\matr{A})
    &\defeq
    \sqrt[M]{\Big\langle \! \perm(\matr{A}^{\! \uparrow \matr{\tilde{P}}}) \! \Big\rangle_{\matr{\tilde{P}} \in \tilde{\Phi}_M}},
\end{align*}
with
\begin{align*}
    \matr{A}^{\! \uparrow \matr{\tilde{P}}}
    & \defeq
    \begin{pmatrix}
        a_{1,1} \matr{\tilde{P}}^{(1,1)} & \cdots & a_{1,n} \matr{\tilde{P}}^{(1,n)} \\
        \vdots & & \vdots \\
        a_{n,1} \matr{\tilde{P}}^{(n,1)} & \cdots & a_{n,n} \matr{\tilde{P}}^{(n,n)}
    \end{pmatrix}, \\
    \tilde{\Phi}_M
    &\defeq
    \left\{
        \matr{\tilde{P}} 
        = \big\{ \matr{\tilde{P}}^{(i,j)} \big\}_{i \in [n], j \in [n]} \ \middle| \ \matr{\tilde{P}}^{(i,j)} \in \setset{P}_{M \times M}
    \right\},
\end{align*}
where $\setset{P}_{M \times M}$ denotes the set of $M\times M$ permutation matrices. Note that $\matr{A}^{\! \uparrow \matr{\tilde{P}}}$ has size $(Mn)\times(Mn)$.

\subsection{The explicit double cover $\hat\graphN(\matr{A})$ and its partition function}

For the specific case of $M=2$, there exists an explicit NFG $\hat\graphN(\matr{A})$ defined on the same bipartite graph structure as $\graphN(\matr{A})$ shown in Fig.~\ref{fig:ffg:permanent:1}~(right). The construction of this graph relies on a specific change of basis detailed in~\cite{Ng:Vontobel:22:2}.

Conceptually, the degree-$2$ Bethe permanent involves averaging over graph covers where edge variables represent $2 \!\times\! 2$ permutation matrices, allowing for parallel (no-cross) or crossed connections. The construction in~\cite{Ng:Vontobel:22:2} applies an invertible transformation to the local functions, mapping the state space of these coupled variables into a new basis. A key property of this transformation is that, due to the constraints enforcing perfect matchings, the local functions evaluate to zero for the value $(1,0)$. Consequently, cycles whose associated variables take the value $(1,0)$ effectively carry zero weight and do not appear in any valid configuration.

Formally, this results in an NFG $\hat\graphN(\matr{A})$ whose partition function satisfies~\cite[Proposition~1]{Ng:Vontobel:22:2}
\begin{align}
    \perm_{\Bethe,2}(\matr{A}) & = \sqrt{Z\bigl( \hat \graphN(\matr{A}) \bigr)}.
    \label{app:eq:double:cover:app:1}
\end{align}
The edge variables are defined as pairs
\begin{align*}
    \hat{x}_{i,j} \in \hat{\setset{X}} \defeq \setset{X} \times \setset{X} = \{(0,0), (0,1), (1,0), (1,1)\},
\end{align*}
where $\setset{X} \defeq \{0,1\}$. Based on the cancellation of the $(1,0)$ terms described above, the local functions $\hat{f}_{\mathrm{L},i}$ and $\hat{f}_{\mathrm{R},j}$ are given explicitly as follows:
\begin{itemize}
    \item For $i\in[n]$, define $\hat{f}_{\mathrm{L},i}(\hat{x}_{i,1},\ldots,\hat{x}_{i,n})$ by
    \begin{align*}
        \hat{f}_{\mathrm{L},i} \defeq
        \begin{cases}
            a_{i,j}
            & \begin{array}{c}
                \text{$\exists j \in [n]$ s.t.  $\hat{x}_{i,j} = (1,1)$;} \\
                \text{$\hat{x}_{i,j'} = (0,0)$, $\forall j' \in [n] \setminus \{ j \}$}
            \end{array} \\[0.50cm]
            \sqrt{a_{i,j} a_{i,j'}}
            & \begin{array}{c}
                \text{$\exists j, j' \in [n]$, $j \neq j'$, s.t.} \\
                \text{$\hat{x}_{i,j} = (0,1)$, $\hat{x}_{i,j'} = (0,1)$;} \\
                \text{$\hat{x}_{i,j''} = (0,0)$, $\forall j'' \in [n] \setminus \{ j, j' \}$}
            \end{array} \\[0.75cm]
            0 & \text{otherwise}
        \end{cases}
    \end{align*}
    
    \item For $j\in[n]$, define $\hat{f}_{\mathrm{R},j}(\hat{x}_{1,j},\ldots,\hat{x}_{n,j})$ by
    \begin{align*}
        \hat{f}_{\mathrm{R},j} \defeq
        \begin{cases}
            a_{i,j}
            & \begin{array}{c}
                \text{$\exists i \in [n]$ s.t. $\hat{x}_{i,j} = (1,1)$;} \\
                \text{$\hat{x}_{i',j} = (0,0)$, $\forall i' \in [n] \setminus \{ i \}$}
            \end{array} \\[0.50cm]
            \sqrt{a_{i,j} a_{i',j}}
            & \begin{array}{c}
                \text{$\exists i, i' \in [n]$, $i \neq i'$, s.t.} \\
                \text{$\hat{x}_{i,j} = (0,1)$, $\hat{x}_{i',j} = (0,1)$;} \\
                \text{$\hat{x}_{i'',j} = (0,0)$, $\forall i'' \in [n] \setminus \{ i, i' \}$}
            \end{array} \\[0.75cm]
            0 & \text{otherwise}
        \end{cases}
    \end{align*}

    \item The global function and partition function are
    \begin{align*}
        \hat{g}(\hat{x}_{1,1},\ldots,\hat{x}_{n,n})
        &\defeq \Big(\prod_{i\in[n]} \hat{f}_{\mathrm{L},i}\Big)\cdot \Big(\prod_{j\in[n]} \hat{f}_{\mathrm{R},j}\Big),\\
        Z(\hat\graphN) &\defeq \sum_{\hat{x}_{1,1},\ldots,\hat{x}_{n,n}} \hat{g}(\hat{x}_{1,1},\ldots,\hat{x}_{n,n}).
    \end{align*}
\end{itemize}

\subsection{Combinatorial Interpretation via the Cycle Penalty}

To understand the difference between $\bigl( \perm(\matr{A}) \bigr)^2$ and $\bigl( \perm_{\Bethe,2}(\matr{A}) \bigr)^2$, we compare the partition sum of the restricted NFG defined above with that of the squared permanent $\bigl( \perm(\matr{A}) \bigr)^2$.

In the specific NFG $\hat\graphN(\matr{A})$, the local functions defined above explicitly forbid the value $(1,0)$. Consequently, for every cycle formed by the underlying permutations $\sigma_1$ and $\sigma_2$, only the realization using $(0,1)$ edges contributes to the partition sum.

In contrast, similar to the construction of $\hat\graphN(\matr{A})$, an NFG can be constructed such that its partition sum equals $\bigl( \perm(\matr{A}) \bigr)^2$. Crucially, the local functions of this NFG are defined such that valid configurations include simple cycles whose associated variables take the value $(0,1)$, as well as those taking the value $(1,0)$ that are explicitly forbidden in $\hat\graphN(\matr{A})$. Consequently, in this unconstrained case, a cycle of length $h \ge 2$ can be realized by the edge variables in two ways: either via edges taking the value $(1,0)$ or via edges taking the value $(0,1)$. Both realizations contribute equally to the partition sum.

Comparing the two cases, we see that for every cycle of length $h \ge 2$, the partition sum of $\hat\graphN(\matr{A})$ captures only one of the two possible realizations available in the NFG for the squared permanent. This effectively introduces a penalty factor of $1/2$ per cycle. Summing over all permutation pairs yields the identity~\cite{Ng:Vontobel:22:2} as
\begin{align}
    \frac{\perm_{\Bethe,2}(\matr{A})}{\perm(\matr{A})}
    \;=\;
    \sqrt{\; \sum_{\sigma_1,\sigma_2\in\Sn} p(\sigma_1)\,p(\sigma_2)\,2^{-c(\sigma_1\circ\sigma_2^{-1})}\;},
    \label{app:eq:pair-perm-identity-app}
\end{align}
where $p(\sigma)\defeq (\prod_{i} a_{i,\sigma(i)})/\perm(\matr{A})$ and $c(\sigma_1\circ\sigma_2^{-1})$ counts the number of cycles of length $>1$ in the cycle decomposition of $(\sigma_1\circ\sigma_2^{-1})$. For full technical details, see~\cite{Ng:Vontobel:22:2}.

\section{Factor-Graph View of the Trace Kernel $\matr{S}$}
\label{app:fgtrace}

This section provides a short NFG-based view behind the symmetric state-transition matrix $\matr{S}$ in Definition~\ref{def:S-general}. To this end, we first introduce an asymmetric weighted transition matrix $\matr{W}$ to capture the fundamental walk operations. The goal is to justify (i) why the indeterminates $\bm{t},\bm{u}$ act as counting indeterminates in the generating functions, and (ii) why the relevant cycle weights collapse to trace powers $\trace(\matr{W}^h)$ (equivalently $\trace(\matr{S}^h)$).

\subsection{Counting indeterminates, partitioning selection, and the weighted transition}

Under Assumption~\ref{assump:general_block}, the $n\times n$ matrix $\matr{A}=\matr{A}(\matr{B},\bm{k},\bm{\ell})$ is expanded from the matrix $\matr{B}=(b_{i,j})\in\mathbb{R}_{>0}^{m\times m}$ with fixed multiplicities $\bm{k},\bm{\ell}\in\mathbb{Z}_{>0}^m$. To select a prescribed partitioning in a symmetry-reduced enumeration, we attach indeterminates $\bm{t}=(t_1;\dots;t_m)$ to row types and $\bm{u}=(u_1;\dots;u_m)$ to column types. Each use of a row (column) of type $i$ ($j$) contributes a multiplicative factor $t_i$ ($u_j$) to the total weight of the walk. Since the weight of a walk is the product of its step weights, visiting type $i$ exactly $k_i$ times results in the accumulation of the factor $t_i$ to the power of $k_i$. Consequently, configurations associated with the partitioning $(\bm{k};\bm{\ell})$ naturally contribute the monomial $\bm{t}^{\bm{k}}\bm{u}^{\bm{\ell}}$, allowing the partitioning constraints to be enforced via the coefficient extraction $[\bm{t}^{\bm{k}}\bm{u}^{\bm{\ell}}]$.

For the cycle-index enumeration, we need the aggregated weight of length-$h$ closed walks on the type graph. The basic local pattern is two consecutive columns coupled through the same row. A step from a column type $j$ to a column type $j'$ passes through an intermediate row type $i$ and contributes the product of the two elemental entries on that row, $b_{i,j}\,b_{i,j'}$, together with the counting markers $t_i$ and $u_{j'}$. Thus the step weight is $b_{i,j}\,t_i\,b_{i,j'}\,u_{j'}$. Summing (marginalizing) over the intermediate row type yields the one-step kernel $W_{jj'} \defeq \sum_{i\in[m]} b_{i,j}\,t_i\,b_{i,j'}\,u_{j'}$. This motivates the formal definition of the fundamental matrix governing the walk transitions.

\begin{definition}
\label{app:def:W-operator}
The $m \times m$ weighted transition matrix $\matr{W}$ is defined by
\begin{align}
    \matr{W} \defeq \matr{B}^\transp \cdot \diag(\bm{t})\cdot \matr{B} \cdot\diag(\bm{u}).
    \label{app:eq:Wdef}
\end{align}
\end{definition}

Summing over all closed type sequences $(j_1,\dots,j_h)$ with $j_{h+1}\equiv j_1$ gives the standard trace identity
\begin{align*}
    \sum_{j_1,\dots,j_h} W_{j_1j_2}\cdots W_{j_hj_1} \;=\; \trace(\matr{W}^h),
\end{align*}
so $\trace(\matr{W}^h)$ packages the total weight of all length-$h$ cycles, while $\bm{t},\bm{u}$ record the type counts.

For $m\!=\!2$, this viewpoint can be visualized as a trellis over the column-type state space. The structure of this trellis is illustrated in Fig.~\ref{fig:trellisgraph_t6_s2}. It consists of two states at each of the $T+1=7$ time steps ($\tau=0$ to $6$), with all transitions between consecutive time steps allowed. Each edge from state $j$ to $j'$ carries weight $W_{jj'}$. Consequently, summing the weights of all closed length-$T$ paths in the trellis is exactly $\trace(\matr{W}^T)$.

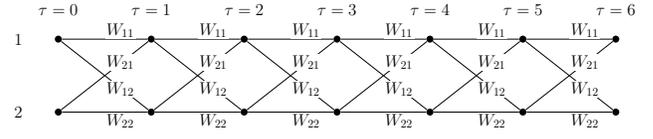
\begin{figure}[t]
    \centering
    \resizebox{0.95\linewidth}{!}{\begin{tikzpicture}[
    state/.style={circle, fill=black, inner sep=0pt, minimum size=6pt},
    line/.style={thick},
    labelstyle/.style={midway, fill=white, font=\Large, inner sep=1pt}
]

\def\T{7}
\def\Xgap{2.8}
\def\Ygap{2.2}

\foreach \t in {0,...,6} {
    \node[state] (a\t) at (\t*\Xgap, 0) {};
    \node[state] (b\t) at (\t*\Xgap, -\Ygap) {};
    \node at (\t*\Xgap, 0.9) {\Large $\tau=\t$};
}

\node at (-1.2, 0) {\Large $\mathrm{1}$};
\node at (-1.2, -\Ygap) {\Large $\mathrm{2}$};

\foreach \t in {0,...,5} {
    \pgfmathtruncatemacro{\next}{\t + 1}
    \draw[line] (a\t) -- node[labelstyle, above right] {$W_{11}$} (a\next);
    \draw[line] (a\t) -- node[labelstyle, below right, yshift=-4pt] {$W_{12}$} (b\next);
    \draw[line] (b\t) -- node[labelstyle, above right, yshift=4pt] {$W_{21}$} (a\next);
    \draw[line] (b\t) -- node[labelstyle, below right] {$W_{22}$} (b\next);
}
\end{tikzpicture}}
    \caption{Trellis graph with two states over $7$ time steps.}
    \label{fig:trellisgraph_t6_s2}
    \vspace{-0.5em}
\end{figure}

\subsection{Symmetrization and the NFG kernel factorization}
The weighted transition matrix $\matr{W}$ in \eqref{app:eq:Wdef} is a convenient walk kernel, but it is generally not symmetric because the column-type marker $u_{j'}$ is attached to the arrival type. For the spectral manipulations used in the main text, it is preferable to pass to a real symmetric kernel without changing trace powers $\trace(\matr{W}^h)$.

\begin{figure*}[t]
    \centering
    \begin{subfigure}{0.95\textwidth}
        \centering
        \resizebox{\textwidth}{!}{\begin{tikzpicture}[
    node/.style={draw, rectangle, minimum size=0.7cm, fill=white},
    dot/.style={circle, fill=black, minimum size=2pt, inner sep=0pt},
    bgbox/.style={draw=red, dashed, line width=1pt, fill=magenta!12, rounded corners=2pt},
    >=latex
]

\pgfdeclarelayer{bg}
\pgfsetlayers{bg,main}

\def\xsep{1.8}

\foreach \i in {0,1,2} {
    \pgfmathsetmacro{\xe}{2*\i*\xsep}
    \pgfmathsetmacro{\xf}{(2*\i+1)*\xsep}
    \pgfmathtruncatemacro{\label}{\i + 1}

    \node[node] (e\i) at (\xe, 0) {$=$};
    \node[node] (f\i) at (\xf, 0) {$f$};
    \draw (e\i) -- (f\i);
    \ifnum\i>0
        \pgfmathtruncatemacro{\iprev}{\i - 1}
        \draw (f\iprev) -- (e\i);
    \fi

    \node[node] (ps\i) at ($(e\i) + (0, -1.5)$) {};
    \node (pslabel\i) [below=0.2cm of ps\i] {\large $u_{j_{\label}}$};
    \draw (e\i) -- (ps\i);

    \node[node] (pq\i) at ($(f\i) + (0, 1.5)$) {};
    \node[above=0.02cm of pq\i] {\large $t_{i_{\label}}$};
    \draw (pq\i) -- (f\i);

    \begin{pgfonlayer}{bg}
        \node[bgbox, fit=(e\i)(ps\i), inner sep=6pt] {};
    \end{pgfonlayer}
}

\pgfmathsetmacro{\xeFour}{6*\xsep}
\node[node] (e3) at (\xeFour, 0) {$=$};
\draw (f2) -- (e3);
\node[node] (ps3) at ($(e3) + (0, -1.5)$) {};
\node (pslabel3) [below=0.2cm of ps3] {\large $u_{j_4}$};
\draw (e3) -- (ps3);

\begin{pgfonlayer}{bg}
    \node[bgbox, fit=(e3)(ps3), inner sep=4pt] {};
\end{pgfonlayer}

\coordinate (afterE3) at ($(e3.east) + (0.8, 0)$);
\draw (e3) -- (afterE3);

\coordinate (dotsLeft) at ($(afterE3) + (0, 0)$);
\coordinate (dotsRight) at ($(dotsLeft) + (1.8, 0)$);
\node[dot] at ($(dotsLeft)!0.3!(dotsRight)$) {};
\node[dot] at ($(dotsLeft)!0.5!(dotsRight)$) {};
\node[dot] at ($(dotsLeft)!0.7!(dotsRight)$) {};
\draw (dotsRight) -- ++(0.8, 0);

\coordinate (beforeEkm1) at ($(dotsRight) + (0.8, 0)$);
\node[node] (ekm1) at ($(beforeEkm1) + (0.35, 0)$) {$=$};
\node[node] (pskm1) at ($(ekm1) + (0, -1.5)$) {};
\node (pslabkm1) [below=0.2cm of pskm1] {\large $u_{j_{h-1}}$};
\draw (ekm1) -- (pskm1);

\begin{pgfonlayer}{bg}
    \node[bgbox, fit=(ekm1)(pskm1), inner sep=4pt] {};
\end{pgfonlayer}

\node[node] (fkm1) at ($(ekm1) + (\xsep, 0)$) {$f$};
\draw (ekm1) -- (fkm1);
\node[node] (pqkm1) at ($(fkm1) + (0,1.5)$) {};
\node[above=0.02cm of pqkm1] {\large $t_{i_{h-1}}$};
\draw (pqkm1) -- (fkm1);

\node[node] (ek) at ($(fkm1) + (\xsep, 0)$) {$=$};
\node[node] (psk) at ($(ek) + (0, -1.5)$) {};
\node (pslabk) [below=0.2cm of psk] {\large $u_{j_h}$};
\draw (fkm1) -- (ek);
\draw (ek) -- (psk);

\begin{pgfonlayer}{bg}
    \node[bgbox, fit=(ek)(psk), inner sep=4pt] {};
\end{pgfonlayer}

\node[node] (fk) at ($(ek) + (\xsep, 0)$) {$f$};
\draw (ek) -- (fk);
\node[node] (pqk) at ($(fk) + (0, 1.5)$) {};
\node[above=0.02cm of pqk] {\large $t_{i_h}$};
\draw (pqk) -- (fk);

\draw
    (fk.east) -- ++(0.8,0) -- ++(0,2.8) -- ++(-22.6,0) -- ++(0,-2.8) -- (e0.west);

\end{tikzpicture}}
        \caption{NFG whose partition sum aggregates length-$h$ closed walks on the type graph. The $\nu$-th visited column or row contributes markers $t_{i_\nu}$ and $u_{j_\nu}$ with $i_\nu,j_\nu\in\{1,\dots,m\}$. The factor node $f$ encodes the local elemental weight $b_{i_\nu j_\nu}\,t_{i_\nu}\,b_{i_\nu j_{\nu+1}}$ along one step. Red Open The Box (OTB) regions split each $u_{j_\nu}$ into two adjacent factors $\sqrt{u_{j_\nu}}$.}

        \label{fig:nfg_TransitionMatrix_Z_OTB_P_S-app}
    \end{subfigure}
    \vspace{1em}

    \begin{subfigure}{0.95\textwidth}
        \centering
        \resizebox{\textwidth}{!}{\begin{tikzpicture}[
  node/.style={draw, rectangle, minimum size=0.7cm, fill=white},
  dot/.style={circle, fill=black, minimum size=2pt, inner sep=0pt},
  bgblue/.style={
    draw=blue, dashed, line width=1pt, fill=blue!15, rounded corners=2pt,
    inner xsep=2pt, inner ysep=2pt
  },
  >=latex
]

\pgfdeclarelayer{bg}
\pgfsetlayers{bg,main}

\def\xsep{1.8}
\def\xhalfsep{1.4}

\foreach \i in {0,1,2} {
  \pgfmathsetmacro{\xel}{3*\i*\xhalfsep}
  \pgfmathsetmacro{\xer}{(2+3*\i)*\xhalfsep}
  \pgfmathsetmacro{\xf}{(3*\i+1)*\xhalfsep}
  \pgfmathtruncatemacro{\label}{\i + 1}

  \node[node] (el\i) at (\xel, 0) {$=$};
  \node[node] (f\i)  at (\xf, 0)  {$f$};
  \node[node] (er\i) at (\xer, 0) {$=$};
  
  \draw (el\i) -- (f\i);
  \draw (f\i) -- (er\i);
  \ifnum\i>0
    \pgfmathtruncatemacro{\iprev}{\i - 1}
    \draw (er\iprev) -- (el\i);
  \fi

  \node[node] (psl\i) at ($(el\i) + (0, -1.5)$) {};
  \node[node] (psr\i) at ($(er\i) + (0, -1.5)$) {};
  \node (psllabel\i) [below=0.2cm of psl\i, xshift = 0.1cm] {\Large $\sqrt{u_{j_{\label}}}$};
  \pgfmathtruncatemacro{\nextlabel}{\label + 1}
  \node (psrlabel\i) [below=0.2cm of psr\i, xshift = -0.1cm] {\Large $\sqrt{u_{j_{\nextlabel}}}$};

  \draw (el\i) -- (psl\i);
  \draw (er\i) -- (psr\i);

  \node[node] (pq\i) at ($(f\i) + (0, 1.5)$) {};
  \node (pqlabel\i) [above=0.1cm of pq\i] {\Large $t_{i_{\label}}$};
  \draw (pq\i) -- (f\i);

  \pgfmathtruncatemacro{\slabel}{\i + 1}
  \begin{pgfonlayer}{bg}
    \node[bgblue,
      fit=(el\i)(f\i)(er\i)(psl\i)(psr\i)(pq\i)(psllabel\i)(psrlabel\i)(pqlabel\i)
    ] (group\i) {};
  \end{pgfonlayer}
  \node[text=blue, below=4pt of group\i.south] {\Large $S_{j_{\label},j_{\nextlabel}}$};
}

\pgfmathsetmacro{\xeFour}{9*\xhalfsep}
\node[node] (e3) at (\xeFour, 0) {$=$};
\draw (er2) -- (e3);
\node[node] (ps3) at ($(e3) + (0, -1.5)$) {};
\node (pslabel3) [below=0.2cm of ps3] {\Large $\sqrt{u_{j_4}}$};
\draw (e3) -- (ps3);

\coordinate (afterE3) at ($(e3.east) + (0.8, 0)$);
\draw (e3) -- (afterE3);

\coordinate (dotsLeft) at ($(afterE3) + (0, 0)$);
\coordinate (dotsRight) at ($(dotsLeft) + (1.8, 0)$);
\node[dot] at ($(dotsLeft)!0.3!(dotsRight)$) {};
\node[dot] at ($(dotsLeft)!0.5!(dotsRight)$) {};
\node[dot] at ($(dotsLeft)!0.7!(dotsRight)$) {};
\draw (dotsRight) -- ++(0.8, 0);

\node[node] (erkm2) at ($(dotsRight) + (1.15, 0)$) {$=$};
\node[node] (psrkm2) at ($(erkm2) + (0, -1.5)$) {};
\node (pslaberkm2) [below=0.2cm of psrkm2, xshift = -0.2cm] {\Large $\sqrt{u_{j_{h-1}}}$};
\draw (erkm2) -- (psrkm2);

\node[node] (elkm1) at ($(erkm2) + (\xhalfsep, 0)$) {$=$};
\draw (erkm2) -- (elkm1);
\node[node] (pslkm1) at ($(elkm1) + (0, -1.5)$) {};
\node (pslabkm1) [below=0.2cm of pslkm1, xshift = 0.2cm] {\Large $\sqrt{u_{j_{h-1}}}$};
\draw (elkm1) -- (pslkm1);

\node[node] (fkm1) at ($(elkm1) + (\xhalfsep, 0)$) {$f$};
\draw (elkm1) -- (fkm1);
\node[node] (pqkm1) at ($(fkm1) + (0,1.5)$) {};
\node (pqkm1label) [above=0.1cm of pqkm1] {\Large $t_{i_{h-1}}$};
\draw (pqkm1) -- (fkm1);

\node[node] (erkm1) at ($(fkm1) + (\xhalfsep, 0)$) {$=$};
\node[node] (psrkm1) at ($(erkm1) + (0, -1.5)$) {};
\node (psrabkm1) [below=0.2cm of psrkm1, xshift = -0.1cm] {\Large $\sqrt{u_{j_h}}$};
\draw (fkm1) -- (erkm1);
\draw (erkm1) -- (psrkm1);

\begin{pgfonlayer}{bg}
  \node[bgblue,
    fit=(elkm1)(fkm1)(erkm1)(pslkm1)(psrkm1)(pqkm1label)(pslabkm1)(psrabkm1)(pqkm1label)
  ] (groupkm1) {};
\end{pgfonlayer}

\node[text=blue, below=4pt of groupkm1.south] {\Large $S_{j_{h-1},j_{h}}$};

\node[node] (elk) at ($(erkm1) + (\xhalfsep, 0)$) {$=$};
\node[node] (pslk) at ($(elk) + (0, -1.5)$) {};
\node (pslabk) [below=0.2cm of pslk, xshift = 0.1cm] {\Large $\sqrt{u_{j_h}}$};
\draw (erkm1) -- (elk);
\draw (elk) -- (pslk);

\node[node] (fk) at ($(elk) + (\xhalfsep, 0)$) {$f$};
\draw (elk) -- (fk);
\node[node] (pqk) at ($(fk) + (0, 1.5)$) {};
\node (pqklabel) [above=0.1cm of pqk] {\Large $t_{i_h}$};
\draw (pqk) -- (fk);

\node[node] (erk) at ($(fk) + (\xhalfsep, 0)$) {$=$};
\node[node] (psrk) at ($(erk) + (0, -1.5)$) {};
\node (psrabk) [below=0.2cm of psrk, xshift = -0.2cm] {\Large $\sqrt{u_{j_1}}$};
\draw (fk) -- (erk);
\draw (erk) -- (psrk);

\begin{pgfonlayer}{bg}
    \coordinate (groupk_rpad) at ($(erk.east)+(4pt,0)$);
    \node[bgblue,
        fit=(elk)(fk)(erk)(pslk)(psrk)(pqk)(pslabk)(psrabk)(pqklabel)(groupk_rpad)
    ] (groupk) {};
\end{pgfonlayer}
\node[text=blue, below=4pt of groupk.south] {\Large $S_{j_{h},j_{1}}$};

\draw
  (erk.east) -- ++(0.5,0) -- ++(0,3.2) -- ++(-26.8,0) -- ++(0,-3.2) -- (el0.west);

\end{tikzpicture}}
        \caption{After CTB (blue), each grouped box becomes a symmetric kernel entry $S_{j_\nu j_{\nu+1}}$, and the length-$h$ cycle evaluates to $\trace(\matr{S}^h)$ (equivalently $\trace(\matr{W}^h)$).}

        \label{fig:nfg_TransitionMatrix_Z_CTB_sqrtP_S-app}
    \end{subfigure}
    \caption{OTB/CTB factor-graph derivation of the kernel factorization $\matr{W}\mapsto \matr{S}$ and the trace representation of length-$h$ closed walks.}
    \label{fig:nfg_TransitionMatrix_all-app}
\end{figure*}
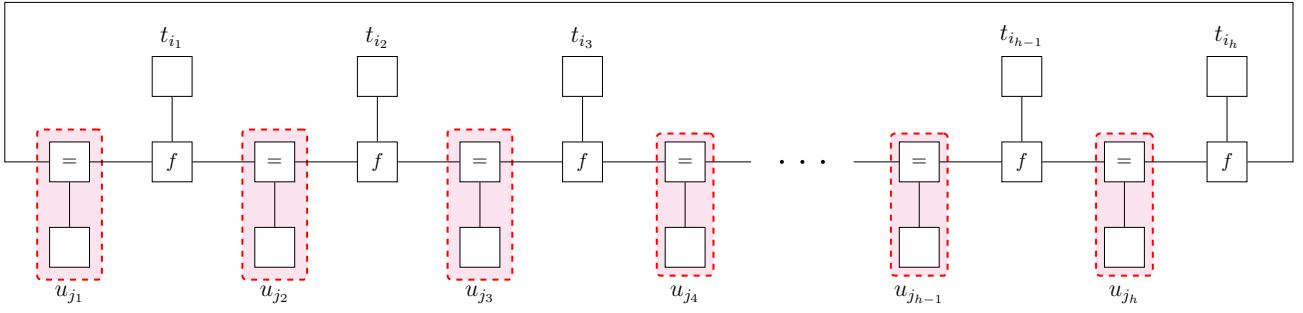
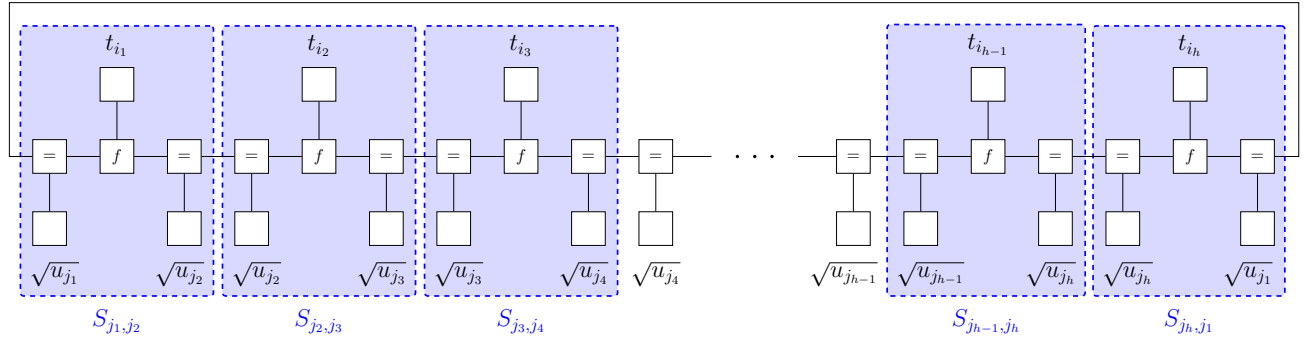

This is done by splitting each column-type marker across the two half-edges incident to a step, i.e., replacing $u_{j'}$ by $\sqrt{u_j}\sqrt{u_{j'}}$ at the level of a transition $j\to j'$. Equivalently, we apply the diagonal similarity transform to obtain the symmetric state-transition matrix $\matr{S}(\bm{t},\bm{u})$ introduced in Definition~\ref{def:S-general}, yielding
\begin{align}
    \matr{S}(\bm{t},\bm{u})
    & \;\defeq\;
    \diag(\bm{u})^{1/2}\cdot\matr{W}\cdot\diag(\bm{u})^{-1/2}\nonumber\\
    & \;=\;
    \diag(\bm{u})^{1/2}\cdot\matr{B}^\transp \cdot\diag(\bm{t})\cdot\matr{B}\cdot\diag(\bm{u})^{1/2}.
    \tag{\ref{eq:Sdef}}
\end{align}
Since $\matr{S}$ and $\matr{W}$ are similar, they preserve all trace powers,
\begin{align}
    \trace(\matr{S}^h)=\trace(\matr{W}^h),\qquad h\ge 1.
\label{app:eq:traceeq}
\end{align}
Moreover, writing $\bm{a}_j$ for the $j$-th column of $\matr{B}$, \eqref{eq:Sdef} yields the Gram form
\begin{align*}
    S_{j,j'}
    =
    \sqrt{u_j u_{j'}}\,\langle \bm{a}_j,\bm{a}_{j'}\rangle_{\bm{t}},
    \qquad
    \langle \bm{x},\bm{y}\rangle_{\bm{t}}\defeq \bm{x}^\transp \diag(\bm{t})\,\bm{y},
\end{align*}
so whenever $\bm{t},\bm{u}\in\mathbb{R}_{>0}^m$ (as holds in Proposition~\ref{prop:block_asymp}), the matrix $\matr{S}(\bm{t},\bm{u})$ is real symmetric and positive semidefinite.

Figure~\ref{fig:nfg_TransitionMatrix_all-app} illustrates this procedure in NFG language. One first applies Open The Box (OTB) to the nodes representing the column indeterminates (red regions). By factorizing the column-type weight $u_j$ symmetrically into $\sqrt{u_j}$ on the adjacent half-edges, we prepare the edges to carry the combined interaction weights. The factor node $f$ represents the local row-mediated interaction: for a fixed intermediate row type $i$, it contributes the elemental weight $b_{i,j}\,t_i\,b_{i,j'}$ that couples two consecutive column types $j\to j'$ through that row type. One then applies Close The Box (CTB) to regroup the components (blue regions). Upon this regrouping, each resulting box abstracts exactly to one entry of the symmetric kernel $\matr{S}$ in~\eqref{eq:Sdef}. Consequently, a length-$h$ cycle becomes a closed sequence of $h$ such kernel boxes, whose total contribution is $\trace(\matr{S}^h)$ as in~\eqref{app:eq:traceeq}.

\section{Proof of Proposition~\ref{prop:block_asymp}}
\label{app:acsv_derivation}

This section provides the derivation of the asymptotic partition sums using analytic combinatorics in several variables (ACSV)~\cite{pemantle2024analytic}.

First, recall the relevant setup and definitions. The generating functions for the partition sums, derived in Lemma~\ref{lem:block_trace_egf}, take the explicit forms as, respectively,
\begin{align}
    \CGn(\bm{t},\!\bm{u})
    & = \frac{1}{\det\bigl(\matr{I}\!-\!\matr{S}(\bm{t},\!\bm{u})\bigr)}, \tag{\ref{eq:CGCB-det:1}} \\
    \CDBn(\bm{t},\!\bm{u})
    & = \frac{\exp\Bigl(\tfrac{1}{2}\trace\bigl(\matr{S}(\bm{t},\!\bm{u})\bigr)\Bigr)}{\sqrt{\det\bigl(\matr{I}\!-\!\matr{S}(\bm{t},\!\bm{u})\bigr)}}. \tag{\ref{eq:CGCB-det:2}}
\end{align}
Here, $\bm{z} \defeq (\bm{t}; \bm{u})$ denotes the collection of $2m$ indeterminates.

As established in Lemma~\ref{lem:bridge-ogf-nfg}, our primary quantities of interest correspond to the coefficients of these functions associated with the partitioning $\bm{r} \defeq (\bm{k}; \bm{\ell})$ given by, respectively,
\begin{align}
    \ZGn(\bm{k}, \bm{\ell}) & = [\bm{z}^{\bm{r}}] \, \CGn(\bm{z}), \\
    \ZDBn(\bm{k}, \bm{\ell}) & = [\bm{z}^{\bm{r}}] \, \CDBn(\bm{z}).
\end{align}

To treat both cases within a unified framework, we adopt the standard ACSV form in~\cite{pemantle2024analytic}, i.e.,
\begin{align}
    F(\bm{z}) = P_0(\bm{z})\,Q_0(\bm{z})^{-\alpha}.
\end{align}
Here $\alpha$ distinguishes the Gibbs ($\alpha \!=\! 1$) and Bethe ($\alpha \!=\! 1/2$) cases. In the following, we will refine this generic form by an equivalent local parametrization near the dominant singularity, namely, we will replace the denominator $Q_0$ by a more convenient local defining function $Q$ for the same singular hypersurface and absorb the remaining non-vanishing factors into the analytic prefactor. This yields the local forms used for coefficient asymptotics, with $\alpha\!=\!1$ and $\alpha\!=\!1/2$ treated in parallel (see Lemma~\ref{app:lem:acsv_factorization}).

\begin{definition}
\label{app:def:singular_variety}
Define the singular function
\begin{align}
    Q_0(\bm{z}) \;\defeq\; \det\bigl(\matr{I}\!-\!\matr{S}(\bm{z})\bigr),
\end{align}
and the associated singular variety (i.e., the zero set of the denominator shared by $\CGn$ and $\CDBn$) as
\begin{align}
    \mathcal{V} \;\defeq\; \bigl\{\bm{z}\in\mathbb{C}^{2m}\mid Q_0(\bm{z})=0\bigr\}.
\end{align}
\end{definition}

\subsection{Spectral Analysis and Singular Decomposition}

We first establish the existence of a minimal singularity in the positive orthant and then analyze its local geometry to justify the use of smooth critical-point analysis.

\begin{lemma}
\label{app:lem:vp_minimal}
Let $F(\bm{z})$ denote either $\CGn(\bm{z})$ in~\eqref{eq:CGCB-det:1} or $\CDBn(\bm{z})$ in~\eqref{eq:CGCB-det:2}. Then there exists a $\bm{w}\in \mathcal{V}$ whose moduli vector $(|w_1|;\ldots;|w_{2m}|)$ is minimal among all points in $\mathcal{V}$. Moreover, such a minimal point may be taken in the positive real orthant, i.e., $\bm{w}\in\mathcal{V}\cap\mathbb{R}_{>0}^{2m}$.
\end{lemma}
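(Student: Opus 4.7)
The plan is to exploit the Perron--Frobenius structure of the walk kernel on the positive real orthant. First, I would pass from $\matr{S}$ to its similar matrix $\matr{W}$: although $\matr{S}$ formally contains square-root factors $\sqrt{u_j}$, the similarity gives $\det(\matr{I}-\matr{S}(\bm{z}))=\det(\matr{I}-\matr{W}(\bm{z}))$, and $\matr{W}=\matr{B}^\transp\diag(\bm{t})\matr{B}\diag(\bm{u})$ has polynomial entries in $\bm{z}$. Hence $Q_0$ is a genuine polynomial and $\mathcal{V}$ is an algebraic hypersurface. For $\bm{z}\in\mathbb{R}_{>0}^{2m}$, the strict positivity of $\matr{B}$ makes $\matr{W}(\bm{z})$ entrywise strictly positive, and Perron--Frobenius theory yields a simple dominant real eigenvalue $\lambda_1(\bm{z})=\rho(\matr{W}(\bm{z}))>0$ that is continuous and strictly increasing in every coordinate $z_i$. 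In particular $\mathcal{V}\cap\mathbb{R}_{>0}^{2m}=\{\bm{z}\in\mathbb{R}_{>0}^{2m}:\lambda_1(\bm{z})=1\}$.

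For the existence of a positive-real variety point, I would fix any direction $\bm{r}\in\mathbb{R}_{>0}^{2m}$ and restrict to the ray $\bm{z}(s)=s\bm{r}$, $s>0$. Since each entry of $\matr{W}$ is homogeneous of total degree $2$ in $(\bm{t};\bm{u})$, we have $\lambda_1(s\bm{r})=s^2\lambda_1(\bm{r})$; the intermediate value theorem together with strict monotonicity in $s$ then yields a unique $s^*>0$ with $\lambda_1(s^*\bm{r})=1$, and I would set $\bm{w}\defeq s^*\bm{r}\in\mathcal{V}\cap\mathbb{R}_{>0}^{2m}$.

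For minimality I would use the standard spectral-radius-domination argument from analytic combinatorics. Suppose for contradiction that some $\bm{z}'\in\mathcal{V}$ satisfies $|z_i'|<w_i$ for every $i$. The triangle inequality together with $b_{ij}>0$ gives the entrywise domination $|W_{jj'}(\bm{z}')|\le W_{jj'}(|\bm{z}'|)$, and Wielandt's comparison theorem for non-negative matrices then yields $\rho(\matr{W}(\bm{z}'))\le\rho(\matr{W}(|\bm{z}'|))=\lambda_1(|\bm{z}'|)$. Strict coordinate-wise monotonicity of $\lambda_1$ on the positive orthant gives $\lambda_1(|\bm{z}'|)<\lambda_1(\bm{w})=1$. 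On the other hand, $\bm{z}'\in\mathcal{V}$ forces $1$ to be an eigenvalue of $\matr{W}(\bm{z}')$, so $\rho(\matr{W}(\bm{z}'))\ge1$, producing the contradiction $1\le\rho(\matr{W}(\bm{z}'))<1$. Because $Q_0$ is the shared denominator of $\CGn$ and $\CBn$, the same $\bm{w}$ works for both generating functions.

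The main obstacle I anticipate is justifying strict coordinate-wise monotonicity of the Perron root $\lambda_1$, rather than just the weak form. I would handle this by invoking irreducibility of $\matr{W}$ on $\mathbb{R}_{>0}^{2m}$ (immediate from $\matr{B}>0$) together with the strict Perron--Frobenius monotonicity theorem, which states that strictly increasing any entry of an irreducible non-negative matrix strictly increases its Perron root; this applies because raising any single $t_i$ or $u_j$ strictly increases a subset of the entries of $\matr{W}$. The remaining ingredients---the similarity reduction $\matr{S}\sim\matr{W}$, the homogeneous scaling on a ray, and Wielandt-type domination---are standard and should go through without further trouble.
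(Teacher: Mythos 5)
Your proof is correct, but it takes a genuinely different route from the paper. The paper's proof is a two-line appeal to the multivariate Vivanti--Pringsheim theorem: since $\CGn$ and $\CBn$ have non-negative power-series coefficients (by their combinatorial interpretation in Lemma~\ref{lem:bridge-ogf-nfg}), a minimal singularity automatically exists on the positive real orthant. You instead give a hands-on Perron--Frobenius argument: you reduce $Q_0=\det(\matr{I}-\matr{S})$ to the polynomial $\det(\matr{I}-\matr{W})$ via similarity (a worthwhile observation, since $\matr{S}$ contains square roots of the $u_j$ and is not single-valued off the positive orthant), construct a positive real point of $\mathcal{V}$ explicitly by the degree-$2$ homogeneity $\lambda_1(s\bm{r})=s^2\lambda_1(\bm{r})$ and the intermediate value theorem, and prove minimality by the triangle-inequality domination $|W_{jj'}(\bm{z}')|\le W_{jj'}(|\bm{z}'|)$ plus strict monotonicity of the Perron root. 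What your approach buys is explicitness and self-containedness: it exhibits the minimal point constructively (anticipating the later critical-point equations) and does not require invoking the non-negativity of the coefficients or a black-box theorem; the paper's approach is shorter and applies to any combinatorial generating function without exploiting the transfer-matrix structure.

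Two small points to tighten. First, your claim $\mathcal{V}\cap\mathbb{R}_{>0}^{2m}=\{\bm{z}:\lambda_1(\bm{z})=1\}$ is slightly too strong: $\det(\matr{I}-\matr{W}(\bm{z}'))=0$ only forces \emph{some} eigenvalue to equal $1$, so positive real points with $\lambda_1>1$ and $\lambda_i=1$ for some $i\ge2$ could also lie on $\mathcal{V}$; this does not affect your argument, which only uses $\rho(\matr{W}(\bm{z}'))\ge 1$ for $\bm{z}'\in\mathcal{V}$ and the equality $\lambda_1(\bm{w})=1$ at the constructed point. Second, the ACSV notion of minimality excludes points $\bm{z}'\in\mathcal{V}$ with $|z_i'|\le w_i$ for all $i$ and strict inequality in \emph{at least one} coordinate, whereas you run the contradiction only for strict inequality in every coordinate; your argument extends verbatim, since decreasing any single $t_i$ or $u_j$ strictly decreases at least one entry of the positive matrix $\matr{W}$, and strict Perron-root monotonicity for matrices dominated by an irreducible one then gives $\rho(\matr{W}(|\bm{z}'|))<1$. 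State the hypothesis in that form and the proof is complete.
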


\begin{proof}
As established in Lemma~\ref{lem:bridge-ogf-nfg}, the quantities $\ZGn(\bm{k},\bm{\ell})$ and $\ZDBn(\bm{k},\bm{\ell})$ are the coefficients of the monomial $\bm{t}^{\bm{k}}\bm{u}^{\bm{\ell}}$ in the expansions of $\CGn(\bm{z})$ and $\CDBn(\bm{z})$, respectively. In particular, these generating functions admit power series expansions about the origin with non-negative coefficients. Consequently, the multivariate Vivanti--Pringsheim theorem~\cite[Proposition~6.38]{pemantle2024analytic} yields a minimal point on the boundary of the domain of convergence whose moduli are minimal; moreover, one such minimal point lies in the positive real orthant. Hence $\bm{w}\in\mathcal{V}\cap\mathbb{R}_{>0}^{2m}$.
\end{proof}

Having established the existence of a minimal point $\bm{w}\in\mathcal{V}\cap\mathbb{R}_{>0}^{2m}$, we now characterize the local structure of $\mathcal{V}$ around $\bm{w}$.

\begin{definition}
\label{app:def:Q_pf}
Define the function
\begin{align}
    Q(\bm{z}) \;\defeq\; 1\!-\!\lambda_1(\bm{z}),
\end{align}
where $\lambda_1(\bm{z})$ is the Perron eigenvalue of $\matr{S}(\bm{z})$ for $\bm{z}\in\mathbb{R}_{>0}^{2m}$.
\end{definition}

\begin{lemma}
\label{app:lem:variety_geometry}
Let $\bm{w}\in\mathcal{V}\cap\mathbb{R}_{>0}^{2m}$ be such that $\lambda_1(\bm{w})=1$. Then, in a neighborhood of $\bm{w}$, the singular variety $\mathcal{V}$ is locally equivalent to the hypersurface $\bigl\{ \bm{z}\!\in\!\mathbb{C}^{2m}\! \bigm| Q(\bm{z}) \!=\! 0 \bigr\}$. Moreover, $\mathcal{V}$ is smooth at $\bm{w}$.
\end{lemma}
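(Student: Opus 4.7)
The plan is to reduce the study of $\mathcal{V}=\{Q_0=0\}$ near $\bm{w}$ to that of the single Perron factor $Q(\bm{z})=1-\lambda_1(\bm{z})$, by factoring $Q_0(\bm{z})=\det(\matr{I}-\matr{S}(\bm{z}))$ along the spectrum of $\matr{S}(\bm{z})$, isolating the Perron branch, and absorbing the remaining eigenvalue factors into a non-vanishing analytic unit. Smoothness of $\mathcal{V}$ at $\bm{w}$ will then follow from a Hellmann--Feynman-type computation of $\nabla\lambda_1(\bm{w})$ combined with the positivity of the Perron eigenvector.

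First, I would exploit the structure at $\bm{w}$: as noted after Definition~\ref{def:S-general}, the matrix $\matr{S}(\bm{w})$ is real symmetric and entrywise positive, so Perron--Frobenius yields a strict spectral gap $1=\lambda_1(\bm{w})>\lambda_2(\bm{w})\ge\cdots\ge\lambda_m(\bm{w})\ge 0$, together with an entrywise positive Perron eigenvector $\bm{v}$. Because the family $\bm{z}\mapsto\matr{S}(\bm{z})$ is holomorphic on $\mathbb{C}^{2m}$ and $\lambda_1(\bm{w})$ is algebraically simple, standard holomorphic perturbation theory (Kato) provides a single-valued holomorphic extension $\lambda_1(\bm{z})$ on a complex neighborhood $\Omega$ of $\bm{w}$, together with the associated analytic spectral projector.

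Second, on $\Omega$ the characteristic polynomial factors as $\det(\xi\matr{I}-\matr{S}(\bm{z}))=(\xi-\lambda_1(\bm{z}))\,r(\xi,\bm{z})$ with $r$ jointly analytic, by polynomial division against the holomorphic simple root $\lambda_1(\bm{z})$. Evaluating at $\xi=1$ yields $Q_0(\bm{z})=Q(\bm{z})\cdot r(1,\bm{z})$, and $r(1,\bm{w})=\prod_{i=2}^{m}(1-\lambda_i(\bm{w}))>0$ by the spectral gap. Shrinking $\Omega$ if needed makes $r(1,\cdot)$ a non-vanishing holomorphic unit, so $\mathcal{V}\cap\Omega=\{\bm{z}\in\Omega:Q(\bm{z})=0\}$, establishing the local equivalence. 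For smoothness it suffices to prove $\nabla Q(\bm{w})\ne 0$. The Hellmann--Feynman identity for a simple eigenvalue gives $\partial_{z_k}\lambda_1(\bm{w})=\bm{v}^\transp\bigl(\partial_{z_k}\matr{S}(\bm{w})\bigr)\bm{v}$ (left and right eigenvectors coincide since $\matr{S}(\bm{w})$ is symmetric). From $\matr{S}=\diag(\bm{u})^{1/2}\matr{B}^\transp\diag(\bm{t})\matr{B}\diag(\bm{u})^{1/2}$, each $\partial_{z_k}\matr{S}(\bm{w})$ is a non-zero matrix with non-negative entries, so the quadratic form against the strictly positive vector $\bm{v}$ is strictly positive, which yields $\nabla Q(\bm{w})\ne 0$.

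The main obstacle I anticipate is the holomorphic continuation of $\lambda_1$: for non-real $\bm{z}$ the matrix $\matr{S}(\bm{z})$ is no longer symmetric, so one cannot appeal to the real spectral theorem and must instead invoke the holomorphic perturbation theory of a simple eigenvalue under analytic matrix families, and one must carefully shrink $\Omega$ so that the spectral gap, hence simplicity, persists throughout. Once that step is secure, both the factorization $Q_0 = Q\cdot(\text{unit})$ and the gradient computation reduce to bookkeeping.
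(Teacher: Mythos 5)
Your proposal is correct and follows essentially the same route as the paper's proof: isolate the Perron factor $1-\lambda_1$ from $\det(\matr{I}-\matr{S})$ using the spectral gap so that the remaining factors form a non-vanishing unit near $\bm{w}$, and deduce smoothness from the non-vanishing of $\nabla\lambda_1(\bm{w})$ via the strict monotonicity/positivity of the Perron root in the matrix entries. Your version merely supplies more of the details the paper leaves implicit (the holomorphic continuation of the simple eigenvalue to a complex neighborhood and the explicit Hellmann--Feynman gradient computation), which strengthens rather than changes the argument.
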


\begin{proof}
For $\bm{z} \in \mathbb{R}_{>0}^{2m}$, the matrix $\matr{S}(\bm{z})$ has strictly positive entries that are strictly increasing functions of the entries of $\bm{z}$. Consequently, the Perron eigenvalue $\lambda_1(\bm{z})$ is strictly increasing along any ray in $\mathbb{R}_{>0}^{2m}$ extending from the origin.

Consider a minimal positive real singularity $\bm{w}$ whose existence is ensured by Lemma~\ref{app:lem:vp_minimal}. Since $\bm{w} \in \mathcal{V} \cap \mathbb{R}_{>0}^{2m}$, we have $\lambda_1(\bm{w}) \!=\! 1$. For $2 \!\leq\! i \!\leq\! m$, the strict dominance of the Perron eigenvalue implies that $|\lambda_i(\bm{w})| < 1$, and by continuity, the factors $1 \!-\! \lambda_i(\bm{z})$ remain non-zero in a neighborhood of $\bm{w}$. Consequently, for $\bm{z}\in\mathbb{C}^{2m}$ in a neighborhood of $\bm{w}$ we have
\begin{align}
    Q_0(\bm{z})=0 \quad \Longleftrightarrow \quad Q(\bm{z})=0,
\end{align}
hence $\mathcal{V}$ is locally given by the hypersurface $Q(\bm{z})=0$.

Furthermore, the monotonicity of the matrix entries implies that $\nabla \lambda_1(\bm{w})$ is non-vanishing (equivalently, $\nabla Q(\bm{w}) \neq \bm{0}$). By the smoothness criterion in~\cite[Lemma~7.6]{pemantle2024analytic}, this guarantees that the singular variety $\mathcal{V}$ is smooth at $\bm{w}$.
\end{proof}

Based on the local geometry established in Lemma~\ref{app:lem:variety_geometry}, we can refine the generic ACSV form $F=P_0 Q_0^{-\alpha}$ by isolating the dominant singular factor $Q(\bm{z}) = 1\!-\!\lambda_1(\bm{z})$ from the determinant.

\begin{lemma}
\label{app:lem:acsv_factorization}
Let $Q(\bm{z})\defeq 1\!-\!\lambda_1(\bm{z})$ and let $\bm{w}$ be the minimal smooth critical point. In a neighborhood of $\bm{w}$, the generating functions can be written in the ACSV form as, respectively,
\begin{align}
    \CGn(\bm{z}) & = \PGn(\bm{z})\,Q(\bm{z})^{-1},
    \\
    \CDBn(\bm{z}) & = \PDBn(\bm{z})\,Q(\bm{z})^{-1/2},
\end{align}
where the analytic factors are given by
\begin{itemize}
    \item Gibbs Case ($\alpha=1$):
    \begin{align}
        \PGn(\bm{z}) = \prod_{i=2}^m \frac{1}{1\!-\!\lambda_i(\bm{z})}.
        \label{app:eq:pgn}
    \end{align}
    \item Bethe Case ($\alpha=1/2$):
    \begin{align}
        \PDBn(\bm{z}) = \e^{\lambda_1(\bm{z})/2} \prod_{i=2}^m \sqrt{\frac{\e^{\lambda_i(\bm{z})}}{1\!-\!\lambda_i(\bm{z})}}.
        \label{app:eq:pbn}
    \end{align}
\end{itemize}
Moreover, $\PGn$ and $\PDBn$ are holomorphic and non-zero in a neighborhood of $\bm{w}$.
\end{lemma}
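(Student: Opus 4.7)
The plan is to reduce the lemma to (a) a purely algebraic splitting of the spectral formulas~\eqref{eq:CGCB-det:1}--\eqref{eq:CGCB-det:2}, peeling off the Perron factor $1-\lambda_1(\bm{z})$, followed by (b) a short analytic check that the remainder is holomorphic and non-vanishing near $\bm{w}$. Starting from the identities $\det(\matr{I}-\matr{S}(\bm{z})) = \prod_{i=1}^m (1-\lambda_i(\bm{z}))$ and $\trace(\matr{S}(\bm{z})) = \sum_{i=1}^m \lambda_i(\bm{z})$, the algebraic step is immediate: pulling out the $i=1$ factor in both~\eqref{eq:CGCB-det:1} and~\eqref{eq:CGCB-det:2} formally produces
\begin{align*}
    \CGn(\bm{z}) &= Q(\bm{z})^{-1} \cdot \prod_{i=2}^m (1-\lambda_i(\bm{z}))^{-1}, \\
    \CBn(\bm{z}) &= Q(\bm{z})^{-1/2} \cdot \e^{\lambda_1(\bm{z})/2} \prod_{i=2}^m \frac{\e^{\lambda_i(\bm{z})/2}}{\sqrt{1-\lambda_i(\bm{z})}},
\end{align*}
matching~\eqref{app:eq:pgn}--\eqref{app:eq:pbn}. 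The substantive work is then to justify that these expressions admit an unambiguous holomorphic interpretation on a complex neighborhood of $\bm{w}$.

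For the analytic input, I would first recall that Perron--Frobenius (applied to $\matr{S}(\bm{w})$, as already invoked before Lemma~\ref{app:lem:variety_geometry}) gives $\lambda_1(\bm{w}) = 1$ as a \emph{simple} root of the characteristic polynomial $\chi(\mu, \bm{z}) \defeq \det(\mu \matr{I} - \matr{S}(\bm{z}))$. The implicit function theorem applied to $\chi(\mu, \bm{z}) = 0$ at $(1, \bm{w})$ then extends $\lambda_1(\bm{z})$, and hence $Q(\bm{z}) = 1 - \lambda_1(\bm{z})$, as a holomorphic function in a complex neighborhood of $\bm{w}$. To avoid committing to individual non-Perron eigenvalues (which may collide and need not extend single-valuedly), I would express all remaining pieces through symmetric combinations: factor $\chi(\mu, \bm{z}) = (\mu - \lambda_1(\bm{z})) \cdot q(\mu, \bm{z})$ in $\mathbb{C}[\mu]$ over the local ring of holomorphic germs at $\bm{w}$, so that $q(\mu, \bm{z})$ is polynomial in $\mu$ with holomorphic coefficients in $\bm{z}$ and $q(1, \bm{z}) = \prod_{i=2}^m (1-\lambda_i(\bm{z}))$ is a single holomorphic function. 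The strict spectral gap $|\lambda_i(\bm{w})| < 1$ for $i \geq 2$ yields $q(1, \bm{w}) > 0$, so $q(1, \cdot)$ stays non-zero locally; this handles $\PGn = q(1, \bm{z})^{-1}$ at once. For $\PBn$, the trace factor splits as $\exp\bigl(\tfrac{1}{2}\trace\matr{S}(\bm{z})\bigr) = \e^{\lambda_1(\bm{z})/2} \cdot \exp\bigl(\tfrac{1}{2}(\trace\matr{S}(\bm{z}) - \lambda_1(\bm{z}))\bigr)$, the second factor being entire in $\bm{z}$, and $\prod_{i=2}^m \sqrt{1-\lambda_i(\bm{z})}$ is interpreted as the principal holomorphic square root of $q(1, \bm{z})$, which is well-defined near $\bm{w}$ because $q(1, \bm{w})$ is a positive real number.

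The main (and essentially only) obstacle is precisely this multivaluedness issue: the individual non-Perron eigenvalues need not extend as single-valued holomorphic functions on any complex neighborhood of $\bm{w}$. The remedy, sketched above, is to realize both $\PGn$ and $\PBn$ entirely in terms of quantities that are symmetric in the non-Perron part of the spectrum---namely $q(1, \bm{z})$, its principal square root, and $\trace(\matr{S}(\bm{z})) - \lambda_1(\bm{z})$---each of which is manifestly holomorphic on a complex neighborhood of $\bm{w}$. Once this is in place, the factorizations claimed in the lemma are verified by direct substitution, and non-vanishing of $\PGn$ and $\PBn$ at $\bm{w}$ follows from the spectral-gap inequality.
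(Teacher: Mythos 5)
Your proof is correct and follows the same route as the paper's: factor $\det(\matr{I}-\matr{S}(\bm{z}))$ over the spectrum, peel off the Perron factor $1-\lambda_1(\bm{z})$, and use the spectral gap $|\lambda_i(\bm{w})|<1$ for $i\ge 2$ to conclude that the remaining factor is non-vanishing near $\bm{w}$. The one place you go beyond the paper is in explicitly handling the possible multivaluedness of the individual non-Perron eigenvalues by packaging everything into the symmetric, manifestly holomorphic quantities $q(1,\bm{z})=\chi(1,\bm{z})/(1-\lambda_1(\bm{z}))$, its principal square root, and $\trace(\matr{S}(\bm{z}))-\lambda_1(\bm{z})$; the paper's proof simply writes $\prod_{i\ge 2}(1-\lambda_i(\bm{z}))$ and asserts holomorphy, so your version supplies a justification the paper leaves implicit.
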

\begin{proof}
By Lemma~\ref{app:lem:variety_geometry}, we have $|\lambda_i(\bm{w})|<1$ for all $2 \!\leq\! i \!\leq\! m$, hence $1\!-\!\lambda_i(\bm{w})\neq0$ and therefore $\prod_{i=2}^m(1\!-\!\lambda_i(\bm{w}))\neq0$. Using the eigenvalue factorization, i.e.,
\begin{align}
    \det\bigl(\matr{I}\!-\!\matr{S}(\bm{z})\bigr) = \bigl(1\!-\!\lambda_1(\bm{z})\bigr) \prod_{i=2}^m \bigl(1\!-\!\lambda_i(\bm{z})\bigr),
\end{align}
and substituting this into the explicit forms in Lemma~\ref{lem:block_trace_egf} yields \eqref{app:eq:pgn} and \eqref{app:eq:pbn}. The non-vanishing of the ($1\!-\!\lambda_i$)-factor for $2 \!\leq\! i \!\leq\! m$ at $\bm{w}$ implies that $\PGn$ and $\PDBn$ are holomorphic and non-zero in a neighborhood of $\bm{w}$.
\end{proof}

\begin{definition}
\label{def:log_gradient}
We define the logarithmic gradient of $Q(\bm{z})$ to be
\begin{align}
    \nabla_{\log} Q(\bm{z})
    \;\defeq\;
    \left(\frac{\partial Q(\bm{z})}{\partial \log z_i}\right)_{\!i=1}^{\!2m}
    \;=\;
    \left(z_i\frac{\partial Q(\bm{z})}{\partial z_i}\right)_{\!i=1}^{\!2m}.
\end{align}
\end{definition}

Given the smoothness and local characterization of $\mathcal{V}$ at $\bm{w}$, the asymptotic behavior of the coefficients is determined by the smooth critical point equations (see, e.g., \cite[Lemma~7.8]{pemantle2024analytic}).

\begin{lemma}
\label{app:lem:critical_point}
There exists a Lagrange scaling parameter $\eta$ such that the minimal point $\bm{w}\in\mathcal{V}\cap\mathbb{R}_{>0}^{2m}$ satisfies the system of equations
\begin{align}
    \begin{cases}
        Q(\bm{w}) = 0 & \text{(singularity condition)} \\
        \bm{r} = -\eta \nabla_{\log} Q(\bm{w}), \quad \eta > 0 & \text{(criticality condition)}
    \end{cases}
    \label{app:eq:critical_eq}
\end{align}
\end{lemma}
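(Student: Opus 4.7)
The plan is to recognize the lemma as a direct instance of the smooth critical-point theorem in ACSV (see~\cite[Chapter~7]{pemantle2024analytic}), applied to the minimal positive singularity $\bm{w}$ already produced by Lemma~\ref{app:lem:vp_minimal} and shown to be a smooth point of $\mathcal{V}$ by Lemma~\ref{app:lem:variety_geometry}. The singularity condition $Q(\bm{w})=0$ is immediate: by Lemma~\ref{app:lem:vp_minimal} we have $\bm{w}\in\mathcal{V}$, and by Lemma~\ref{app:lem:variety_geometry} the variety $\mathcal{V}$ is cut out locally by $Q$ near $\bm{w}$, so $Q(\bm{w})=0$. The substance lies entirely in the criticality condition.

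For the criticality condition I would use a Lagrange multiplier argument. Because $\mathcal{V}$ is smooth at $\bm{w}$, the gradient $\nabla Q(\bm{w})$ is non-zero, so the standard saddle-point extraction of $[\bm{z}^{\bm{r}}]F(\bm{z})$ on the torus $\{|z_i|=w_i\}$ selects $\bm{w}$ as a critical point of the functional $\phi(\bm{z})\defeq \sum_{i}r_i\log z_i$ restricted to $\mathcal{V}$. The first-order KKT conditions read $\nabla\phi(\bm{w})+\eta\,\nabla Q(\bm{w})=\bm{0}$ for some multiplier $\eta$, i.e., $r_i/w_i=-\eta\,\partial_{z_i}Q(\bm{w})$ for every $i\in[2m]$. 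Multiplying both sides by $w_i$ yields $r_i=-\eta\,w_i\,\partial_{z_i}Q(\bm{w})=-\eta\,(\nabla_{\log}Q(\bm{w}))_i$, which is the desired vector identity $\bm{r}=-\eta\,\nabla_{\log}Q(\bm{w})$.

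Positivity of $\eta$ is the last piece, and it rides on Perron--Frobenius. The entries of $\matr{S}(\bm{z})$ in \eqref{eq:Sdef} are strictly positive and strictly increasing in each $z_i$ for $\bm{z}\in\mathbb{R}_{>0}^{2m}$, so the Perron eigenvalue $\lambda_1(\bm{z})$ is strictly increasing in each coordinate (a fact already invoked in Lemma~\ref{app:lem:variety_geometry}). Consequently $\partial_{z_i}Q(\bm{w})=-\partial_{z_i}\lambda_1(\bm{w})<0$, and multiplication by $w_i>0$ gives $(\nabla_{\log}Q(\bm{w}))_i<0$. Since $\bm{r}=(\bm{k};\bm{\ell})$ has strictly positive components, the identity $\bm{r}=-\eta\,\nabla_{\log}Q(\bm{w})$ forces $\eta>0$. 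The main point requiring care in a fully detailed write-up is the identification between the Vivanti--Pringsheim minimal point produced by Lemma~\ref{app:lem:vp_minimal} and the direction-$\bm{r}$ critical point extracted by the saddle-point argument; under the Perron--Frobenius setting here, strict monotonicity of $\lambda_1$ together with the convexity of the logarithmic amoeba of $\mathcal{V}$ ensures that the positive-orthant critical point is unique, and it is this unique point that one takes as $\bm{w}$.
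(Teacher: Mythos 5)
Your proposal is correct and follows essentially the same route as the paper: both reduce the claim to the smooth critical-point equations of ACSV (the paper cites \cite[Lemma~7.8]{pemantle2024analytic} directly, you spell out the underlying Lagrange-multiplier computation). Your additional Perron--Frobenius argument for $\eta>0$ is a welcome extra that the paper's proof leaves implicit.
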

\begin{proof}
Let $\bm{w}\in\mathbb{R}_{>0}^{2m}$ be the minimal positive real singularity from Lemma~\ref{app:lem:vp_minimal}. By Lemma~\ref{app:lem:variety_geometry}, $\mathcal{V}$ is smooth at $\bm{w}$ and is locally defined by $Q(\bm{z})=0$. The smooth critical point equations~\cite[Lemma~7.8]{pemantle2024analytic} require the direction $\bm{r}$ to be parallel to the logarithmic gradient $\nabla_{\log}Q(\bm{w})$. Combining the two conditions yields the system~\eqref{app:eq:critical_eq}.
\end{proof}

With the critical point equations established, we now solve for the Lagrange multiplier $\eta$ by exploiting the homogeneity of the transition matrix.

\begin{lemma}
\label{app:lem:scaling_invariance}
For any $a \in \mathbb{C}\setminus\! \{0\}$, the transition matrix $\matr{S}(\bm{t},\bm{u})$ defined in~\eqref{eq:Sdef} satisfies the identity
\begin{align}
    \matr{S}(a\bm{t}, a^{-1}\bm{u}) = \matr{S}(\bm{t}, \bm{u}).
    \label{app:eq:scaling_tu}
\end{align}
\end{lemma}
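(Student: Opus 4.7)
The plan is a direct algebraic verification by substituting the rescaled indeterminates into the definition \eqref{eq:Sdef}. First I would use that the $\diag(\cdot)$ operator is linear in its argument, giving $\diag(a\bm{t}) = a\,\diag(\bm{t})$, and that $\diag(a^{-1}\bm{u}) = a^{-1}\,\diag(\bm{u})$, so, upon fixing a branch of the square root, $\diag(a^{-1}\bm{u})^{1/2} = a^{-1/2}\,\diag(\bm{u})^{1/2}$. Substituting these into $\matr{S}(a\bm{t},a^{-1}\bm{u})$ and commuting the scalar factors past the matrix products then yields an overall prefactor $a^{-1/2}\cdot a\cdot a^{-1/2} = a^{0} = 1$ multiplying $\matr{S}(\bm{t},\bm{u})$, which is the claim.

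The only point requiring care is the multivalued nature of $a^{1/2}$ for general $a\in\mathbb{C}\setminus\{0\}$. Because the two $a^{-1/2}$ factors arise from the same square-root choice of $\diag(a^{-1}\bm{u})$ appearing symmetrically on the left and the right of the expression, I would fix one consistent branch of $a^{1/2}$ and use it in both places; their product is then unambiguously $a^{-1}$, and the net scalar $a^{-1/2}\cdot a\cdot a^{-1/2}=a^{-1}\cdot a = 1$ is independent of the branch chosen. I do not anticipate any serious obstacle here; the verification is essentially bookkeeping with diagonal-matrix scalars. The payoff, which I expect to be exploited immediately after this lemma, is that the critical-point system \eqref{app:eq:critical_eq} inherits a one-parameter scaling symmetry, which can be used to eliminate one of the $2m$ unknowns and thereby simplify the subsequent saddle-point analysis.
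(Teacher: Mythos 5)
Your proposal is correct and follows essentially the same direct substitution that the paper uses: pull the scalars out of the $\diag(\cdot)$ factors and observe that $a^{-1/2}\cdot a\cdot a^{-1/2}=1$. Your extra remark about fixing a consistent branch of $a^{1/2}$ for complex $a$ is a small point of rigor the paper elides, but it does not change the argument.
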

\begin{proof}
Using~\eqref{eq:Sdef}, we have
\begin{align*}
    & \matr{S}(a\bm{t}, a^{-1}\bm{u}) \\
    & \quad = \diag(a^{-1}\bm{u})^{1/2}\,\matr{B}^\transp\,\diag(a\bm{t})\,\matr{B}\,\diag(a^{-1}\bm{u})^{1/2} \\
    & \quad = a^{-1/2}\diag(\bm{u})^{1/2}\,\matr{B}^\transp\bigl(a\,\diag(\bm{t})\bigr)\,\matr{B}\,a^{-1/2}\diag(\bm{u})^{1/2} \\
    & \quad = \diag(\bm{u})^{1/2}\,\matr{B}^\transp\,\diag(\bm{t})\,\matr{B}\,\diag(\bm{u})^{1/2} \\
    & \quad = \matr{S}(\bm{t},\bm{u}),
\end{align*}
which proves~\eqref{app:eq:scaling_tu}. Consequently, all eigenvalues $\lambda_i(\bm{t},\bm{u})$ are invariant, and so are any functions of them, such as $Q(\bm{z})=1-\lambda_1(\bm{z})$ and the analytic factors $P(\bm{z})$ built from $\{\lambda_i(\bm{z})\}$.
\end{proof}

\begin{lemma}
\label{app:lem:eta_value}
Let $\bm{w}$ be a minimal smooth critical point satisfying~\eqref{app:eq:critical_eq}. Then the Lagrange scaling parameter in~\eqref{app:eq:critical_eq} is
\begin{align}
    \eta = n.
    \label{eq:eta-equals-n}
\end{align}
\end{lemma}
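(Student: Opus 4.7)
The plan is to combine the criticality condition from Lemma~\ref{app:lem:critical_point} with a homogeneity property of the Perron eigenvalue $\lambda_1(\bm{z})$ that follows directly from the definition~\eqref{eq:Sdef}.

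First, I would observe that $\matr{S}(\bm{t},\bm{u})$ is homogeneous of degree $1$ in $\bm{t}$ alone: for any $a \in \mathbb{R}_{>0}$, the factor $\diag(a\bm{t})$ pulls out a scalar $a$, giving $\matr{S}(a\bm{t},\bm{u}) = a\, \matr{S}(\bm{t},\bm{u})$. Consequently $\lambda_1(a\bm{t},\bm{u}) = a\, \lambda_1(\bm{t},\bm{u})$, so $\lambda_1$ is homogeneous of degree $1$ in $\bm{t}$ (and, by symmetry, in $\bm{u}$). Applying Euler's identity for homogeneous functions yields
\begin{align*}
    \sum_{i \in [m]} t_i\, \partial_{t_i} \lambda_1(\bm{t},\bm{u}) = \lambda_1(\bm{t},\bm{u}) .
\end{align*}
Using $Q = 1 - \lambda_1$, and invoking the singularity condition $Q(\bm{w}) = 0$ from~\eqref{app:eq:critical_eq} (i.e., $\lambda_1(\bm{w}) = 1$), this specializes at the critical point to
\begin{align*}
    \sum_{i \in [m]} \bigl( \nabla_{\log} Q(\bm{w}) \bigr)_{t_i} = -1 .
\end{align*}

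Next, I would read off the $\bm{t}$-components of the criticality condition $\bm{r} = -\eta\, \nabla_{\log} Q(\bm{w})$, which give $k_i = -\eta\, \bigl( \nabla_{\log} Q(\bm{w}) \bigr)_{t_i}$ for each $i \in [m]$. Summing these identities and using the normalization $\sum_{i \in [m]} k_i = n$ from Assumption~\ref{assump:general_block} yields $n = -\eta \cdot (-1) = \eta$, which is~\eqref{eq:eta-equals-n}. As a consistency check, the same argument applied to the $\bm{u}$-components together with $\sum_{j \in [m]} \ell_j = n$ gives the same value of $\eta$, reflecting the symmetric role of row and column types; equivalently, combining the two partial homogeneities shows $\lambda_1$ is homogeneous of degree $2$ in $\bm{z} = (\bm{t};\bm{u})$ as a whole, and then Euler together with $\sum_{i\in[2m]} r_i = 2n$ yields $2n = 2\eta$.

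There is no substantive obstacle in this argument; the only nontrivial step is recognizing that the right tool is the degree-$1$ homogeneity of $\lambda_1$ in $\bm{t}$ (respectively $\bm{u}$), which is visible already at the level of the kernel $\matr{S}$ in~\eqref{eq:Sdef} and is a strengthening of the scaling invariance~\eqref{app:eq:scaling_tu} used in Lemma~\ref{app:lem:scaling_invariance}.
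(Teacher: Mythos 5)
Your proposal is correct and follows essentially the same route as the paper's proof: both exploit the degree-$1$ homogeneity of $\matr{S}$ (hence of $\lambda_1$) in $\bm{t}$ alone, apply Euler's identity, and sum the $\bm{t}$-components of the criticality condition against $\sum_{i\in[m]}k_i=n$ (with the $\bm{u}$-components as a consistency check). The only cosmetic difference is that you route the argument through $\lambda_1(\bm{w})=1$ and the plain gradient, while the paper phrases it via the logarithmic derivatives $\partial\log\lambda_1/\partial\log t_i$ summing to $1$; the content is the same.
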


\begin{proof}
On $\mathcal{V}$ we have $Q=1\!-\!\lambda_1$, hence $\nabla_{\log}Q(\bm{w})=-\nabla_{\log}\lambda_1(\bm{w})$. Therefore the criticality condition~\eqref{app:eq:critical_eq} is equivalent to
\begin{align}
    \bm{r}=\eta\,\nabla_{\log}\lambda_1(\bm{w}),
\end{align}
i.e.,
\begin{align}
    k_i & = \eta\,\left.\frac{\partial\log\lambda_1}{\partial\log t_i}\right|_{\bm{w}}, \qquad i\in[m],
    \label{app:eq:eta-component-k}\\
    \ell_j & = \eta\,\left.\frac{\partial\log\lambda_1}{\partial\log u_j}\right|_{\bm{w}}, \qquad j\in[m].
    \label{app:eq:eta-component-ell}
\end{align}
Since $\matr{S}(\bm{t},\bm{u})$ is degree-$1$ homogeneous in $\bm{t}$ for fixed $\bm{u}$ and likewise in $\bm{u}$ for fixed $\bm{t}$, the Perron eigenvalue $\lambda_1$ inherits the same separate homogeneity. Euler's homogeneous function theorem then gives
\begin{align}
    \sum_{i\in[m]} \left.\frac{\partial\log\lambda_1}{\partial\log t_i}\right|_{\bm{w}} & = 1,
    \label{app:eq:euler-t-log}
    \\
    \sum_{j\in[m]} \left.\frac{\partial\log\lambda_1}{\partial\log u_j}\right|_{\bm{w}} & = 1.
    \label{app:eq:euler-u-log}
\end{align}
Summing~\eqref{app:eq:eta-component-k} and~\eqref{app:eq:eta-component-ell} over $i$ and $j$, respectively, and using~\eqref{app:eq:euler-t-log} and~\eqref{app:eq:euler-u-log}, gives
\begin{align}
    n=\sum_{i\in[m]}k_i = \sum_{i\in[m]}\eta\,\left.\frac{\partial\log\lambda_1}{\partial\log t_i}\right|_{\bm{w}} = \eta,\\
    n=\sum_{j\in[m]}\ell_j = \sum_{j\in[m]}\eta\,\left.\frac{\partial\log\lambda_1}{\partial\log u_j}\right|_{\bm{w}} = \eta,
\end{align}
yielding $\eta\!=\!n$.
\end{proof}

\subsection{Proof Sketch: Geometric Intuition and Derivation Strategy}
\label{app:proof_sketch}

We briefly sketch the geometric intuition behind the asymptotic derivation before proceeding to the technical details. The multivariate Cauchy integral for the coefficient localizes around the critical point $\bm{w}$ in logarithmic coordinates. The evaluation strategy involves decomposing the $2m$-dimensional integration space into three geometrically distinct subspaces:

\begin{itemize}
    \item \textbf{The Constraint Direction:} 
    First, we identify a specific direction corresponding to the inherent structural constraint of the matrix model (the row and column type counts are tied to the same total size, as imposed in Assumption~\ref{assump:general_block}). The integrand remains constant along this direction. We isolate this redundancy, which reduces the effective dimension of the problem by one and contributes a global geometric factor to the final result.

    \item \textbf{The Normal Component:} 
    Among the remaining dimensions, the direction normal to the singular manifold is the most critical. This direction captures the singularity of the defining function. By introducing a local normal coordinate and deforming the integration path into a (truncated) Hankel contour, we extract the primary asymptotic growth rate governed by the nature of the singularity. Crucially, this deformation offers a unified framework that handles both simple poles and algebraic branch cuts.
    
    \item \textbf{The Tangential Component:} 
    The remaining $2m\!-\!2$ directions are tangential to the singular manifold. Along these directions, the phase function is dominated by a non-degenerate quadratic form. Integration over this subspace yields a standard Gaussian integral, contributing a scalar prefactor determined by the curvature (i.e., the determinant of the Hessian matrix) of the manifold at the critical point.
\end{itemize}

Combining the geometric factor from the constraint removal, the growth factor from the normal singularity, and the curvature term from the tangential Gaussian integration, yields the final asymptotic formulas presented in Proposition~\ref{prop:block_asymp}.

\bigskip
\noindent
\textit{This concludes the proof sketch. We now proceed with the detailed derivation.}

\subsection{Invariance and Dimension Reduction}
\label{app:subsec:invariance_reduction}

Having established the standard form $F(\bm{z})=P(\bm{z})\,Q(\bm{z})^{-\alpha}$ and the smoothness of the singular variety at the dominant critical point $\bm{w}$, we now evaluate the coefficient extraction integral.

We begin by stating the standard integral representation for the coefficients.

\begin{lemma}
\label{app:lem:cauchy_std}
The coefficient $a_{\bm{r}}$ associated with the partitioning $\bm{r}$ is given by the multivariate Cauchy integral in $\mathbb{C}^{2m}$ as
\begin{align}
    a_{\bm{r}} = \frac{1}{(2\pi i)^{2m}} \oint_{\mathcal{C}} F(\bm{z})\,\bm{z}^{-\bm{r}-\bm{1}}\,\d\bm{z},
    \label{app:eq:cauchy_std}
\end{align}
where the integration contour $\mathcal{C}$ is a polytorus (i.e., a $2m$-dimensional torus), deformed to pass through the neighborhood of the dominant critical point $\bm{w}$.
\end{lemma}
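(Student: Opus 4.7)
The plan is to derive the representation by iterating the one-dimensional Cauchy coefficient extraction formula, with the integration torus initially chosen inside the polydisk of convergence, and then to invoke the multivariate Cauchy--Poincaré theorem to deform this torus toward a neighborhood of the minimal smooth critical point $\bm{w}$ identified in Lemma~\ref{app:lem:vp_minimal}.

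First, I would establish that $F(\bm{z})$ (either $\CGn(\bm{z})$ or $\CBn(\bm{z})$) admits a convergent multivariate Taylor expansion $F(\bm{z}) = \sum_{\bm{n}\in\mathbb{Z}_{\geq 0}^{2m}} a_{\bm{n}} \bm{z}^{\bm{n}}$ on a non-empty open polydisk centered at the origin. This follows from Lemma~\ref{lem:block_trace_egf}, since each $\trace\bigl(\matr{S}(\bm{z})^h\bigr)$ is a polynomial in $\bm{z}$ with non-negative coefficients, and the exponential (resp.\ exponential times inverse square root of determinant) of such a series again has non-negative coefficients and a non-empty domain of convergence.

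Second, I would fix any polyradius $\bm{\rho} = (\rho_1,\dots,\rho_{2m})$ strictly inside this polydisk of convergence, so that the polytorus $\mathcal{C}_{\bm{\rho}} \defeq \{\bm{z} : |z_i| = \rho_i, i\in[2m]\}$ lies in the region of absolute convergence. On $\mathcal{C}_{\bm{\rho}}$ the integrand $F(\bm{z})\,\bm{z}^{-\bm{r}-\bm{1}}$ is uniformly dominated by an absolutely convergent series, so Fubini allows the exchange of summation and iterated integration. Each one-dimensional Cauchy integral
\begin{align*}
    \frac{1}{2\pi i}\oint_{|z_i|=\rho_i} z_i^{n_i-r_i-1}\,\d z_i \;=\; \delta_{n_i,r_i}
\end{align*}
selects a single coefficient, and iterating over $i\in[2m]$ produces the claimed representation with $\mathcal{C}=\mathcal{C}_{\bm{\rho}}$.

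Third, for the ``deformed to pass through the neighborhood of $\bm{w}$'' statement, I would appeal to the multivariate Cauchy--Poincaré theorem. By Lemma~\ref{app:lem:variety_geometry}, the integrand is holomorphic on $(\mathbb{C}^{*})^{2m}\setminus\mathcal{V}$, and by minimality of $\bm{w}$ (Lemma~\ref{app:lem:vp_minimal}) the polydisk $\{|z_i| < w_i\}$ is contained in the domain of holomorphy. Inflating the radii from $\bm{\rho}$ to $(1-\epsilon)\bm{w}$ with $\epsilon > 0$ arbitrarily small is a homotopy inside this domain, so the Cauchy integral is preserved and the resulting polytorus can be placed arbitrarily close to $\bm{w}$. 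The main obstacle here is not the lemma itself, which is essentially routine once the analyticity and non-negativity setup is in place; the genuine difficulty — the deformation \emph{across} the singular variety to isolate a local Hankel-type contour around $\bm{w}$ with its normal/tangential decomposition sketched in Section~\ref{app:proof_sketch} — is deferred to the subsequent lemmas, and Lemma~\ref{app:lem:cauchy_std} only serves to provide the starting integral representation.
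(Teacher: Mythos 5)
Your argument is correct and is precisely the standard derivation (termwise one-dimensional Cauchy extraction plus Fubini on a polytorus inside the domain of convergence, then inflation of the radii toward the minimal point $\bm{w}$) that the paper implicitly relies on; the paper itself states this lemma without proof as a standard fact. Your closing remark correctly locates the real work in the subsequent deformation and local analysis, not in this lemma.
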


Under Assumption~\ref{assump:general_block}, the partitioning $\bm{r}\!=\!(\bm{k};\bm{\ell})$ satisfies the balance constraint $\bm{c}\cdot \bm{r}=0$ with $\bm{c}\defeq(\bm{1}_m;-\bm{1}_m)$, where $\bm{1}_m$ denotes the all-one vector (see Notation in Section~\ref{sec:intro}), since $\sum_{i\in[m]}k_i \!=\! \sum_{j\in[m]}\ell_j \!=\! n$. This implies that the integrand exhibits a shift invariance along the direction $\bm{c}$. To exploit this invariance and the underlying geometry, we transition to logarithmic coordinates $\bm{\zeta} \in \mathbb{C}^{2m}$ by setting $\bm{z} \!=\! \e^{\bm{\zeta}}$. In this domain, the integration contour unfolds into a flat torus defined by $\zeta_j = \log R_j \!+\! i\theta_j$, where $R_j\in\mathbb{R}$ and $\theta_j$ ranges over $[0, 2\pi)$.

We first define the relevant functions in this new coordinate system.

\begin{definition}
\label{app:def:log_transformed_funcs}
Given the analytic factors $P(\bm{z})$ and $Q(\bm{z})$ from Lemma~\ref{app:lem:acsv_factorization}, define their log-transformed counterparts as
\begin{align}
    \tilde{P}(\bm{\zeta}) \defeq P(\e^{\bm{\zeta}}), \quad \tilde{Q}(\bm{\zeta}) \defeq Q(\e^{\bm{\zeta}}).
\end{align}
\end{definition}

\begin{lemma}
\label{app:lem:log_integral_form}
Under the substitution $\bm{z} = \e^{\bm{\zeta}}$, the term $\bm{z}^{-\bm{r}-\bm{1}}\d\bm{z}$ transforms as $\e^{-\bm{r}\cdot\bm{\zeta}}\d\bm{\zeta}$, with $\d\bm{z} \!=\! \bm{z}\d\bm{\zeta}$ canceling the $-\bm{1}$ in the exponent. Consequently, the coefficient $a_{\bm{r}}$ in \eqref{app:eq:cauchy_std} can be expressed as
\begin{align}
    a_{\bm{r}} = \frac{1}{(2\pi i)^{2m}} \oint_{\mathcal{C}} \e^{-\bm{r}\cdot\bm{\zeta}}\,\tilde{P}(\bm{\zeta})\,\tilde{Q}(\bm{\zeta})^{-\alpha} \d\bm{\zeta},
    \label{app:eq:cauchy_log}
\end{align}
where the integral is taken over the logarithmic parametrization of the polytorus $\mathcal{C}$ in Lemma~\ref{app:lem:cauchy_std}, i.e., $\zeta_j = \log R_j \!+\! i\theta_j$ with $(\theta_1; \dots; \theta_{2m}) \in [0, 2\pi)^{2m}$.
\end{lemma}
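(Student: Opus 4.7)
The plan is to execute the change of variables $\bm{z} = \e^{\bm{\zeta}}$ componentwise in the Cauchy integral \eqref{app:eq:cauchy_std}, tracking four pieces independently: the Jacobian of the substitution, the monomial $\bm{z}^{-\bm{r}-\bm{1}}$, the integrand $F$, and the contour $\mathcal{C}$.

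First, I would compute the Jacobian coordinate-by-coordinate: setting $z_j = \e^{\zeta_j}$ gives $\d z_j = z_j\,\d\zeta_j$, so the $2m$-form satisfies $\d\bm{z} = \bm{z}^{\bm{1}}\,\d\bm{\zeta}$. Next I would rewrite the monomial as $\bm{z}^{-\bm{r}-\bm{1}} = \e^{-(\bm{r}+\bm{1})\cdot\bm{\zeta}}$ and multiply against the Jacobian factor $\e^{\bm{1}\cdot\bm{\zeta}}$: the two $\bm{1}$-contributions cancel, leaving the key identity $\bm{z}^{-\bm{r}-\bm{1}}\,\d\bm{z} = \e^{-\bm{r}\cdot\bm{\zeta}}\,\d\bm{\zeta}$ that the lemma statement highlights. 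The integrand $F = P\,Q^{-\alpha}$ then rewrites as $\tilde P(\bm{\zeta})\,\tilde Q(\bm{\zeta})^{-\alpha}$ by Definition~\ref{app:def:log_transformed_funcs}.

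Then I would verify the contour correspondence. The polytorus in Lemma~\ref{app:lem:cauchy_std} is parametrized by $z_j = R_j\,\e^{i\theta_j}$ with $\theta_j \in [0, 2\pi)$, and under $\bm{z} = \e^{\bm{\zeta}}$ this pulls back to the flat torus $\zeta_j = \log R_j + i\theta_j$ described in the statement. Assembling the transformed integrand, measure, and contour into \eqref{app:eq:cauchy_std} then yields \eqref{app:eq:cauchy_log}.

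I expect no substantive obstacle; the result is essentially a change-of-variables bookkeeping identity. The one delicate point is the multi-valuedness of $\log$ on $\mathbb{C}\setminus\{0\}$: I would handle it by fixing a branch of the logarithm on each coordinate in a tubular neighborhood of the chosen radius $R_j$, so that the map $\bm{z}\mapsto\bm{\zeta}$ restricts to a biholomorphism from the polytorus onto its flat image without crossing any branch cut. Single-valuedness of $\tilde P$ and $\tilde Q$ on this flat image is automatic, since $P$ and $Q$ are holomorphic and non-vanishing in a neighborhood of the dominant point $\bm{w}$ by Lemma~\ref{app:lem:acsv_factorization}.
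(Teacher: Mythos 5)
Your proposal is correct and follows exactly the route the paper takes: the paper's justification for this lemma is precisely the componentwise Jacobian computation $\d z_j = z_j\,\d\zeta_j$ cancelling the $-\bm{1}$ in the exponent of the monomial, combined with the substitution $F(\e^{\bm\zeta})=\tilde P(\bm\zeta)\tilde Q(\bm\zeta)^{-\alpha}$ and the pullback of the polytorus to the flat torus $\zeta_j=\log R_j+i\theta_j$. Your extra care about fixing a branch of the logarithm is a welcome (and harmless) refinement of the same bookkeeping argument.
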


The linear structure of the logarithmic domain allows us to explicitly characterize the invariance imposed by the balance constraint. We distinguish between two types of invariance, i.e., radial scaling (real shifts) and angular translation (imaginary shifts).

\begin{lemma}
\label{app:lem:log_real_invariance}
Fix a point $\bm{\zeta}(0)$ on the logarithmic parametrization of the polytorus $\mathcal{C}$, and define $\bm{\zeta}(\varphi) \defeq \bm{\zeta}(0) + \varphi\bm{c}$ with $\varphi\in\mathbb{R}$. Then the logarithmic integrand is invariant under varying $\varphi$. Geometrically, this corresponds to the freedom of deforming the contour's radial coordinates $\log\!\bm{R}$ without changing the integral's value.
\end{lemma}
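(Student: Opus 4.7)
The plan is to check invariance factor-by-factor on the three pieces of the logarithmic integrand in \eqref{app:eq:cauchy_log}, namely the monomial weight $\e^{-\bm{r}\cdot\bm{\zeta}}$, the analytic prefactor $\tilde P(\bm{\zeta})$, and the singular factor $\tilde Q(\bm{\zeta})^{-\alpha}$. Under the shift $\bm{\zeta}\to \bm{\zeta}+\varphi\bm{c}$ with $\bm{c}=(\bm{1}_m;-\bm{1}_m)$, the exponential weight picks up exactly the factor $\e^{-\varphi\bm{r}\cdot\bm{c}}$, so the first step is to observe that the balance constraint $\sum_{i\in[m]}k_i=\sum_{j\in[m]}\ell_j=n$ built into Assumption~\ref{assump:general_block} gives $\bm{r}\cdot\bm{c}=n-n=0$; thus $\e^{-\bm{r}\cdot\bm{\zeta}}$ is unchanged.

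Next, I would argue that the remaining two factors depend on $\bm{\zeta}$ only through the eigenvalues $\{\lambda_i(\bm{z})\}_{i\in[m]}$ of $\matr{S}(\bm{z})$ with $\bm{z}=\e^{\bm{\zeta}}$. This is immediate from the explicit expressions for $P$ and $Q$ in Lemma~\ref{app:lem:acsv_factorization}, where $Q=1-\lambda_1$ and $P$ is built from $\{\lambda_i\}_{i=1}^m$ only. The shift $\bm{\zeta}\to \bm{\zeta}+\varphi\bm{c}$ corresponds, via exponentiation, to the substitution $(\bm{t},\bm{u})\to(\e^{\varphi}\bm{t},\e^{-\varphi}\bm{u})$. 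Applying Lemma~\ref{app:lem:scaling_invariance} with $a=\e^{\varphi}\in\mathbb{C}\setminus\{0\}$ yields $\matr{S}(\e^{\varphi}\bm{t},\e^{-\varphi}\bm{u})=\matr{S}(\bm{t},\bm{u})$, so every $\lambda_i$ is invariant. Consequently, $\tilde P(\bm{\zeta}+\varphi\bm{c})=\tilde P(\bm{\zeta})$ and $\tilde Q(\bm{\zeta}+\varphi\bm{c})=\tilde Q(\bm{\zeta})$, which together with the exponential computation above gives the claimed invariance of the full integrand.

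Finally, for the geometric interpretation stated in the second sentence of the lemma, I would note that the real part of $\varphi\bm{c}$ shifts the radial coordinates $\log\bm{R}\to \log\bm{R}+\varphi\bm{c}$ while leaving the angular variables $\bm{\theta}$ untouched. By the pointwise invariance just proved, the integral is unchanged under any such real radial deformation (staying within the domain of holomorphy given by Lemma~\ref{app:lem:variety_geometry}), which is exactly the freedom to slide the polytorus radii along the direction $\bm{c}$.

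The only step that requires any care is keeping track of the two conventions at play: the scaling invariance of Lemma~\ref{app:lem:scaling_invariance} is stated for $(\bm{t},\bm{u})$, while the shift here is stated in $\bm{\zeta}$-coordinates. Once one writes out the change of variables $\bm{z}=\e^{\bm{\zeta}}$ componentwise so that the direction $\bm{c}$ on the $\bm{\zeta}$-side maps to $(\e^{\varphi}\bm{t},\e^{-\varphi}\bm{u})$ on the $\bm{z}$-side, the proof reduces to a one-line application of that earlier lemma together with the balance identity $\bm{r}\cdot\bm{c}=0$, and there is no genuine obstacle.
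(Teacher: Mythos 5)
Your proposal is correct and follows essentially the same route as the paper's proof: it splits the integrand into the exponential factor (handled via the balance identity $\bm{c}\cdot\bm{r}=0$) and the factors $\tilde P,\tilde Q$ (handled by mapping the shift back to the scaling $(\bm{t},\bm{u})\mapsto(\e^{\varphi}\bm{t},\e^{-\varphi}\bm{u})$ and invoking Lemma~\ref{app:lem:scaling_invariance}). No gaps.
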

\begin{proof}
Write $\bm{\zeta}(\varphi)=\log\!\bm{R}(\varphi)+i\bm{\theta}$. Since $\varphi$ and $\bm{c}$ are real, varying $\varphi$ changes only $\log\!\bm{R}(\varphi)$, while $\bm{\theta}$ is fixed. The integrand consists of the exponential factor $\e^{-\bm{r}\cdot\bm{\zeta}}$ and the factors $\tilde{P}, \tilde{Q}$. The exponential factor is invariant because the balance constraint $\bm{c} \cdot \bm{r} = 0$ implies $\e^{-\bm{r}\cdot\bm{\zeta}(\varphi)} \!=\! \e^{-\bm{r}\cdot(\bm{\zeta}(0)+\varphi\bm{c})} \!=\! \e^{-\bm{r}\cdot\bm{\zeta}(0)}$. To establish the invariance of the factors $\tilde{P}, \tilde{Q}$, we map the shift back to the original variables $\bm{z} \!=\! (\bm{t}; \bm{u})$. Given the constraint $\bm{c} \!=\! (\bm{1}_m; -\bm{1}_m)$, the shift acts on the respective logarithmic coordinates as $\bm{\zeta}_{\bm{t}}(\varphi) = \bm{\zeta}_{\bm{t}}(0) \!+\! \varphi\bm{1}_m$ and $\bm{\zeta}_{\bm{u}}(\varphi) = \bm{\zeta}_{\bm{u}}(0) \!-\! \varphi\bm{1}_m$. Taking exponentials, we obtain $\bm{t}(\varphi) \!=\! \e^\varphi \bm{t}(0)$ and $\bm{u}(\varphi) \!=\! \e^{-\varphi} \bm{u}(0)$. By Lemma~\ref{app:lem:scaling_invariance}, the factors $P$ and $Q$ (and consequently $\tilde{P}$ and $\tilde{Q}$) are invariant under such scaling. Thus, the integrand remains constant under this radial deformation.
\end{proof}

\begin{lemma}
\label{app:lem:log_imag_continuous}
For a fixed polytorus $\mathcal{C}$ (i.e., fixed radii $\log\!\bm{R}$), fix a point $\bm{\zeta}(0)$ on its logarithmic parametrization, and define $\bm{\zeta}(\varphi) \defeq \bm{\zeta}(0) + i\varphi\bm{c}$ with $\varphi\in\mathbb{R}$. Then the logarithmic integrand is invariant under varying $\varphi$.
\end{lemma}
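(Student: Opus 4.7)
The plan is to mimic the structure of the proof of \Cref{app:lem:log_real_invariance}, but along an imaginary direction, and to verify that every step in that proof carries through when $\varphi$ is replaced by $i\varphi$. First I would write $\bm{\zeta}(\varphi)=\log\!\bm{R}+i\bm{\theta}(\varphi)$, where only the angular component $\bm{\theta}(\varphi)=\bm{\theta}(0)+\varphi\bm{c}$ changes with $\varphi$ (the radial part $\log\!\bm{R}$ is fixed, consistent with keeping the polytorus $\mathcal{C}$ fixed). Then I would split the integrand into the exponential factor $\e^{-\bm{r}\cdot\bm{\zeta}}$ and the analytic factor $\tilde{P}(\bm{\zeta})\tilde{Q}(\bm{\zeta})^{-\alpha}$ and argue invariance term by term.

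For the exponential factor, the balance constraint $\bm{c}\cdot\bm{r}=0$ (which is exactly the assumption $\sum_{i\in[m]}k_i=\sum_{j\in[m]}\ell_j=n$ from \Cref{assump:general_block}) gives
\begin{align*}
  \e^{-\bm{r}\cdot\bm{\zeta}(\varphi)}
  &=\e^{-\bm{r}\cdot\bm{\zeta}(0)}\cdot \e^{-i\varphi\,\bm{r}\cdot\bm{c}}
  =\e^{-\bm{r}\cdot\bm{\zeta}(0)},
\end{align*}
independently of $\varphi$. For the analytic factor, I would translate the imaginary shift back to the original $(\bm{t},\bm{u})$ coordinates via $\bm{z}=\e^{\bm{\zeta}}$. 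Because $\bm{c}=(\bm{1}_m;-\bm{1}_m)$, the shift $\bm{\zeta}(\varphi)=\bm{\zeta}(0)+i\varphi\bm{c}$ acts as
\begin{align*}
  \bm{t}(\varphi)=\e^{i\varphi}\,\bm{t}(0),
  \qquad
  \bm{u}(\varphi)=\e^{-i\varphi}\,\bm{u}(0),
\end{align*}
which is precisely the scaling $(\bm{t},\bm{u})\mapsto(a\bm{t},a^{-1}\bm{u})$ of \Cref{app:lem:scaling_invariance} with $a=\e^{i\varphi}\in\mathbb{C}\setminus\{0\}$. That lemma then guarantees $\matr{S}(\bm{t}(\varphi),\bm{u}(\varphi))=\matr{S}(\bm{t}(0),\bm{u}(0))$, so the full spectrum $\{\lambda_i(\bm{z})\}_{i=1}^{m}$ and every quantity built from it --- in particular $Q=1-\lambda_1$ and the analytic factors $\PGn$ and $\PBn$ from \Cref{app:lem:acsv_factorization} --- take the same values at $\bm{\zeta}(\varphi)$ as at $\bm{\zeta}(0)$.

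The only point requiring a brief comment is that \Cref{app:lem:scaling_invariance} was deliberately stated for an arbitrary nonzero \emph{complex} scalar $a$, so the same identity is usable along the imaginary direction $a=\e^{i\varphi}$ without modification; likewise $\tilde P$ and $\tilde Q$ are well-defined single-valued functions of $\bm{\zeta}$ since $Q$ is polynomial in the entries of $\matr{S}$ (and, for the Bethe case, any branch ambiguity of $Q^{-1/2}$ is already absorbed by working inside the prescribed local sheet around $\bm{w}$ rather than along the shift itself, which occurs in the tangential/angular direction where $Q$ does not vanish away from the critical point). Combining the two invariances gives pointwise constancy of the integrand along $\varphi\mapsto\bm{\zeta}(0)+i\varphi\bm{c}$. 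I do not expect any genuine obstacle here --- the proof is essentially a carbon copy of \Cref{app:lem:log_real_invariance} with the real Lie-group direction replaced by its imaginary counterpart --- but the punch line is that this continuous angular symmetry is exactly what is needed in the subsequent dimension-reduction step: together with \Cref{app:lem:log_real_invariance}, it identifies the single redundant degree of freedom in the polytorus that accounts for the $\sqrt{2m}/\|\nabla_{\log}Q(\bm{w})\|_2$ factor in \Cref{prop:block_asymp}.
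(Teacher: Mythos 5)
Your proposal is correct and follows essentially the same route as the paper's proof: decompose $\bm{\zeta}(\varphi)$ into fixed radial and shifting angular parts, use $\bm{c}\cdot\bm{r}=0$ for the exponential factor, and invoke Lemma~\ref{app:lem:scaling_invariance} with $a=\e^{i\varphi}$ for the invariance of $\tilde P$ and $\tilde Q$. The extra remarks on the complex scalar and branch choice are fine but not needed beyond what the paper already does.
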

\begin{proof}
Write $\bm{\zeta}(\varphi)=\log\!\bm{R}+i\bm{\theta}(\varphi)$, so $\bm{\theta}(\varphi)=\bm{\theta}(0)+\varphi\bm{c}$ while $\log\!\bm{R}$ is fixed. The rest follows exactly as in the proof of Lemma~\ref{app:lem:log_real_invariance}. The exponential factor is invariant by $\bm{c}\cdot\bm{r}=0$, and exponentiating $\bm{\zeta}(\varphi)$ yields the scaling $\bm{t}(\varphi)=\e^{i\varphi}\bm{t}(0)$ and $\bm{u}(\varphi)=\e^{-i\varphi}\bm{u}(0)$, under which $P$ and $Q$ (hence $\tilde P$ and $\tilde Q$) are invariant by Lemma~\ref{app:lem:scaling_invariance}. Therefore, the logarithmic integrand remains constant as $\varphi$ varies.
\end{proof}

Since the integrand is constant under translations of $\bm{\theta}$ in the direction of $\bm{c}$, the integral along this specific direction evaluates simply to the length of its integration interval. This length is precisely determined by the fundamental period of the angular torus.

\begin{lemma}
\label{app:lem:log_imag_periodicity}
Consider $\bm{\zeta}(\varphi) \defeq \bm{\zeta}(0) + i\varphi\bm{c}$ defined in Lemma~\ref{app:lem:log_imag_continuous}, then $\bm{\zeta}(\varphi)$ is periodic on the polytorus, and the fundamental period is $\varphi=2\pi$.
\end{lemma}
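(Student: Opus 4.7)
The plan is to verify the two assertions separately: first, that $\bm{\zeta}(\varphi)$ returns to $\bm{\zeta}(0)$ on the polytorus for $\varphi = 2\pi$, and second, that no smaller positive value of $\varphi$ achieves this. The argument is a direct calculation on the flat torus underlying $\mathcal{C}$, so I do not anticipate any substantive analytic obstacle; the main task is to translate the condition cleanly into a congruence on angular coordinates.

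First I would recall the parametrization $\zeta_j = \log R_j + i\theta_j$ of $\mathcal{C}$ from Lemma~\ref{app:lem:log_integral_form}, which identifies each angular coordinate $\theta_j$ modulo $2\pi$. Consequently, two points $\bm{\zeta}$ and $\bm{\zeta}'$ on $\mathcal{C}$ coincide if and only if $\mathrm{Re}(\bm{\zeta}) = \mathrm{Re}(\bm{\zeta}')$ and $\mathrm{Im}(\bm{\zeta}) - \mathrm{Im}(\bm{\zeta}') \in (2\pi\mathbb{Z})^{2m}$. Next I would apply this criterion to $\bm{\zeta}(\varphi) - \bm{\zeta}(0) = i\varphi\bm{c}$: this shift fixes the radial part $\log\bm{R}$ and translates the angular coordinates by $\varphi c_j$ for $j \in [2m]$. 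Hence $\bm{\zeta}(\varphi)$ equals $\bm{\zeta}(0)$ on $\mathcal{C}$ precisely when $\varphi c_j \in 2\pi\mathbb{Z}$ for every $j \in [2m]$.

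The concluding step is to simultaneously solve this system of congruences. From the definition $\bm{c} = (\bm{1}_m; -\bm{1}_m)$, every component $c_j$ equals $+1$ or $-1$. Each individual congruence $\varphi c_j \in 2\pi\mathbb{Z}$ is therefore equivalent to $\varphi \in 2\pi\mathbb{Z}$, independently of $j$. The smallest positive $\varphi$ satisfying this is $\varphi = 2\pi$, which is the claimed fundamental period. No step appears genuinely difficult; the only point one must guard against is mistakenly treating the radial part as the active variable, which is ruled out because Lemma~\ref{app:lem:log_imag_continuous} fixes $\log\bm{R}$ throughout the deformation.
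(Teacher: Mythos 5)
Your proposal is correct and follows essentially the same route as the paper's proof: both reduce the periodicity condition to the lattice congruence $\varphi\bm{c}\in 2\pi\mathbb{Z}^{2m}$ on the angular coordinates and then use $c_j\in\{\pm1\}$ to conclude that the minimal positive period is $2\pi$. No gaps.
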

\begin{proof}
On the polytorus we identify $\bm{\theta}$ with $\bm{\theta}+2\pi\bm{h}$ for any $\bm{h}\in\mathbb{Z}^{2m}$. Write $\bm{\zeta}(\varphi)=\log\!\bm{R}+i\bm{\theta}(\varphi)$, so $\bm{\theta}(\varphi)=\bm{\theta}(0)+\varphi\bm{c}$ and the angular displacement is $\Delta\bm{\theta}=\varphi\bm{c}$. Hence, $\bm{\zeta}(\varphi)$ closes if and only if $\Delta\bm{\theta}\in 2\pi\mathbb{Z}^{2m}$, i.e.,
\begin{align}
    \varphi\,\bm{c} = 2\pi\,\bm{h}, \qquad \bm{h}\in\mathbb{Z}^{2m}.
    \label{app:eq:period_condition}
\end{align}
For $\bm{c}\!=\!(\bm{1}_m;-\bm{1}_m)$ we have $c_j\in\{\pm1\}$ for all $j$, so the condition implies $\varphi=2\pi h_j/c_j$ for each coordinate, hence $\varphi/2\pi\in\mathbb{Z}$. Therefore, the minimal positive period is $\varphi=2\pi$.
\end{proof}

To exploit this periodic invariance for dimension reduction, we proceed in two steps. We first rotate to an orthogonal coordinate system aligned with the invariance direction $\bm{c}$, and then choose an adapted fundamental domain so that the integration interval along $\xi_{\hat{\bm{c}}}$ (defined below) has a constant length independent of the transverse coordinate.

\begin{definition}
\label{app:def:orthogonal_decomposition}
Given the invariance vector $\bm{c} \!\defeq\! (\bm{1}_m; -\bm{1}_m)$ and its normalization $\hat{\bm{c}} \defeq \bm{c}/\|\bm{c}\|_2$, let $\matr{U}_{\bm{c}}\in\mathbb{R}^{2m\times2m}$ be an orthogonal matrix having $\hat{\bm{c}}$ as its first column and satisfying $\det(\matr{U}_{\bm{c}})=1$. We define the rotated coordinates $\bm{\xi}\in\mathbb{C}^{2m}$ via
\begin{align}
    \bm{\zeta} = \bm{\zeta}^* + \matr{U}_{\bm{c}}\cdot\bm{\xi},
    \qquad \bm{\zeta}^*\defeq \log\bm{w}.
    \label{app:eq:zeta_xi_transform}
\end{align}
\end{definition}

\begin{remark}
Such a choice is always possible: if an orthogonal matrix with first column $\hat{\bm{c}}$ has determinant $-1$, flipping the sign of any transverse column yields another valid choice with determinant $1$.
\end{remark}

\begin{definition}
\label{app:def:orthogonal_slice}
Adopt the rotated coordinates $\bm{\xi}\in\mathbb{C}^{2m}$ in~\eqref{app:eq:zeta_xi_transform} and write $\bm{\xi}=(\xi_{\hat{\bm{c}}};\bm{\xi}_\perp)$, where $\xi_{\hat{\bm{c}}}$ is the coordinate along $\hat{\bm{c}}$. We define the orthogonal slice
\begin{align}
    H_{\bm{c}}\;\defeq\;\bigl\{\bm{\zeta}\in\mathbb{C}^{2m}\bigm| \xi_{\hat{\bm{c}}}=0\bigr\}.
\end{align}
\end{definition}

\begin{lemma}
\label{app:lem:work_on_slice}
In the saddle-point analysis near $\bm{\zeta}^*$, we may, without loss of generality, carry out the local deformation and the subsequent local analysis in a coordinate system restricted to the slice $H_{\bm{c}}$.
\end{lemma}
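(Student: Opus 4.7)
The plan is to carry out the change of variables $\bm{\zeta} = \bm{\zeta}^* + \matr{U}_{\bm{c}}\bm{\xi}$ from Definition~\ref{app:def:orthogonal_decomposition} and then use the dual $\bm{c}$-invariance established in Lemmas~\ref{app:lem:log_real_invariance} and~\ref{app:lem:log_imag_continuous} to integrate out $\xi_{\hat{\bm{c}}}$ exactly, leaving a genuine $(2m-1)$-dimensional integral on the slice $H_{\bm{c}}$. Geometrically, this turns what looks like an isolated-critical-point Laplace problem into a Morse--Bott-type problem whose critical set is the full $\hat{\bm{c}}$-orbit through $\bm{\zeta}^*$; the subsequent ACSV analysis (normal contour deformation, Gaussian expansion) is then done transverse to the orbit, i.e., on $H_{\bm{c}}$.

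First, I would substitute into~\eqref{app:eq:cauchy_log} and split $\bm{\xi}=(\xi_{\hat{\bm{c}}};\bm{\xi}_\perp)$. Since $\det(\matr{U}_{\bm{c}})=1$, the Jacobian is trivial and $\d\bm{\zeta}=\d\xi_{\hat{\bm{c}}}\wedge\d\bm{\xi}_\perp$. The invariances from Lemmas~\ref{app:lem:log_real_invariance} and~\ref{app:lem:log_imag_continuous} together state that the integrand is invariant under arbitrary complex translations of $\bm{\zeta}$ along $\hat{\bm{c}}$, so in the rotated coordinates it depends only on $\bm{\xi}_\perp$. As a consequence, $Q$ itself is $\hat{\bm{c}}$-invariant, which forces $\hat{\bm{c}}$ to be tangent to $\mathcal{V}$ at $\bm{\zeta}^*$ and places the normal direction to $\mathcal{V}$ inside $H_{\bm{c}}$; the residue/Hankel-type contour deformation used in the ACSV argument therefore never needs to leave $H_{\bm{c}}$.

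By Lemma~\ref{app:lem:log_imag_periodicity} the minimal period along $\bm{c}$ is $\varphi=2\pi$, which in terms of $\xi_{\hat{\bm{c}}}$ translates to a fundamental fiber of Euclidean length $2\pi\|\bm{c}\|_2=2\pi\sqrt{2m}$. Because the integrand is constant on this fiber, Fubini yields the clean factorization
\begin{align*}
    a_{\bm{r}} \;=\; \frac{\sqrt{2m}}{(2\pi i)^{2m-1}}\int_{H_{\bm{c}}} \e^{-\bm{r}\cdot(\bm{\zeta}^*+\matr{U}_{\bm{c}}\bm{\xi}_\perp)}\,\tilde{P}\,\tilde{Q}^{-\alpha}\,\d\bm{\xi}_\perp ,
\end{align*}
absorbing one factor of $2\pi i$ from the Cauchy prefactor into the fiber integration and leaving the explicit scalar $\sqrt{2m}$ out front. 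All subsequent local analysis, namely the normal deformation along $\nabla Q$, the tangential Morse change of variables, and the Gaussian evaluation on the remaining $2m-2$ non-symmetry directions, is then performed entirely in $\bm{\xi}_\perp\in H_{\bm{c}}$, exactly as claimed.

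The main obstacle is the global bookkeeping of the fiber factor. One must verify that a fundamental domain of the angular polytorus can be chosen so that $\xi_{\hat{\bm{c}}}$ traverses a full period of length $2\pi\sqrt{2m}$ uniformly in $\bm{\xi}_\perp$, rather than only locally near $\bm{\zeta}^*$. Since $\bm{c}=(\bm{1}_m;-\bm{1}_m)$ has entries $\pm 1$ with greatest common divisor one, it is a primitive vector of the integer lattice $\mathbb{Z}^{2m}$ and therefore extends to an integer basis; this guarantees the existence of a globally adapted fundamental domain and makes the fiber factor a constant $\sqrt{2m}$ independent of $\bm{\xi}_\perp$. The emerging constant is precisely the factor $\sqrt{2m}$ appearing in the numerators of Proposition~\ref{prop:block_asymp}.
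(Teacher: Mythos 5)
Your proposal is correct and follows essentially the same route as the paper: both invoke the real and imaginary $\bm{c}$-invariances (Lemmas~\ref{app:lem:log_real_invariance} and~\ref{app:lem:log_imag_continuous}) to conclude that the integrand depends only on $\bm{\xi}_\perp$, so the local analysis may be restricted to the slice $H_{\bm{c}}$. You in fact prove more than the lemma asks for, anticipating the adapted-fundamental-domain construction and the explicit $\|\bm{c}\|_2=\sqrt{2m}$ fiber factor that the paper defers to Lemmas~\ref{app:lem:constant_interval_length} and~\ref{app:lem:dimension_reduction}.
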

\begin{proof}
By Lemma~\ref{app:lem:log_real_invariance} and Lemma~\ref{app:lem:log_imag_continuous}, moving the contour locally in the $\bm{c}$ direction in either the real part or the imaginary part does not affect the integral. Hence, after separating (and integrating out) the $\bm{c}$-direction, the remaining contribution depends only on the transverse coordinates $\bm{\xi}_\perp$, and we are free to choose the transverse slice $\xi_{\hat{\bm{c}}}=0$, i.e., $H_{\bm{c}}$, for the subsequent local analysis.
\end{proof}

\begin{lemma}
\label{app:lem:volume_factorization}
Under the orthogonal change of variables in~\eqref{app:eq:zeta_xi_transform}, the holomorphic volume form factorizes as
\begin{align}
    \d \bm{\zeta} = \d \zeta_1 \wedge \dots \wedge \d \zeta_{2m} = \d\xi_{\hat{\bm{c}}}\wedge \d\bm\xi_\perp,
    \label{app:eq:volume_form_factorization}
\end{align}
where $\d\bm\xi_\perp$ denotes the induced $(2m\!-\!1)$-form on the orthogonal slice $H_{\bm{c}}$.
\end{lemma}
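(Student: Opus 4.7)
The plan is to reduce the claim to the standard determinant formula for the pullback of a top-degree differential form under a linear map, and then to recognize the resulting split as the natural product structure between the $\hat{\bm{c}}$-direction and its orthogonal complement.

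First, I would differentiate the change of variables~\eqref{app:eq:zeta_xi_transform}. Because $\bm{\zeta}^*$ is a constant vector and $\matr{U}_{\bm{c}}$ has constant (real) entries, one obtains $\d\zeta_j = \sum_{k=1}^{2m}(\matr{U}_{\bm{c}})_{j,k}\,\d\xi_k$ for each $j\in[2m]$. Although the ambient space is $\mathbb{C}^{2m}$, the real-entry map $\matr{U}_{\bm{c}}$ extends to a $\mathbb{C}$-linear transformation and hence acts on the holomorphic differentials by exactly the same linear rule, so the usual exterior-algebra identities apply to the holomorphic forms $\d\zeta_j$ and $\d\xi_k$ without modification.

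Second, I would substitute these expressions into $\d\zeta_1\wedge\cdots\wedge\d\zeta_{2m}$ and expand using multilinearity and antisymmetry of the wedge product. The standard identity for wedge products of linear combinations gives $\d\zeta_1\wedge\cdots\wedge\d\zeta_{2m} = \det(\matr{U}_{\bm{c}})\,\d\xi_1\wedge\cdots\wedge\d\xi_{2m}$, and Definition~\ref{app:def:orthogonal_decomposition} fixes $\det(\matr{U}_{\bm{c}}) = 1$, so the Jacobian factor is unity. Finally, since $\hat{\bm{c}}$ sits in the first column of $\matr{U}_{\bm{c}}$, the first rotated coordinate is $\xi_1 = \xi_{\hat{\bm{c}}}$ and the remaining coordinates $(\xi_2;\ldots;\xi_{2m})$ parametrize the transverse slice $H_{\bm{c}}$ of Definition~\ref{app:def:orthogonal_slice}. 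Setting $\d\bm{\xi}_\perp \defeq \d\xi_2\wedge\cdots\wedge\d\xi_{2m}$ then identifies the remaining wedge as the induced $(2m-1)$-form on $H_{\bm{c}}$ with its inherited orientation, yielding~\eqref{app:eq:volume_form_factorization}.

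No substantive obstacle arises here: the argument is essentially a one-line application of the change-of-variables formula for top-degree wedge products, and the orientation convention $\det(\matr{U}_{\bm{c}}) = 1$ baked into Definition~\ref{app:def:orthogonal_decomposition} is precisely what ensures that the Jacobian sign does not intrude. The only point requiring a sentence of care is notational, namely that ``$\d\bm{\xi}_\perp$'' must be read as the holomorphic $(2m-1)$-form $\d\xi_2\wedge\cdots\wedge\d\xi_{2m}$ on $H_{\bm{c}}$ induced by the orientation inherited from $\matr{U}_{\bm{c}}$; this is the convention that will be used without further comment in the subsequent Gaussian integration over the tangent directions.
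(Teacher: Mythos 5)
Your proposal is correct and follows essentially the same route as the paper: both apply the determinant rule for the pullback of the top-degree holomorphic form under the linear map $\matr{U}_{\bm{c}}$, invoke the convention $\det(\matr{U}_{\bm{c}})=1$ from Definition~\ref{app:def:orthogonal_decomposition} to kill the Jacobian factor, and then split off the first coordinate $\xi_{\hat{\bm{c}}}$ to obtain $\d\xi_{\hat{\bm{c}}}\wedge\d\bm{\xi}_\perp$. Your extra remarks on the $\mathbb{C}$-linear extension of the real matrix and the orientation convention for $\d\bm{\xi}_\perp$ are harmless elaborations of what the paper leaves implicit.
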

\begin{proof}
The transformation in~\eqref{app:eq:zeta_xi_transform} is linear with Jacobian matrix $\matr{U}_{\bm{c}}$. The holomorphic volume form transforms by the determinant, and the convention $\det(\matr{U}_{\bm{c}})\!=\!1$ in Definition~\ref{app:def:orthogonal_decomposition} implies $\d\zeta_1\wedge\cdots\wedge\d\zeta_{2m}=\d\xi_1\wedge\cdots\wedge\d\xi_{2m}$. Writing $\bm{\xi}=(\xi_{\hat{\bm{c}}};\bm{\xi}_{\perp})$ yields~\eqref{app:eq:volume_form_factorization}, allowing the volume element to be decomposed into the invariance direction $\d\xi_{\hat{\bm{c}}}$ and the transverse volume form $\d\bm{\xi}_{\perp}$.
\end{proof}

\begin{remark}
Since $\matr{U}_{\bm{c}}\in\mathbb{R}^{2m\times 2m}$ is real, the change of variables~\eqref{app:eq:zeta_xi_transform} acts separately on the real and imaginary parts, i.e., $\mathrm{Re}(\bm{\zeta}) = \mathrm{Re}(\bm{\zeta}^*) \!+\! \matr{U}_{\bm{c}}\cdot\mathrm{Re}(\bm{\xi})$ and $\mathrm{Im}(\bm{\zeta}) = \mathrm{Im}(\bm{\zeta}^*) \!+\! \matr{U}_{\bm{c}}\cdot\mathrm{Im}(\bm{\xi})$. Hence, a fixed polytorus (with constant radii $\mathrm{Re}(\bm{\zeta}) = \log\!\bm{R}$) corresponds to a strictly constant real slice $\mathrm{Re}(\bm{\xi}) = \matr{U}_{\bm{c}}^\transp\cdot(\log\!\bm{R} \!-\! \mathrm{Re}(\bm{\zeta}^*))$. In what follows, we keep this real slice fixed and only adjust the angular (imaginary) fundamental domain.
\end{remark}

However, the standard fundamental domain, i.e., the hypercube $\bm{\theta} \in [0,2\pi)^{2m}$ within the imaginary subspace, is misaligned with the new invariance axis $\xi_{\hat{\bm{c}}}$. Consequently, when the domain is expressed in the rotated coordinates $\bm{\xi}$, the line segment in the $\xi_{\hat{\bm{c}}}$-direction contained in the domain has a length that depends on the transverse coordinates $\bm{\xi}_{\perp}$. To separate the integration and obtain a constant interval length along $\xi_{\hat{\bm{c}}}$, we leverage the lattice periodicity to select an adapted fundamental domain.

\begin{definition}
\label{app:def:adapted_domain}
Let $\mathcal{D}_{\bm{c}}$ denote a fundamental domain for the angular lattice $2\pi \mathbb{Z}^{2m}$ in $\bm{\theta}$. We choose $\mathcal{D}_{\bm{c}}$ as a fundamental parallelotope such that one of its spanning lattice vectors is the primitive period $2\pi \bm{c}$ established in Lemma~\ref{app:lem:log_imag_periodicity}. We denote by $\mathcal{C}_{\bm{c}}$ the $(2m\!-\!1)$-dimensional cycle in $H_{\bm{c}}$ obtained from the transverse coordinates $\bm{\xi}_\perp$ induced by $\mathcal{D}_{\bm{c}}$.
\end{definition}

\begin{remark}
Since $\bm{c}$ is a primitive integer vector, it can be extended to a basis of $\mathbb{Z}^{2m}$, or equivalently completed to a unimodular matrix (see~\cite[Chapter~1, Corollary~4]{cassels1997geometry}). We denote by $\mathcal{D}_{\bm{c}}$ the corresponding fundamental parallelotope.
\end{remark}

At this point, the reader might want to skim through the upcoming Example~\ref{app:ex:fundamental_domain_m1}, which visualizes the geometric setup for the special case $m\!=\!1$.

\begin{remark}
The set $\mathcal{D}_{\bm{c}}$ is a possible, convenient choice of a fundamental parallelotope for the angular lattice; its transverse edges need not align with the orthogonal axes $\bm{\xi}_\perp$. Orthogonality is only used to factorize the differential form, whereas the parallelotope is chosen to yield a constant interval length along $\xi_{\hat{\bm{c}}}$.
\end{remark}

\begin{lemma}
\label{app:lem:constant_interval_length}
In the orthogonal coordinates $\bm{\xi}=(\xi_{\hat{\bm{c}}};\bm{\xi}_\perp)$, the integration interval along $\xi_{\hat{\bm{c}}}$ over the adapted domain $\mathcal{D}_{\bm{c}}$ has a constant length of $2\pi\|\bm{c}\|_2$ for any fixed transverse coordinate.
\end{lemma}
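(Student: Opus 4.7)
The plan is to exploit the adapted structure of $\mathcal{D}_{\bm{c}}$: by Definition~\ref{app:def:adapted_domain}, one of its spanning lattice vectors is $2\pi\bm{c}$, which under the rotation $\matr{U}_{\bm{c}}^\transp$ maps purely into the $\xi_{\hat{\bm{c}}}$-direction. Accordingly, moving along this edge changes only $\xi_{\hat{\bm{c}}}$ while leaving $\bm{\xi}_\perp$ fixed, so once the transverse position is fixed, the $\xi_{\hat{\bm{c}}}$-interval is precisely the span of that single edge.

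First I would parametrize $\mathcal{D}_{\bm{c}}$ using the integer basis $\{\bm{c},\bm{v}_2,\ldots,\bm{v}_{2m}\}$ of $\mathbb{Z}^{2m}$ guaranteed by the remark after Definition~\ref{app:def:adapted_domain}, writing each $\bm{\theta}\in\mathcal{D}_{\bm{c}}$ uniquely as $\bm{\theta}=\bm{\theta}_0+2\pi\alpha_1\bm{c}+\sum_{j=2}^{2m} 2\pi\alpha_j\bm{v}_j$ with $(\alpha_1;\ldots;\alpha_{2m})\in[0,1)^{2m}$. Applying the rotation and using $\matr{U}_{\bm{c}}^\transp\bm{c}=\|\bm{c}\|_2\bm{e}_1$, the rotated imaginary parts separate as
\begin{align*}
    \mathrm{Im}(\xi_{\hat{\bm{c}}})
    &= c_0 + 2\pi\|\bm{c}\|_2\,\alpha_1 + \sum_{j=2}^{2m} 2\pi\alpha_j\,\hat{\bm{c}}^\transp\bm{v}_j,\\
    \mathrm{Im}(\bm{\xi}_\perp)
    &= \bm{c}_\perp + \sum_{j=2}^{2m} 2\pi\alpha_j\,\matr{P}_\perp\bm{v}_j,
\end{align*}
where $\matr{P}_\perp$ is the orthogonal projection onto $\{\hat{\bm{c}}\}^\perp$, and $c_0,\bm{c}_\perp$ are constants depending on $\bm{\zeta}^*$ and $\bm{\theta}_0$.

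Finally, I would fix $\bm{\xi}_\perp$ and extract the interval. Because $\matr{P}_\perp\bm{c}=\bm{0}$, the transverse expression does not involve $\alpha_1$; moreover, $\{\matr{P}_\perp\bm{v}_j\}_{j=2}^{2m}$ is a basis of the $(2m\!-\!1)$-dimensional subspace $\{\hat{\bm{c}}\}^\perp$, since any nontrivial relation $\sum_{j\ge 2}\beta_j\matr{P}_\perp\bm{v}_j=\bm{0}$ would force $\sum_{j\ge 2}\beta_j\bm{v}_j$ to lie in $\mathrm{span}\{\bm{c}\}$, contradicting the basis property. Hence fixing $\bm{\xi}_\perp$ uniquely determines $(\alpha_2,\ldots,\alpha_{2m})\in[0,1)^{2m-1}$, while $\alpha_1$ remains free over $[0,1)$. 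Since the coefficient of $\alpha_1$ in $\mathrm{Im}(\xi_{\hat{\bm{c}}})$ is $2\pi\|\bm{c}\|_2$, the resulting interval in $\xi_{\hat{\bm{c}}}$ has length $2\pi\|\bm{c}\|_2=2\pi\sqrt{2m}$, independent of $\bm{\xi}_\perp$, which matches the $\sqrt{2m}$ factor already visible in Proposition~\ref{prop:block_asymp}. The only real obstacle is this linear-algebra verification, and it reduces to a short consequence of the basis property supplied by the remark after Definition~\ref{app:def:adapted_domain}.
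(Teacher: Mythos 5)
Your proof is correct and follows essentially the same approach as the paper: both rest on the fact that $\mathcal{D}_{\bm{c}}$ has $2\pi\bm{c}$ as a spanning edge, so the slice at fixed transverse coordinate is a translate of that edge, whose projection onto $\hat{\bm{c}}$ is $2\pi\bm{c}\cdot\hat{\bm{c}}=2\pi\|\bm{c}\|_2$. Your explicit parametrization via the integer basis $\{\bm{c},\bm{v}_2,\dots,\bm{v}_{2m}\}$ and the linear-independence check of $\{\matr{P}_\perp\bm{v}_j\}$ actually fills in a detail the paper's shorter geometric argument (``any affine line parallel to $\bm{c}$ intersects $\mathcal{D}_{\bm{c}}$ in a translate of that edge'') simply asserts.
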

\begin{proof}
By construction, the parallelotope $\mathcal{D}_{\bm{c}}$ has $2\pi\bm{c}$ as one of its spanning vectors. Since $\mathcal{D}_{\bm{c}}$ is a parallelotope with $2\pi\bm{c}$ as one spanning edge, any affine line parallel to $\bm{c}$ intersects $\mathcal{D}_{\bm{c}}$ in a translate of that edge; hence the resulting one-dimensional intersection is always parallel to $\bm{c}$ and has the same length as $2\pi\bm{c}$. Consequently, the length of the interval along the normalized direction $\hat{\bm{c}}$ is exactly the orthogonal projection of this spanning vector onto $\hat{\bm{c}}$, yielding $2\pi\bm{c} \cdot \hat{\bm{c}} \!=\! 2\pi\|\bm{c}\|_2$.
\end{proof}

\begin{definition}
\label{app:def:deff}
Define the effective dimension by
\begin{align}
    d_{\mathrm{eff}} \defeq 2m-1.
\end{align}
\end{definition}

\begin{lemma}
\label{app:lem:dimension_reduction}
Adopting the orthogonal coordinates from Definition~\ref{app:def:orthogonal_decomposition}, the $2m$-dimensional coefficient integral reduces to an integral over the $d_{\mathrm{eff}}$-dimensional cycle $\mathcal{C}_{\bm{c}}$ on the orthogonal slice $H_{\bm{c}}$ as
\begin{align}
    a_{\bm{r}} = \frac{\|\bm{c}\|_2}{(2\pi i)^{d_{\mathrm{eff}}}} \int_{\mathcal{C}_{\bm{c}}} \e^{-\bm{r}\cdot\bm{\zeta}}\,\tilde{P}(\bm{\zeta})\,\tilde{Q}(\bm{\zeta})^{-\alpha}\, \d\bm{\xi}_{\perp},
    \label{app:eq:reduced_integral}
\end{align}
where $\bm{\zeta}$ is understood as the restriction $\bm{\zeta}=\bm{\zeta}(\bm{\xi}_\perp)$ on the slice $\xi_{\hat{\bm{c}}}=0$, i.e., via the natural embedding of the slice $H_{\bm{c}}$ into the ambient log-space.
\end{lemma}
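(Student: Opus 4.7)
The plan is to exploit the invariance of the integrand along the $\bm{c}$-direction, established in Lemmas~\ref{app:lem:log_real_invariance}--\ref{app:lem:log_imag_continuous}, to strip one dimension off the Cauchy integral in Lemma~\ref{app:lem:log_integral_form}, while keeping careful track of the geometric factor $\|\bm{c}\|_2$ that emerges from the passage to a lattice basis adapted to $\bm{c}$.

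First, I would replace the standard angular fundamental domain $\bm{\theta}\in[0,2\pi)^{2m}$ in~\eqref{app:eq:cauchy_log} by the adapted parallelotope $\mathcal{D}_{\bm{c}}$ of Definition~\ref{app:def:adapted_domain}. Both are fundamental domains for the lattice $2\pi\mathbb{Z}^{2m}$, and the integrand is $2\pi\mathbb{Z}^{2m}$-periodic in $\bm{\theta}$ because $\bm{z}=\e^{\bm{\zeta}}$ is invariant under such shifts; hence the integral value is unchanged.

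Next, I would apply the orthogonal change of variables of Definition~\ref{app:def:orthogonal_decomposition} and invoke Lemma~\ref{app:lem:volume_factorization} to write $\d\bm{\zeta}=\d\xi_{\hat{\bm{c}}}\wedge\d\bm{\xi}_\perp$ without any Jacobian correction, since $\det(\matr{U}_{\bm{c}})=1$. Fubini's theorem then splits the integral into an inner integral along $\xi_{\hat{\bm{c}}}$ and an outer integral over the transverse cycle $\mathcal{C}_{\bm{c}}$. Because the real slice is fixed by the choice of radii, $\xi_{\hat{\bm{c}}}$ is purely imaginary along the contour, so $\d\xi_{\hat{\bm{c}}}=i\,\d\mathrm{Im}(\xi_{\hat{\bm{c}}})$; by Lemmas~\ref{app:lem:log_real_invariance}--\ref{app:lem:log_imag_continuous}, the integrand is independent of $\xi_{\hat{\bm{c}}}$ and may be pulled out of the inner integration.

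Using Lemma~\ref{app:lem:constant_interval_length}, the $\xi_{\hat{\bm{c}}}$-fiber above each $\bm{\xi}_\perp$ has the constant imaginary length $2\pi\|\bm{c}\|_2$, so the inner integral evaluates to the constant $2\pi i\,\|\bm{c}\|_2$, which factors out. Combining this with the original prefactor $1/(2\pi i)^{2m}$ yields $\|\bm{c}\|_2/(2\pi i)^{2m-1}=\|\bm{c}\|_2/(2\pi i)^{d_{\mathrm{eff}}}$, reproducing exactly~\eqref{app:eq:reduced_integral} with the restriction $\bm{\zeta}=\bm{\zeta}(\bm{\xi}_\perp)$ coming from setting $\xi_{\hat{\bm{c}}}=0$.

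The main technical subtlety I would flag is making sure the $\xi_{\hat{\bm{c}}}$-fiber length is truly independent of $\bm{\xi}_\perp$, since this is what legitimizes pulling the constant $2\pi i\,\|\bm{c}\|_2$ out of the outer integral. This uniformity hinges on $\mathcal{D}_{\bm{c}}$ being a parallelotope having $2\pi\bm{c}$ as one of its spanning edges, which in turn requires $\bm{c}=(\bm{1}_m;-\bm{1}_m)$ to be completable to a unimodular basis of $\mathbb{Z}^{2m}$; this is exactly the primitivity point flagged in the remark following Definition~\ref{app:def:adapted_domain}. With that in hand, all the moving parts are already available from the preceding lemmas and the proof is essentially an application of Fubini with a carefully chosen fundamental domain.
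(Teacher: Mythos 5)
Your proposal is correct and follows essentially the same route as the paper's proof: pass to the adapted fundamental domain $\mathcal{D}_{\bm{c}}$, factor the volume form via Lemma~\ref{app:lem:volume_factorization}, use the $\bm{c}$-direction invariance (Lemmas~\ref{app:lem:log_real_invariance}--\ref{app:lem:log_imag_continuous}) with Fubini to pull the integrand out of the inner integral, and evaluate that inner integral as $2\pi i\,\|\bm{c}\|_2$ using the constant fiber length from Lemma~\ref{app:lem:constant_interval_length}. The only cosmetic difference is that the paper computes the inner integral via the explicit $\varphi$-parametrization $\d\xi_{\hat{\bm{c}}}=i\|\bm{c}\|_2\,\d\varphi$, whereas you cite the fiber-length lemma directly; both yield the same constant.
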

\begin{proof}
We evaluate the integral over the adapted fundamental domain $\mathcal{D}_{\bm{c}}$. Using the factorization of the volume form from Lemma~\ref{app:lem:volume_factorization} and Fubini's theorem, we decompose the integral into the invariance direction $\xi_{\hat{\bm{c}}}$ and the transverse directions $\bm{\xi}_\perp$.

By Lemma~\ref{app:lem:log_imag_continuous}, the entire integrand is constant along the coordinate $\xi_{\hat{\bm{c}}}$. To evaluate the inner integral rigorously, we parameterize the integration path using $\varphi \in [0, 2\pi)$ defined in Lemma~\ref{app:lem:log_imag_periodicity}. Recall from Definition~\ref{app:def:orthogonal_decomposition} that the orthogonal coordinate is given by $\xi_{\hat{\bm{c}}} = \hat{\bm{c}}^\transp (\bm{\zeta} \!-\! \bm{\zeta}^*)$. Since the critical point $\bm{\zeta}^*$ is a constant vector, taking the differential with respect to the integration variables yields $\d\xi_{\hat{\bm{c}}} = \hat{\bm{c}}^\transp \d\bm{\zeta}$. Along this invariance translation, the shift is strictly parameterized as $\d\bm{\zeta} = i\bm{c}\,\d\varphi$, which gives
\begin{align}
    \d\xi_{\hat{\bm{c}}} = \hat{\bm{c}}^\transp (i\bm{c}\,\d\varphi) = i (\hat{\bm{c}}^\transp\bm{c})\,\d\varphi = i\|\bm{c}\|_2\,\d\varphi.
\end{align}
Consequently, the inner integral evaluates to
\begin{align}
    \int_{0}^{2\pi} \d\xi_{\hat{\bm{c}}} = \int_{0}^{2\pi} i\|\bm{c}\|_2\,\d\varphi = 2\pi i\|\bm{c}\|_2.
\end{align}
Multiplying the restricted integrand by this constant factor yields
\begin{align}
    a_{\bm{r}} &= \frac{1}{(2\pi i)^{2m}} \cdot (2\pi i \|\bm{c}\|_2) \int_{\mathcal{C}_{\bm{c}}} \e^{-\bm{r}\cdot\bm{\zeta}}\,\tilde{P}(\bm{\zeta})\,\tilde{Q}(\bm{\zeta})^{-\alpha}\, \d\bm{\xi}_{\perp} \nonumber \\
    &= \frac{\|\bm{c}\|_2}{(2\pi i)^{2m-1}} \int_{\mathcal{C}_{\bm{c}}} \e^{-\bm{r}\cdot\bm{\zeta}}\,\tilde{P}(\bm{\zeta})\,\tilde{Q}(\bm{\zeta})^{-\alpha}\, \d\bm{\xi}_{\perp}.
\end{align}
This concludes the proof.
\end{proof}

We illustrate the dimension reduction mechanism using a concrete $2$-dimensional example ($m\!=\!1$) as follows.

\begin{example}
\label{app:ex:fundamental_domain_m1}
Consider the coefficient integral over the polytorus in logarithmic coordinates $\bm{\zeta}\!=\!(\zeta_t;\zeta_u)\!\in\!\mathbb{C}^2$, which relate to the original variables $\bm{z}\!=\!(t;u)$ via $t\!=\!\e^{\zeta_t}$ and $u\!=\!\e^{\zeta_u}$. Fix a point $\bm{\zeta}(0)$ on the logarithmic parametrization of the polytorus and define
\begin{align}
    \bm{\zeta}(\varphi)\;\defeq\;\bm{\zeta}(0)+i\varphi\,\bm{c},
    \qquad
    \bm{c}\defeq(1;-1),
    \qquad
    \varphi\in\mathbb{R}.
\end{align}
Assume the integrand $f(\bm{\zeta})$ is invariant under varying $\varphi$, i.e., $f(\bm{\zeta}(\varphi))$ is constant in $\varphi$.

On the flat torus of angular variables $(\theta_t;\theta_u)=(\mathrm{Im}(\zeta_t);\mathrm{Im}(\zeta_u))$ (mod $2\pi$), varying $\varphi$ moves $(\theta_t;\theta_u)$ along the direction $\bm{c}$, as illustrated in Fig.~\ref{app:fig:fundamental_domain}.

Define the unit direction
\begin{align}
    \hat{\bm{c}}\;\defeq\;\frac{\bm{c}}{\|\bm{c}\|_2}
    \;=\;\frac{1}{\sqrt{2}}(1;-1).
\end{align}
Without loss of generality, we set the critical point to be $\bm{\zeta}^* \!=\! \bm{0}$, and introduce the orthogonal change of variables
\begin{align}
    \matr{U}_{\bm{c}}\;\defeq\;
    \begin{pmatrix}
        \frac{1}{\sqrt{2}} & \frac{1}{\sqrt{2}}\\[2pt]
        -\frac{1}{\sqrt{2}} & \frac{1}{\sqrt{2}}
    \end{pmatrix},
    \qquad
    \begin{pmatrix}
        \zeta_t\\[2pt]
        \zeta_u
    \end{pmatrix}
    \;=\;
    \matr{U}_{\bm{c}}\cdot
    \begin{pmatrix}
        \xi_{\hat{\bm{c}}}\\[2pt]
        \xi_{\perp}
    \end{pmatrix}.
    \label{eq:ex_xi_norm_def_revised}
\end{align}
Then $\d\zeta_t\wedge\d\zeta_u=\d\xi_{\hat{\bm{c}}}\wedge\d\xi_{\perp}$.

The key point is how the choice of fundamental domain affects the interval in the invariance direction $\bm{c}$ (equivalently, along the $\xi_{\hat{\bm{c}}}$-axis). In the standard domain $[0,1)^2$ (blue in Fig.~\ref{app:fig:fundamental_domain}), the direction $\bm{c}\!=\!(1;-1)$ is not aligned with the domain edges. As a result, for different transverse positions, the interval parallel to $\bm{c}$ contained in the domain has a length that depends on the transverse position. Consequently, after switching to the rotated coordinate system $(\xi_{\hat{\bm{c}}};\xi_{\perp})$, the integration interval along $\xi_{\hat{\bm{c}}}$ has a length that depends on the transverse coordinate.

By choosing instead an invariance-aligned fundamental domain $\mathcal{D}_{\bm{c}}$ (red in Fig.~\ref{app:fig:fundamental_domain}) with one of the edges parallel to $\bm{c}$, the interval parallel to $\bm{c}$ has a constant length inside $\mathcal{D}_{\bm{c}}$, independent of the transverse position. In the rotated coordinate $\xi_{\hat{\bm{c}}}$, this common length is
\begin{align}
    \Delta\xi_{\hat{\bm{c}}}=2\pi i\,\|\bm{c}\|_2=2\pi i\sqrt{2},
\end{align}
which is exactly the factor integrated out in Lemma~\ref{app:lem:dimension_reduction}. Consequently,
\begin{align}
    & \frac{1}{(2\pi i)^2}\oint_{\mathcal{C}} f(\bm{\zeta})\,\d\zeta_t\wedge\d\zeta_u \nonumber\\
    & \;=\;
    \frac{1}{(2\pi i)^2}\Bigl(2\pi i\|\bm{c}\|_2\Bigr)
    \int_{\mathcal{C}_{\bm{c}}} f(\bm{\zeta}(\xi_{\perp}))\,\d\xi_{\perp} \nonumber\\
    & \;=\;
    \frac{\sqrt{2}}{2\pi i}\int_{\mathcal{C}_{\bm{c}}} f(\bm{\zeta}(\xi_{\perp}))\,\d\xi_{\perp}.
\end{align}
Note that $\bm{\zeta}(\varphi)$ is the parametrization along the invariance direction, while $\bm{\zeta}(\xi_{\perp})$ denotes the restriction $\bm{\zeta}\!=\!\bm{\zeta}(\bm{\xi}_{\perp})$ on the slice $\xi_{\hat{\bm{c}}}\!=\!0$ in Lemma~\ref{app:lem:dimension_reduction}.
\end{example}

\begin{figure}[t]
    \centering
    \includegraphics[width=0.8\linewidth]{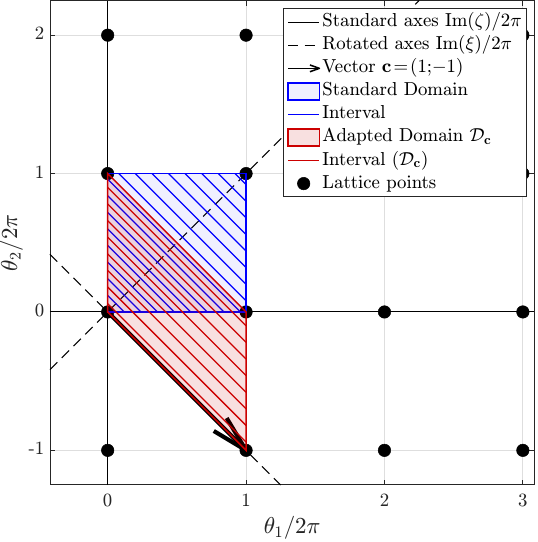} 
    \caption{An invariance-aligned fundamental domain on the flat torus ($m\!=\!1$). The plot is in the normalized angular coordinates $(\theta_t/2\pi,\theta_u/2\pi)$. The blue square is the standard fundamental domain, and the red parallelogram is an alternative choice $\mathcal{D}_{\bm{c}}$ aligned with $\bm{c}\!=\!(1;-1)$. Solid and dashed black lines indicate the standard and rotated axes, respectively.}
    \label{app:fig:fundamental_domain}
\end{figure}

\subsection{Local Geometry and Tangential Decomposition}
\label{app:subsec:local_geometry}

In order to evaluate the reduced integral in~\eqref{app:eq:reduced_integral}, we introduce local coordinates on the slice $H_{\bm{c}}$ near the critical point $\bm{\zeta}^*\!=\!\log\bm{w}$.

\begin{remark}
\label{rem:local_nature}
In this subsection we work only in a neighborhood of the dominant critical point $\bm{\zeta}^*$. This is sufficient for the leading asymptotics: the saddle-point contribution localizes to this neighborhood, so the global shape of the reduced cycle $\mathcal{C}_{\bm{c}}$ away from $\bm{\zeta}^*$ does not affect the main term. Our goal is therefore to introduce local coordinates on the slice $H_{\bm{c}}$ and to express the reduced integral in these coordinates.
\end{remark}

\begin{definition}
\label{app:def:normal_direction}
Define the logarithmic normal direction at the critical point by
\begin{align}
    \bm{v} \;\defeq\; \nabla_{\log}Q(\bm{w}).
    \label{app:eq:vdef}
\end{align}
\end{definition}

\begin{lemma}
\label{app:lem:local_basis}
The logarithmic normal direction $\bm{v}$ lies in the orthogonal slice $H_{\bm{c}}$ in Definition~\ref{app:def:orthogonal_slice}. There exists a matrix $\matr{V}$ of size $2m\times(2m\!-\!2)$ with orthonormal columns satisfying $\matr{V}^\transp \bm{v}=\bm{0}$ and being such that every $\bm\zeta\in H_{\bm{c}}$ in a neighborhood of $\bm\zeta^*$ can be written uniquely as
\begin{align}
    \bm{\zeta}=\bm{\zeta}^*+\omega\,\bm{v}+\matr{V}\,\bm{b}_{\mathrm{p}},
    \label{app:eq:zeta_decomposition}
\end{align}
for some $\omega\in\mathbb{C}$ and $\bm{b}_{\mathrm{p}}\in\mathbb{C}^{d_{\mathrm{eff}}-1}$.\footnote{We emphasize that the coordinate vector $\bm{b}_{\mathrm{p}}$ denotes local tangential parameters and is distinct from the entries $b_{i,j}$ of the base matrix $\matr{B}$ defined in Assumption~\ref{assump:general_block}.}
\end{lemma}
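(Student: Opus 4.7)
The plan is to establish the two claims in order: first that $\bm{v}$ lies in the slice $H_{\bm{c}}$ by translating the scaling invariance of $\matr{S}$ into an Euler-type identity in logarithmic coordinates, and then to construct $\matr{V}$ by elementary linear algebra on the resulting orthogonal decomposition.

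For the first claim, I would exploit Lemma~\ref{app:lem:scaling_invariance}: the multiplicative shift $(\bm{t},\bm{u}) \mapsto (a\bm{t}, a^{-1}\bm{u})$ leaves $\matr{S}$, and hence $\lambda_1$ and $Q$, invariant. Setting $a = \e^s$ and passing to logarithmic coordinates, this invariance becomes the additive statement $\tilde{Q}(\bm{\zeta} + s\bm{c}) = \tilde{Q}(\bm{\zeta})$ for all real $s$, with $\bm{c} = (\bm{1}_m; -\bm{1}_m)$; analyticity extends the identity to $s \in \mathbb{C}$. Differentiating at $s = 0$ and evaluating at $\bm{\zeta}^* = \log \bm{w}$, together with the chain-rule identity $\nabla_{\bm{\zeta}}\tilde{Q}(\bm{\zeta}^*) = \nabla_{\log}Q(\bm{w}) = \bm{v}$, yields $\bm{c}^\transp \bm{v} = 0$. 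Since the slice $H_{\bm{c}}$ is, in its linear part, precisely the orthogonal complement of $\bm{c}$, this shows that $\bm{v}$ lies in $H_{\bm{c}}$.

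For the second claim, Lemma~\ref{app:lem:variety_geometry} guarantees that $\bm{v}\neq \bm{0}$, and since the entries of $\matr{S}$ at $\bm{w} \in \mathbb{R}_{>0}^{2m}$ are real, the gradient $\bm{v}$ is a real vector. Consequently, $\{\bm{c}/\|\bm{c}\|_2,\ \bm{v}/\|\bm{v}\|_2\}$ is a pair of orthonormal real vectors, and can be completed by Gram--Schmidt to an orthonormal basis of $\mathbb{R}^{2m}$. Collecting the remaining $2m-2$ basis vectors as the columns of $\matr{V}$ yields a matrix of size $2m \times (2m-2)$ with orthonormal columns satisfying $\matr{V}^\transp \bm{v} = \bm{0}$ and $\matr{V}^\transp \bm{c} = \bm{0}$. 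The orthogonal complement of $\bm{c}$ in $\mathbb{C}^{2m}$ then splits as the direct sum $\mathrm{span}_{\mathbb{C}}\{\bm{v}\} \oplus \mathrm{range}(\matr{V})$. For any $\bm{\zeta} \in H_{\bm{c}}$ the displacement $\bm{\zeta} - \bm{\zeta}^*$ lies in this complement and therefore admits the unique decomposition $\bm{\zeta} - \bm{\zeta}^* = \omega\,\bm{v} + \matr{V}\,\bm{b}_{\mathrm{p}}$, with $\omega = \bm{v}^\transp(\bm{\zeta} - \bm{\zeta}^*)/\|\bm{v}\|_2^2$ and $\bm{b}_{\mathrm{p}} = \matr{V}^\transp(\bm{\zeta} - \bm{\zeta}^*)$.

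I do not anticipate a serious obstacle here; the only nontrivial input is the orthogonality $\bm{c}^\transp \bm{v} = 0$, which is an immediate consequence of the scaling invariance already encoded in Definition~\ref{def:S-general}. The ``neighborhood'' qualifier in the statement plays no role in this linear argument and is retained only because $(\omega, \bm{b}_{\mathrm{p}})$ will subsequently be used as local coordinates parameterizing a normal/tangential decomposition relative to the singular variety in the saddle-point analysis that follows.
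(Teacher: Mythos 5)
Your proof is correct, and its only substantive departure from the paper's argument is in how the orthogonality $\bm{c}^\transp\bm{v}=0$ is obtained. The paper deduces it from the criticality condition of Lemma~\ref{app:lem:critical_point}: since $\bm{r}$ is parallel to $\bm{v}=\nabla_{\log}Q(\bm{w})$ and admissible partitionings satisfy $\bm{c}\cdot\bm{r}=0$, the orthogonality follows at once. You instead differentiate the scaling invariance of Lemma~\ref{app:lem:scaling_invariance} in logarithmic coordinates, obtaining the Euler-type identity $\bm{c}^\transp\nabla\tilde{Q}(\bm{\zeta})=0$ at every point of the positive orthant and hence at $\bm{\zeta}^*$. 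Your route is slightly more self-contained: it does not presuppose that $\bm{w}$ solves the criticality system for the particular direction $\bm{r}$, and it makes transparent why that system is compatible with the $\bm{c}$-degeneracy in the first place (the paper's route is shorter only because Lemma~\ref{app:lem:critical_point} is already in place). The second half of your argument --- completing $\{\hat{\bm{c}},\bm{v}/\|\bm{v}\|_2\}$ to a real orthonormal basis, taking the remaining $2m-2$ columns as $\matr{V}$, and reading off uniqueness from the direct-sum decomposition of the linear part of $H_{\bm{c}}$ --- coincides with the paper's construction; the explicit formulas $\omega=\bm{v}^\transp(\bm{\zeta}-\bm{\zeta}^*)/\|\bm{v}\|_2^2$ and $\bm{b}_{\mathrm{p}}=\matr{V}^\transp(\bm{\zeta}-\bm{\zeta}^*)$ are a useful addition, and you are right that the ``neighborhood'' qualifier plays no role in this purely linear statement.
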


\begin{proof}
From the criticality condition~\eqref{app:eq:critical_eq}, $\bm{r}$ is parallel to $\nabla_{\log}Q(\bm{w})$, or, equivalently, $\bm{r}$ is parallel to $\bm{v}$. Since admissible partitionings satisfy $\bm{c}\cdot\bm{r}=0$, we have $\bm{c}\cdot\bm{v}=0$, hence $\bm{v}\in H_{\bm{c}}$. The orthogonal complement of $\mathrm{span}\{\bm{v}\}$ inside $H_{\bm{c}}$ has dimension $d_{\mathrm{eff}}-1$, so we may choose an orthonormal basis $\{\bm{v}^{(2)},\dots,\bm{v}^{(d_{\mathrm{eff}})}\}$ for it and set $\matr{V} \defeq [\bm{v}^{(2)},\dots,\bm{v}^{(d_{\mathrm{eff}})}]$. This yields \eqref{app:eq:zeta_decomposition}. Uniqueness of the decomposition in~\eqref{app:eq:zeta_decomposition} follows because $\{\bm{v},\bm{v}^{(2)},\dots,\bm{v}^{(d_{\mathrm{eff}})}\}$ is a basis of $H_{\bm{c}}$.
\end{proof}

\begin{definition}
\label{app:def:full_hessian}
Let $\matr{H}_{\bm{\zeta}^*}$ be the full Hessian matrix of the transformed singular function $\tilde{Q}$ at the critical point $\bm{\zeta}^*$, i.e.,
\begin{align}
    \matr{H}_{\bm{\zeta}^*} \defeq \nabla^2 \tilde{Q}(\bm{\zeta}^*).
\end{align}
\end{definition}

\begin{lemma}
\label{app:lem:local_expansion}
In the scaling regime $\omega = O(\|\bm{b}_{\mathrm{p}}\|_2^2)$, which characterizes the zero set $\bigl\{\bm{\zeta}\in\mathbb{C}^{2m} \bigm| \tilde{Q}(\bm{\zeta})=0\bigr\}$ near $\bm{\zeta}^*$, $\tilde{Q}(\bm{\zeta})$ admits the simplified asymptotic expansion
\begin{align}
    \tilde{Q}(\bm{\zeta}) = \omega\|\bm{v}\|_2^2 + \frac{1}{2}\bm{b}_{\mathrm{p}}^\transp \matr{H}_{\mathrm{p}} \bm{b}_{\mathrm{p}} + O(\|\bm{b}_{\mathrm{p}}\|_2^3),
    \label{app:eq:Q_full_expansion}
\end{align}
where $\matr{H}_{\mathrm{p}} \defeq \matr{V}^\transp \matr{H}_{\bm{\zeta}^*} \matr{V}$ is the tangential Hessian matrix.
\end{lemma}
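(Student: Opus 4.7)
The plan is to Taylor-expand $\tilde{Q}$ about the critical point $\bm{\zeta}^*$ in the adapted coordinates from Lemma~\ref{app:lem:local_basis}. Three ingredients at $\bm{\zeta}^*$ do the real work: (i) $\tilde{Q}(\bm{\zeta}^*) = Q(\bm{w}) = 0$ since $\bm{w}\in\mathcal{V}$; (ii) $\nabla\tilde{Q}(\bm{\zeta}^*) = \nabla_{\log}Q(\bm{w}) = \bm{v}$ by Definition~\ref{app:def:normal_direction}; and (iii) $\matr{V}^\transp\bm{v} = \bm{0}$ by Lemma~\ref{app:lem:local_basis}. The smoothness of $\mathcal{V}$ at $\bm{w}$ established in Lemma~\ref{app:lem:variety_geometry} also guarantees $\bm{v}\neq\bm{0}$, so $\|\bm{v}\|_2^2 > 0$, and $\tilde{Q}$ is holomorphic in a neighborhood of $\bm{\zeta}^*$, which legitimizes the Taylor expansion we are about to use.

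Before the expansion I would justify the scaling clause. Since $\partial_\omega\tilde{Q}|_{(0,\bm{0})} = \bm{v}^\transp\bm{v} = \|\bm{v}\|_2^2 \neq 0$, the implicit function theorem solves $\tilde{Q}(\bm{\zeta}^*+\omega\bm{v}+\matr{V}\bm{b}_{\mathrm{p}})=0$ locally as $\omega = \omega(\bm{b}_{\mathrm{p}})$ with $\omega(\bm{0})=0$. Differentiating implicitly and invoking (iii) gives $\nabla_{\bm{b}_{\mathrm{p}}}\omega(\bm{0}) = -\|\bm{v}\|_2^{-2}\matr{V}^\transp\bm{v} = \bm{0}$, so the analytic expansion of $\omega$ starts at order two: $\omega(\bm{b}_{\mathrm{p}}) = O(\|\bm{b}_{\mathrm{p}}\|_2^2)$. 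This confirms that the regime $\omega = O(\|\bm{b}_{\mathrm{p}}\|_2^2)$ faithfully characterizes the zero set near $\bm{\zeta}^*$.

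Next I would carry out the second-order Taylor expansion of $\tilde{Q}$ at $\bm{\zeta}^*$ applied to $\Delta\bm{\zeta}\defeq\omega\bm{v}+\matr{V}\bm{b}_{\mathrm{p}}$. The zeroth-order term vanishes by (i); the linear term collapses to $\bm{v}^\transp(\omega\bm{v}+\matr{V}\bm{b}_{\mathrm{p}}) = \omega\|\bm{v}\|_2^2$ using (iii); and the quadratic form decomposes as
\begin{align*}
    \tfrac{1}{2}\Delta\bm{\zeta}^\transp\matr{H}_{\bm{\zeta}^*}\Delta\bm{\zeta}
    &= \tfrac{1}{2}\omega^2\bm{v}^\transp\matr{H}_{\bm{\zeta}^*}\bm{v}
      + \omega\bm{v}^\transp\matr{H}_{\bm{\zeta}^*}\matr{V}\bm{b}_{\mathrm{p}} \\
    &\quad + \tfrac{1}{2}\bm{b}_{\mathrm{p}}^\transp\matr{H}_{\mathrm{p}}\bm{b}_{\mathrm{p}}.
\end{align*}
Under the scaling $\omega = O(\|\bm{b}_{\mathrm{p}}\|_2^2)$, the first piece is $O(\|\bm{b}_{\mathrm{p}}\|_2^4)$ and the cross piece is $O(\|\bm{b}_{\mathrm{p}}\|_2^3)$; moreover $\|\Delta\bm{\zeta}\|_2 = O(\|\bm{b}_{\mathrm{p}}\|_2)$, so the cubic Taylor remainder of the holomorphic $\tilde{Q}$ is itself $O(\|\bm{b}_{\mathrm{p}}\|_2^3)$. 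Collecting the surviving contributions, namely $\omega\|\bm{v}\|_2^2$ and $\tfrac{1}{2}\bm{b}_{\mathrm{p}}^\transp\matr{H}_{\mathrm{p}}\bm{b}_{\mathrm{p}}$, yields \eqref{app:eq:Q_full_expansion}.

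The main obstacle is purely bookkeeping: carefully tracking the mixed $(\omega,\bm{b}_{\mathrm{p}})$ orders and checking that every cross term gets absorbed into the $O(\|\bm{b}_{\mathrm{p}}\|_2^3)$ remainder. Once the scaling $\omega = O(\|\bm{b}_{\mathrm{p}}\|_2^2)$ is in hand, the analytic content is elementary, and the restricted Hessian $\matr{H}_{\mathrm{p}} = \matr{V}^\transp\matr{H}_{\bm{\zeta}^*}\matr{V}$ is just the curvature of $\mathcal{V}$ at $\bm{w}$ read off in the tangential directions, as required by the subsequent Gaussian integration step.
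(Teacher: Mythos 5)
Your proposal is correct and follows essentially the same route as the paper: a second-order Taylor expansion of $\tilde{Q}$ at $\bm{\zeta}^*$ in the coordinates $\bm{\delta}=\omega\bm{v}+\matr{V}\bm{b}_{\mathrm{p}}$, using $\tilde{Q}(\bm{\zeta}^*)=0$, $\nabla\tilde{Q}(\bm{\zeta}^*)=\bm{v}$, and $\matr{V}^\transp\bm{v}=\bm{0}$, with the mixed and higher-order normal terms absorbed into $O(\|\bm{b}_{\mathrm{p}}\|_2^3)$ under the scaling $\omega=O(\|\bm{b}_{\mathrm{p}}\|_2^2)$. Your implicit-function-theorem justification of that scaling is in fact a slightly cleaner derivation than the paper's informal balancing argument, but it is a refinement rather than a different approach.
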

\begin{proof}
Consider the local coordinates $\bm{\zeta} = \bm{\zeta}^* + \omega\bm{v} + \matr{V}\bm{b}_{\mathrm{p}}$ defined in Lemma~\ref{app:lem:local_basis}. Expanding $\tilde{Q}$ around $\bm{\zeta}^*$ with displacement $\bm{\delta} \defeq \omega\bm{v} + \matr{V}\bm{b}_{\mathrm{p}}$, and noting $\nabla\tilde{Q}(\bm{\zeta}^*)\!=\!\nabla_{\log}Q(\bm{w})\!=\!\bm{v}$ and $\matr{V}^\transp \bm{v}=\bm{0}$, yields the general form
\begin{align}
    \tilde{Q}(\bm{\zeta}) = \omega\|\bm{v}\|_2^2 + \frac{1}{2}\bm{\delta}^{\!\transp} \matr{H}_{\bm{\zeta}^*} \bm{\delta} + O(\|\bm{\delta}\|_2^3).
\end{align}
Restricting this expansion to the zero set $\bigl\{\bm{\zeta}\in\mathbb{C}^{2m} \!\bigm| \tilde{Q}(\bm{\zeta})\!=\!0\bigr\}$ and separating the quadratic form provides the implicit condition
\begin{align}
    0 & = \omega\|\bm{v}\|_2^2 + \frac{1}{2}\omega^2 \bm{v}^\transp \matr{H}_{\bm{\zeta}^*} \bm{v} + \omega \bm{v}^\transp \matr{H}_{\bm{\zeta}^*} \matr{V} \bm{b}_{\mathrm{p}} \nonumber\\ 
    & \qquad + \frac{1}{2}\bm{b}_{\mathrm{p}}^\transp \matr{H}_{\mathrm{p}} \bm{b}_{\mathrm{p}}+ O(\|\bm{\delta}\|_2^3).
\end{align}
Since we describe the zero set locally near $\bm{\zeta}^*$, we restrict to the small-$\omega$ branch (with $\omega\to 0$ as $\bm{b}_{\mathrm{p}}\to\bm{0}$). Given that components of the gradient vector and Hessian matrix are fixed at $\bm{\zeta}^*$, the linear term $\omega\|\bm{v}\|_2^2$ must balance the leading tangential quadratic contribution to satisfy this identity, yielding the scaling $\omega \!=\! O(\|\bm{b}_{\mathrm{p}}\|_2^2)$. Under this scaling, $\|\bm{\delta}\|_2 \!=\! O(\|\bm{b}_{\mathrm{p}}\|_2)$, resulting in the error term $O(\|\bm{b}_{\mathrm{p}}\|_2^3)$. Mixed and higher-order normal terms (e.g., $\omega^2 \!=\! O(\|\bm{b}_{\mathrm{p}}\|_2^4)$ and $\omega\|\bm{b}_{\mathrm{p}}\|_2 \!=\! O(\|\bm{b}_{\mathrm{p}}\|_2^3)$) are negligible and absorbed into the error, yielding \eqref{app:eq:Q_full_expansion}.
\end{proof}

\begin{lemma}
\label{app:lem:volume_transform}
In the local coordinates $(\omega; \bm{b}_{\mathrm{p}})$, the holomorphic volume form $\d\bm{\xi}_{\perp}$ defined in Lemma~\ref{app:lem:volume_factorization} transforms near the critical point as
\begin{align}
    \d\bm{\xi}_{\perp} \approx \frac{1}{\|\bm{v}\|_2} \, \d\tilde{Q} \wedge \d\bm{b}_{\mathrm{p}},
    \label{app:eq:omega_transform}
\end{align}
where $\d\bm{b}_{\mathrm{p}} \defeq \d b_2 \wedge \dots \wedge \d b_{d_{\mathrm{eff}}}$. Here $\approx$ means that, in the local regime $(\omega,\bm{b}_{\mathrm{p}})\!\to\!(0,\bm{0})$, i.e., $\bm{\zeta}\!\to\!\bm{\zeta}^*$, the difference between the two sides is of higher order and vanishes in the corresponding local limit; equivalently, the approximation becomes exact at leading order in a sufficiently small neighborhood of $\bm{\zeta}^*$.
\end{lemma}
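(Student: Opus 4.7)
\textbf{Proof proposal for Lemma~\ref{app:lem:volume_transform}.} My plan is to obtain the claimed approximation by composing two local changes of coordinates on the slice $H_{\bm{c}}$: first from the orthonormal holomorphic coordinates $\bm{\xi}_\perp$ to the adapted coordinates $(\omega;\bm{b}_{\mathrm{p}})$ introduced in Lemma~\ref{app:lem:local_basis}, and then from $\omega$ to $\tilde{Q}$ using the local expansion in Lemma~\ref{app:lem:local_expansion}. Both Jacobians can be read off directly from structures already in place, so this amounts to a short two-step linear-algebra computation rather than a deep analytic argument; the symbol $\approx$ in the statement absorbs the subleading corrections coming from the second step.

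First I would compute the Jacobian of the linear map $(\omega;\bm{b}_{\mathrm{p}})\mapsto\bm{\xi}_\perp$. By Lemma~\ref{app:lem:local_basis}, this is the embedding sending the standard basis of $\mathbb{C}^{d_{\mathrm{eff}}}$ to the vectors $\bm{v},\bm{v}^{(2)},\dots,\bm{v}^{(d_{\mathrm{eff}})}$ expressed in an orthonormal frame of $H_{\bm{c}}$. Since the columns of $\matr{V}$ are orthonormal and $\matr{V}^\transp\bm{v}=\bm{0}$, the Gram matrix of this column family is $\diag\bigl(\|\bm{v}\|_2^2,1,\dots,1\bigr)$, so the Jacobian determinant equals $\pm\|\bm{v}\|_2$. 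Consequently
\begin{align}
    \d\bm{\xi}_\perp \;=\; \|\bm{v}\|_2\,\d\omega\wedge\d\bm{b}_{\mathrm{p}},
\end{align}
up to a sign that can be absorbed into the orientation of $\mathcal{C}_{\bm{c}}$.

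Next I would substitute $\omega$ by $\tilde{Q}$. Differentiating the expansion supplied by Lemma~\ref{app:lem:local_expansion} yields
\begin{align}
    \d\tilde{Q} \;=\; \|\bm{v}\|_2^2\,\d\omega \;+\; \bigl(\matr{H}_{\mathrm{p}}\bm{b}_{\mathrm{p}}\bigr)^{\transp}\d\bm{b}_{\mathrm{p}} \;+\; O\bigl(\|\bm{b}_{\mathrm{p}}\|_2^2\bigr).
\end{align}
Wedging against $\d\bm{b}_{\mathrm{p}}\defeq\d b_2\wedge\cdots\wedge\d b_{d_{\mathrm{eff}}}$, the middle term drops by antisymmetry (every $\d b_i$ already appears in $\d\bm{b}_{\mathrm{p}}$) and the higher-order contribution is swept into $\approx$, leaving $\d\tilde{Q}\wedge\d\bm{b}_{\mathrm{p}}\approx\|\bm{v}\|_2^2\,\d\omega\wedge\d\bm{b}_{\mathrm{p}}$. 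Combining the two displays gives
\begin{align}
    \d\bm{\xi}_\perp \;=\; \|\bm{v}\|_2\,\d\omega\wedge\d\bm{b}_{\mathrm{p}} \;\approx\; \frac{1}{\|\bm{v}\|_2}\,\d\tilde{Q}\wedge\d\bm{b}_{\mathrm{p}},
\end{align}
which is exactly the claim.

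The point that most needs care, rather than a true obstacle, is ensuring that the Jacobian in the first step collapses to the single factor $\|\bm{v}\|_2$: this relies essentially on both the orthogonality $\matr{V}^\transp\bm{v}=\bm{0}$ and the orthonormality of the columns of $\matr{V}$ supplied by Lemma~\ref{app:lem:local_basis}. Without these, a full Gram determinant would have to be evaluated and cross-terms would survive the wedge. Since both ingredients are already in hand and Lemma~\ref{app:lem:local_expansion} provides the Taylor expansion of $\tilde{Q}$ to the required order, I do not anticipate any substantive difficulty beyond careful bookkeeping of signs and higher-order remainders.
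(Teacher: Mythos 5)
Your proposal is correct and follows essentially the same route as the paper's proof: first the Jacobian $\d\bm{\xi}_\perp = \|\bm{v}\|_2\,\d\omega\wedge\d\bm{b}_{\mathrm{p}}$ from the orthogonality structure of Lemma~\ref{app:lem:local_basis}, then the substitution $\d\omega\approx\|\bm{v}\|_2^{-2}\,\d\tilde{Q}$ from the local expansion. Your Gram-matrix computation and the explicit antisymmetry argument for discarding the tangential part of $\d\tilde{Q}$ under the wedge are slightly more detailed justifications of the same two steps the paper states more tersely.
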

\begin{proof}
Using the orthogonal basis from Lemma~\ref{app:lem:local_basis}, note that $\bm{v}$ has length $\|\bm{v}\|_2$ while the columns of $\matr{V}$ are orthonormal. Consequently, the Jacobian is $\|\bm{v}\|_2$, giving $\d\bm{\xi}_{\perp} = \|\bm{v}\|_2 \, \d \omega \wedge \d\bm{b}_{\mathrm{p}}$. Differentiating the expansion~\eqref{app:eq:Q_full_expansion} in Lemma~\ref{app:lem:local_expansion} with respect to the coordinates $(\omega, \bm{b}_{\mathrm{p}})$ shows that $\d\tilde{Q}$ is dominated by the normal term, i.e., $\d\tilde{Q} \approx \|\bm{v}\|_2^2 \, \d \omega$. Substituting $\d \omega \approx \|\bm{v}\|_2^{-2} \d\tilde{Q}$ yields
\begin{align}
    \d\bm{\xi}_{\perp} \approx \|\bm{v}\|_2 \left(\frac{1}{\|\bm{v}\|_2^2} \d\tilde{Q}\right) \wedge \d\bm{b}_{\mathrm{p}} = \frac{1}{\|\bm{v}\|_2} \d\tilde{Q} \wedge \d\bm{b}_{\mathrm{p}}.
\end{align}
\end{proof}

Combining the local expansion and the transformed volume form allows us to separate the singular and tangential components of the integral.

\begin{lemma}
\label{app:lem:integral_separation}
The dimensionally reduced coefficient integral \eqref{app:eq:reduced_integral} admits the asymptotically equivalent separated form
\begin{align}
    a_{\bm{r}} \sim \frac{\|\bm{c}\|_2}{(2\pi i)^{d_{\mathrm{eff}}} \|\bm{v}\|_2} \int_{\mathcal{C}_{\bm{c}}} \exp(-\bm{r} \!\cdot\! \bm{\zeta}) \tilde{P}(\bm{\zeta}) \tilde{Q}(\bm{\zeta})^{-\alpha} \, \d\tilde{Q} \wedge \d\bm{b}_{\mathrm{p}}.
    \label{app:eq:ar_separated}
\end{align}
\end{lemma}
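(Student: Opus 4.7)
The plan is to start from the reduced integral in~\eqref{app:eq:reduced_integral}, localize the analysis to a neighborhood of the dominant critical point $\bm{\zeta}^*$ on the slice $H_{\bm{c}}$, and then apply the volume form factorization of Lemma~\ref{app:lem:volume_transform} in the local coordinates of Lemma~\ref{app:lem:local_basis}. The target is to show that replacing $\d\bm{\xi}_{\perp}$ by $\|\bm{v}\|_2^{-1}\,\d\tilde{Q}\wedge\d\bm{b}_{\mathrm{p}}$ only introduces subleading corrections, so that~\eqref{app:eq:ar_separated} holds in the asymptotic sense indicated by the symbol $\sim$.

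First, I would invoke the standard saddle-point localization from ACSV. Because the cycle $\mathcal{C}_{\bm{c}}$ arises from a polytorus centered at $\bm{w}$, the modulus of $\exp(-\bm{r}\cdot\bm{\zeta})$ is essentially constant along the transverse directions of $\mathcal{C}_{\bm{c}}$, while the singular factor $\tilde{Q}^{-\alpha}$ is strictly maximized at $\bm{\zeta}^*$ by the Perron--Frobenius strict dominance used in Lemma~\ref{app:lem:variety_geometry} and the minimality property of $\bm{w}$ from Lemma~\ref{app:lem:vp_minimal}. Consequently, up to exponentially smaller corrections, the integral can be restricted to a shrinking neighborhood $\mathcal{N}(\bm{\zeta}^*)\subset \mathcal{C}_{\bm{c}}$, as already announced in Remark~\ref{rem:local_nature}.

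Next, on $\mathcal{N}(\bm{\zeta}^*)$ I would adopt the local parametrization $\bm{\zeta}=\bm{\zeta}^*+\omega\bm{v}+\matr{V}\bm{b}_{\mathrm{p}}$ from Lemma~\ref{app:lem:local_basis}. In this chart, the transverse volume form equals $\|\bm{v}\|_2\,\d\omega\wedge\d\bm{b}_{\mathrm{p}}$, and by the leading expansion in Lemma~\ref{app:lem:local_expansion} the map $\omega\mapsto\tilde{Q}$ is biholomorphic near $\omega=0$ with $\d\omega\approx\|\bm{v}\|_2^{-2}\,\d\tilde{Q}$. Composing these two substitutions yields precisely the factorization~\eqref{app:eq:omega_transform}. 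The cubic and higher-order remainders in~\eqref{app:eq:Q_full_expansion} only perturb the Jacobian by factors of $1+O(\|\bm{b}_{\mathrm{p}}\|_2)$, and under the scaling $\omega=O(\|\bm{b}_{\mathrm{p}}\|_2^2)$ forced by the singular variety, they integrate to relative corrections that vanish in the $\|\bm{r}\|_2\to\infty$ limit, justifying the passage from $=$ to $\sim$.

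The main technical obstacle will be certifying the saddle-point localization fully rigorously within the ACSV framework: one must verify that, after integrating out the $\bm{c}$-direction in Lemma~\ref{app:lem:dimension_reduction}, the cycle $\mathcal{C}_{\bm{c}}$ can be deformed so that $\bm{\zeta}^*$ is its unique height-dominant point, and that the complement $\mathcal{C}_{\bm{c}}\setminus\mathcal{N}(\bm{\zeta}^*)$ contributes only terms of lower exponential order than the neighborhood contribution. Both properties follow from the smoothness of $\mathcal{V}$ at $\bm{w}$ established in Lemma~\ref{app:lem:variety_geometry}, combined with the height-function and minimality arguments of~\cite{pemantle2024analytic}, which together license the chain of approximations culminating in~\eqref{app:eq:ar_separated}.
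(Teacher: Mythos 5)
Your proposal is correct and follows essentially the same route as the paper: the paper's proof is simply the one-line substitution of the volume-form approximation from Lemma~\ref{app:lem:volume_transform} into the reduced integral of Lemma~\ref{app:lem:dimension_reduction}, with the localization to a neighborhood of $\bm{\zeta}^*$ left implicit (cf.\ Remark~\ref{rem:local_nature}). Your additional discussion of saddle-point localization and of why the higher-order Jacobian corrections are subleading is a more explicit justification of the same step, not a different argument.
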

\begin{proof}
Substituting the volume form approximation from Lemma~\ref{app:lem:volume_transform} into the reduced integral representation in Lemma~\ref{app:lem:dimension_reduction}, directly yields the result.
\end{proof}

\subsection{Phase Expansion and Steepest Descent}

Having established the local geometry and the reduced integral form, we now evaluate the integral \eqref{app:eq:ar_separated} via the method of steepest descent. We first identify the gradient in the transformed coordinates and define the relevant phase and normal functions.

\begin{remark}
\label{rem:transformed_gradient_ident}
Combining Definitions~\ref{app:def:normal_direction} and~\ref{app:def:log_transformed_funcs}, the log-transformed gradient at $\bm{\zeta}^*$ satisfies
\begin{align}
    \bm{v}\;\defeq\; \nabla_{\log}Q(\bm{w}) = \nabla \tilde Q(\bm{\zeta}^*).
\end{align}
\end{remark}

\begin{definition}
\label{def:phase_and_normal}
Define the linear phase function and the normal coordinate relevant to the saddle point integration as
\begin{align}
    \Psi(\bm{\zeta}) \defeq \bm{r} \cdot \bm{\zeta} \quad \text{and} \quad q \defeq -\tilde{Q}(\bm{\zeta}).
\end{align}
\end{definition}

\begin{remark}
\label{rem:phase_normal_preview}
The $\Psi(\bm{\zeta})$ and $q$ defined in Definition~\ref{def:phase_and_normal} are used to separate the tangential and normal contributions in the saddle-point analysis. In particular, combining the local phase expansion (Lemma~\ref{app:lem:phase_expansion}) with the steepest-descent parametrization (Lemma~\ref{app:lem:phase_factorization}) gives
\begin{align}
    \exp\bigl(-\Psi(\bm{\zeta})\bigr)
    \;\approx\;
    \bm{w}^{-\bm{r}}\,
    \exp(-nq)\,
    \exp\!\left(-\frac{\|\bm{r}\|_2}{2}\,\bm{y}^\transp \mathcal{H}\,\bm{y}\right).
\end{align}
Consequently, later in Lemma~\ref{lem:integral_factorization} the reduced coefficient integral factorizes into a $(d_{\mathrm{eff}}-1)$-dimensional Gaussian integral and a one-dimensional integral in $q$.
\end{remark}

\begin{lemma}
\label{app:lem:phase_expansion}
In the scaling regime $\omega = O(\|\bm{b}_{\mathrm{p}}\|_2^2)$ from Lemma~\ref{app:lem:local_expansion}, the linear phase function $\Psi(\bm{\zeta})$ admits the asymptotic expansion
\begin{align}
    \Psi(\bm{\zeta}) = \bm{r} \cdot \bm{\zeta}^* + n\,q + \frac{n}{2}\,\bm{b}_{\mathrm{p}}^\transp \matr{H}_{\mathrm{p}}\bm{b}_{\mathrm{p}} + O(n\|\bm{b}_{\mathrm{p}}\|_2^3).
    \label{eq:phase_expansion_local}
\end{align}
\end{lemma}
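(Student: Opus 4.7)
\smallskip
\noindent\textbf{Proof plan.} Since $\Psi(\bm{\zeta}) \defeq \bm{r}\cdot\bm{\zeta}$ is already linear in $\bm{\zeta}$, no Taylor expansion of $\Psi$ itself is needed; the higher-order terms in \eqref{eq:phase_expansion_local} must therefore come entirely from re-expressing the normal displacement $\omega$ in terms of $q$ via the relation in Lemma~\ref{app:lem:local_expansion}. The plan is to substitute the local parametrization $\bm{\zeta}=\bm{\zeta}^*+\omega\bm{v}+\matr{V}\bm{b}_{\mathrm{p}}$ from Lemma~\ref{app:lem:local_basis} into $\Psi$, apply the criticality relation of Lemma~\ref{app:lem:critical_point} (together with $\eta=n$ from Lemma~\ref{app:lem:eta_value}) to collapse the linear term, and then invert the expansion of $\tilde{Q}$ to trade $\omega$ for $q$.

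First I would write
\begin{align*}
    \Psi(\bm{\zeta})
    \;=\; \bm{r}\cdot\bm{\zeta}^* + \omega\,(\bm{r}\cdot\bm{v}) + \bm{r}^\transp\matr{V}\bm{b}_{\mathrm{p}}.
\end{align*}
Lemma~\ref{app:lem:critical_point} combined with Lemma~\ref{app:lem:eta_value} gives $\bm{r}=-n\bm{v}$, and Lemma~\ref{app:lem:local_basis} gives $\matr{V}^\transp\bm{v}=\bm{0}$. Hence the tangential term $\bm{r}^\transp\matr{V}\bm{b}_{\mathrm{p}}=-n\,\bm{v}^\transp\matr{V}\bm{b}_{\mathrm{p}}$ vanishes identically, while $\bm{r}\cdot\bm{v}=-n\|\bm{v}\|_2^2$. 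This leaves
\begin{align*}
    \Psi(\bm{\zeta}) \;=\; \bm{r}\cdot\bm{\zeta}^* \,-\, n\,\omega\,\|\bm{v}\|_2^2.
\end{align*}

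Next I would invoke the expansion \eqref{app:eq:Q_full_expansion} from Lemma~\ref{app:lem:local_expansion}, solved for the normal displacement,
\begin{align*}
    \omega\,\|\bm{v}\|_2^2
    \;=\; \tilde{Q}(\bm{\zeta}) \,-\, \tfrac{1}{2}\,\bm{b}_{\mathrm{p}}^\transp\matr{H}_{\mathrm{p}}\bm{b}_{\mathrm{p}} \,+\, O(\|\bm{b}_{\mathrm{p}}\|_2^3),
\end{align*}
and then substitute $q = -\tilde{Q}(\bm{\zeta})$ from Definition~\ref{def:phase_and_normal} to obtain $\omega\|\bm{v}\|_2^2 = -q - \tfrac{1}{2}\bm{b}_{\mathrm{p}}^\transp\matr{H}_{\mathrm{p}}\bm{b}_{\mathrm{p}} + O(\|\bm{b}_{\mathrm{p}}\|_2^3)$. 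Multiplying by $-n$ and adding $\bm{r}\cdot\bm{\zeta}^*$ yields exactly \eqref{eq:phase_expansion_local}.

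The only step requiring a sanity check is the scaling bookkeeping: the remainder must remain $O(\|\bm{b}_{\mathrm{p}}\|_2^3)$ after the inversion of $\tilde{Q}$, which is automatic because the regime $\omega = O(\|\bm{b}_{\mathrm{p}}\|_2^2)$ of Lemma~\ref{app:lem:local_expansion} is what produced the cubic error in the first place, and multiplication by the constant $-n$ does not alter its order. No truly hard step arises; the derivation is a direct algebraic consequence of the criticality condition combined with the implicit description of the singular variety near $\bm{\zeta}^*$.
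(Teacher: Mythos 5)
Your proposal is correct and follows essentially the same route as the paper's proof: substitute the local parametrization, use $\bm{r}=-n\bm{v}$ and $\matr{V}^\transp\bm{v}=\bm{0}$ to reduce $\Psi$ to $\bm{r}\cdot\bm{\zeta}^*-n\omega\|\bm{v}\|_2^2$, and then trade $\omega\|\bm{v}\|_2^2$ for $-q-\tfrac{1}{2}\bm{b}_{\mathrm{p}}^\transp\matr{H}_{\mathrm{p}}\bm{b}_{\mathrm{p}}+O(\|\bm{b}_{\mathrm{p}}\|_2^3)$ via Lemma~\ref{app:lem:local_expansion}. The sign bookkeeping and the order of the remainder are handled exactly as in the paper.
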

\begin{proof}
We analyze the phase function directly in the local coordinates. Using the linearity of the inner product and the local expansion of the normal coordinate $q$ derived in Lemma~\ref{app:lem:local_expansion}, we obtain
\begin{align}
    \Psi(\bm{\zeta}) &= \bm{r}\!\cdot\!\bm{\zeta}^* + \bm{r}\cdot(\bm{\zeta}\!-\!\bm{\zeta}^*) \nonumber\\
    &\overset{\mathrm{(a)}}{=} \bm{r}\!\cdot\!\bm{\zeta}^* - n\bm{v}^\transp \bigl(\omega\bm{v} + \matr{V}\bm{b}_{\mathrm{p}}\bigr) \nonumber\\
    &\overset{\mathrm{(b)}}{=} \bm{r}\!\cdot\!\bm{\zeta}^* - n \omega \|\bm{v}\|_2^2 \nonumber\\
    &\overset{\mathrm{(c)}}{=} \bm{r}\!\cdot\!\bm{\zeta}^* - n \Bigl( -q - \frac{1}{2}\bm{b}_{\mathrm{p}}^\transp \matr{H}_{\mathrm{p}}\bm{b}_{\mathrm{p}} + O(\|\bm{b}_{\mathrm{p}}\|_2^3) \Bigr) \nonumber\\
    &= \bm{r}\!\cdot\!\bm{\zeta}^* + n q + \frac{n}{2}\bm{b}_{\mathrm{p}}^\transp \matr{H}_{\mathrm{p}}\bm{b}_{\mathrm{p}} + O(n\|\bm{b}_{\mathrm{p}}\|_2^3),
\end{align}
where $(\mathrm{a})$ substitutes the saddle-point alignment $\bm{r} = -n\bm{v}$ and the coordinate displacement $\bm{\zeta}-\bm{\zeta}^* = \omega\bm{v} + \matr{V}\bm{b}_{\mathrm{p}}$, step $(\mathrm{b})$ follows from the orthogonality $\matr{V}^\transp \bm{v}=0$, and step $(\mathrm{c})$ substitutes the relation $\omega\|\bm{v}\|_2^2 = -q - \frac{1}{2}\bm{b}_{\mathrm{p}}^\transp \matr{H}_{\mathrm{p}}\bm{b}_{\mathrm{p}} + O(\|\bm{b}_{\mathrm{p}}\|_2^3)$ in the regime $\omega\!=\!O(\|\bm{b}_{\mathrm{p}}\|_2^2)$ implied by Lemma~\ref{app:lem:local_expansion}, with $q \!=\! -\tilde Q(\bm{\zeta})$.
\end{proof}

\begin{definition}
\label{def:effective_hessian}
To facilitate the Gaussian integral formulation, we define the normalized effective Hessian $\mathcal{H}$ by rescaling the tangential Hessian $\matr{H}_{\mathrm{p}}$ from Lemma~\ref{app:lem:local_expansion} as
\begin{align}
    \mathcal{H} \;\defeq\; -\|\bm{v}\|_2^{-1} \matr{H}_{\mathrm{p}}.
\end{align}
This normalization ensures that $\mathcal{H}$ captures the strictly positive curvature of $\log\mathcal{V}$ at the critical point.
\end{definition}

\begin{lemma}
\label{app:lem:phase_factorization}
Under the steepest descent parametrization $\bm{b}_{\mathrm{p}} = i\bm{y}$ (with $\bm{y}\in\mathbb{R}^{d_{\mathrm{eff}}-1}$), the term $\exp(-\Psi(\bm{\zeta}))$ takes the local form
\begin{align}
    \exp\bigl(-\Psi(\bm\zeta)\bigr)
    \approx
    \bm{w}^{-\bm{r}}\,\exp(-nq)\,
    \exp\!\left(-\frac{\|\bm{r}\|_2}{2}\,\bm{y}^\transp \mathcal{H}\,\bm{y}\right),
    \label{eq:phase_factorization}
\end{align}
where $\mathcal{H}$ is positive definite ($\mathcal{H} \succ 0$).
\end{lemma}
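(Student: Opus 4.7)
The plan is to read off the factorization directly from the second-order phase expansion in Lemma~\ref{app:lem:phase_expansion}: the substitution $\bm{b}_{\mathrm{p}} = i\bm{y}$ with $\bm{y}\in\mathbb{R}^{d_{\mathrm{eff}}-1}$ converts the tangential quadratic form into a real Gaussian exponent, while the cubic remainder $O(\|\bm{b}_{\mathrm{p}}\|_2^3) = O(\|\bm{y}\|_2^3)$ is absorbed into the ``$\approx$''. Concretely, each of the three summands of $-\Psi(\bm{\zeta})$ in~\eqref{eq:phase_expansion_local} will map to one of the three displayed factors on the right-hand side of~\eqref{eq:phase_factorization}.

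First I would rewrite the constant piece using $\bm{\zeta}^{*} = \log\bm{w}$, so that $\exp(-\bm{r}\cdot\bm{\zeta}^{*}) = \bm{w}^{-\bm{r}}$; the linear term $nq$ contributes the factor $\exp(-nq)$ unchanged. For the quadratic piece, the substitution yields $\bm{b}_{\mathrm{p}}^{\transp}\matr{H}_{\mathrm{p}}\bm{b}_{\mathrm{p}} = -\bm{y}^{\transp}\matr{H}_{\mathrm{p}}\bm{y}$. I would then combine the criticality relation $\bm{r} = -n\bm{v}$ from Lemmas~\ref{app:lem:critical_point} and~\ref{app:lem:eta_value}, which gives $\|\bm{r}\|_2 = n\|\bm{v}\|_2$, with the definition $\mathcal{H} \defeq -\|\bm{v}\|_2^{-1}\matr{H}_{\mathrm{p}}$ from Definition~\ref{def:effective_hessian}, to obtain
\begin{align*}
\tfrac{n}{2}\,\bm{y}^{\transp}\matr{H}_{\mathrm{p}}\bm{y}
\;=\; -\tfrac{n\|\bm{v}\|_2}{2}\,\bm{y}^{\transp}\mathcal{H}\bm{y}
\;=\; -\tfrac{\|\bm{r}\|_2}{2}\,\bm{y}^{\transp}\mathcal{H}\bm{y}.
\end{align*}
Multiplying the three contributions together then reproduces the claimed identity.

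The genuine obstacle is verifying $\mathcal{H}\succ 0$. Since $\|\bm{v}\|_2 > 0$, this is equivalent to showing that the tangential Hessian $\matr{H}_{\mathrm{p}}$ of $\tilde Q = 1 - \lambda_1\circ\exp$ at $\bm{\zeta}^{*}$ is negative definite, or equivalently that $\log\lambda_1(\e^{\bm{\zeta}})$ is strictly convex on the tangent space to $\mathcal{V}$ at $\bm{\zeta}^{*}$. My intended route is the Kingman-type log-convexity of the spectral radius of a positive matrix whose entries are log-convex in a parameter: every entry of $\matr{S}(\e^{\bm{\zeta}})$ is a positive linear combination of exponentials $\e^{\zeta_{i}+\zeta_{j}}$, hence log-convex in $\bm{\zeta}$, so $\log\lambda_1(\e^{\bm{\zeta}})$ is convex on $\mathbb{R}^{2m}$. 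Strict convexity along the tangential directions then follows from the Perron simplicity and spectral gap $\lambda_1 > \lambda_2$ recorded after Definition~\ref{def:S-general}, with the only degenerate (flat) direction, namely the scaling direction $\bm{c}$ from Lemma~\ref{app:lem:scaling_invariance}, already removed by the orthogonal-slice reduction of Subsection~\ref{app:subsec:invariance_reduction}. This yields $\matr{H}_{\mathrm{p}}\prec 0$ and hence $\mathcal{H}\succ 0$, concluding the proof proposal.
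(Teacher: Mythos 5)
Your derivation of the factorization itself coincides with the paper's: both substitute $\bm{b}_{\mathrm{p}}=i\bm{y}$ into the expansion of Lemma~\ref{app:lem:phase_expansion}, use $\bm{\zeta}^*=\log\bm{w}$ for the constant factor, and convert $-\tfrac{n}{2}\bm{y}^\transp\matr{H}_{\mathrm{p}}\bm{y}$ into $\tfrac{\|\bm{r}\|_2}{2}\bm{y}^\transp\mathcal{H}\bm{y}$ via $n=\|\bm{r}\|_2/\|\bm{v}\|_2$ and Definition~\ref{def:effective_hessian}. Where you genuinely diverge is the claim $\mathcal{H}\succ 0$. The paper obtains it indirectly: it deforms to a steepest-descent cycle $\tilde\sigma$ (citing the ACSV construction), observes that $\mathrm{Re}(\Psi)$ must attain a strict local minimum at $\bm{y}=\bm{0}$ along $\tilde\sigma$, and concludes positive definiteness of the quadratic form from that minimality. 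You instead argue intrinsically: Kingman log-convexity of the Perron root of a matrix with log-convex entries gives convexity of $\log\lambda_1(\e^{\bm{\zeta}})$, and (since the tangent space is orthogonal to $\nabla\lambda_1$, so $\nabla^2\lambda_1$ and $\nabla^2\log\lambda_1$ agree there at $\lambda_1=1$) strict convexity on the tangent space yields $\matr{H}_{\mathrm{p}}\prec 0$. Your route has the advantage of not presupposing the existence of a steepest-descent cycle with the required property, which is what the paper's argument implicitly leans on.

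Two caveats on your convexity step. First, the flat directions of $\log\lambda_1(\e^{\bm{\zeta}})$ form a \emph{two}-dimensional space, not just $\mathrm{span}\{\bm{c}\}$: since $\matr{S}$ is separately degree-one homogeneous in $\bm{t}$ and in $\bm{u}$, both $(\bm{1}_m;\bm{0})$ and $(\bm{0};\bm{1}_m)$ are affine directions for $\log\lambda_1$, equivalently $\mathrm{span}\{\bm{c},(\bm{1}_m;\bm{1}_m)\}$. The direction $(\bm{1}_m;\bm{1}_m)$ is not removed by the slice reduction; it is excluded from the tangent space only because $(\bm{1}_m;\bm{1}_m)\cdot\bm{v}=-2\neq 0$ by the criticality condition $\bm{v}=-\bm{r}/n$. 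Second, Kingman's theorem gives convexity, but strictness off the homogeneity space (equivalently, that $\ker\nabla^2\log\lambda_1$ is exactly that two-dimensional space) requires analyzing the equality case of the underlying H\"older inequality, using irreducibility and the spectral gap; you assert this rather than prove it. Since a positive semidefinite form vanishes on a vector only if the vector lies in the kernel, closing this step would complete your argument. The paper's own treatment of this point is comparably informal, so I would not call this a fatal gap, but it is the one place where your proposal needs to be tightened.
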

\begin{proof}
We perform the phase analysis by deforming the tangential slice of the global cycle $\mathcal{C}_{\bm{c}}$ to a local steepest descent cycle $\tilde{\sigma}$ lying within the intersection $\log\mathcal{V}\cap H_{\bm{c}}$ at $\bm{\zeta}^*$. Let $\mathcal{P}$ denote the real tangent space of this intersection, spanned by the basis $\{\bm{v}^{(2)},\dots,\bm{v}^{(d_{\mathrm{eff}})}\}$. Following the construction in~\cite[Theorem~9.45]{pemantle2024analytic}, we align the tangential directions of $\tilde{\sigma}$ with the purely imaginary subspace $i\mathcal{P}$. This justifies the parametrization $\bm{b}_{\mathrm{p}} = i\bm{y}$ for $\bm{y} \in \mathbb{R}^{d_{\mathrm{eff}}-1}$.

To derive the factorization, we substitute the parametrization $\bm{b}_{\mathrm{p}} = i\bm{y}$ directly into the phase expansion \eqref{eq:phase_expansion_local} established in Lemma~\ref{app:lem:phase_expansion}. The quadratic term transforms as
\begin{align}
    \frac{n}{2}(i\bm{y})^\transp \matr{H}_{\mathrm{p}}(i\bm{y})
    \;=\; -\frac{n}{2} \bm{y}^\transp \matr{H}_{\mathrm{p}} \bm{y}
    \;=\; \frac{\|\bm{r}\|_2}{2} \bm{y}^\transp \mathcal{H} \bm{y},
\end{align}
where the last equality follows from $\mathcal{H} \!=\!  -\|\bm{v}\|_2^{-1}\matr{H}_{\mathrm{p}}$ and $n = \|\bm{r}\|_2/\|\bm{v}\|_2$. Substituting this result back into \eqref{eq:phase_expansion_local} gives
\begin{align}
    \Psi(\bm{\zeta}) = \bm{r}\cdot\bm{\zeta}^* + nq + \frac{\|\bm{r}\|_2}{2} \bm{y}^\transp \mathcal{H} \bm{y} + O(n\|\bm{y}\|_2^3).
\end{align}
Under the dominant local scale, the normal factor $\exp(-nq)$ localizes $q$ to $q=O(n^{-1})$, while the tangential Gaussian factor localizes $\bm{y}$ to $\bm{y}=O(\|\bm{r}\|_2^{-1/2})=O(n^{-1/2})$. The higher-order remainder in the phase then scales as $O(n\|\bm{y}\|_2^3) = O(n^{-1/2})$. Exponentiating $-\Psi(\bm{\zeta})$ then yields the form in \eqref{eq:phase_factorization}.

Finally, since $\tilde\sigma$ is a steepest-descent cycle, the modulus $|\exp(-\Psi)|=\exp(-\mathrm{Re}(\Psi))$ is locally maximized at $\bm\zeta^*$ along $\tilde\sigma$. Equivalently, $\mathrm{Re}(\Psi)$ attains a strict local minimum at $\bm{y}=\bm{0}$. Since along $\tilde{\sigma}$ one can locally express $\mathrm{Re}(\Psi)$ as a constant plus the quadratic term $\frac{\|\bm{r}\|_2}{2}\bm{y}^\transp\mathcal{H}\,\bm{y}$ with $\|\bm{r}\|_2>0$, this strict minimality requires the quadratic form $\bm{y}^\transp\mathcal{H}\,\bm{y}$ to be strictly positive for all $\bm{y}\neq\bm{0}$, implying $\mathcal{H}\succ0$.
\end{proof}

\subsection{Asymptotic Evaluation and Main Results}
\label{app:subsec:asymptotic_evaluation}

We now consolidate the local expansions and geometric transformations derived in the preceding sections to evaluate the reduced Cauchy integral.

\begin{lemma}
\label{lem:integral_factorization}
The coefficient integral $a_{\bm{r}}$ in~\eqref{app:eq:ar_separated} admits the asymptotic approximation
\begin{align}
    a_{\bm{r}}
    & \sim
    \frac{-\|\bm{c}\|_2 \, \bm{w}^{-\bm{r}} P(\bm{w})\, i^{d_{\mathrm{eff}}-1}}{(2\pi i)^{d_{\mathrm{eff}}} \|\bm{v}\|_2} \nonumber \\
    & \quad \cdot \int_{\mathcal{C}_{\bm{c}}}(-q)^{-\alpha} \e^{-nq}\e^{\left( -\frac{\|\bm{r}\|_2}{2} \bm{y}^\transp \mathcal{H} \bm{y} \right)} \d q \wedge \d \bm{y}.
    \label{eq:integral_factorized}
\end{align}
\end{lemma}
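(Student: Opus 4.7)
The plan is to apply three substitutions to the reduced integral in~\eqref{app:eq:ar_separated}: (i) replace the exponential factor $\exp(-\bm{r}\cdot\bm{\zeta})$ using the phase factorization from Lemma~\ref{app:lem:phase_factorization}, (ii) extract the leading value of the analytic factor $\tilde{P}$ at the critical point, and (iii) convert the differentials $\d\tilde{Q}\wedge\d\bm{b}_{\mathrm{p}}$ to the new coordinates $(q,\bm{y})$, tracking the resulting Jacobian signs and powers of $i$.

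First, I would note that $\tilde{P}(\bm{\zeta})$ is holomorphic and nonvanishing in a neighborhood of $\bm{\zeta}^*$ by Lemma~\ref{app:lem:acsv_factorization}. In the saddle-point regime the dominant contribution to the integral localizes to a neighborhood of $\bm{\zeta}^*$ of width $O(n^{-1/2})$, so a Taylor expansion gives $\tilde{P}(\bm{\zeta})=P(\bm{w})+O(n^{-1/2})$, allowing the leading value $P(\bm{w})$ to be factored out of the integral while the correction is absorbed into a lower-order asymptotic term. Next, I would substitute the normal coordinate $q\defeq-\tilde{Q}(\bm{\zeta})$ from Definition~\ref{def:phase_and_normal}, which yields $\tilde{Q}(\bm{\zeta})^{-\alpha}=(-q)^{-\alpha}$ and $\d\tilde{Q}=-\d q$; the second of these relations contributes the overall minus sign appearing in the prefactor of~\eqref{eq:integral_factorized}.

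Third, I would adopt the steepest-descent parametrization $\bm{b}_{\mathrm{p}}=i\bm{y}$ with $\bm{y}\in\mathbb{R}^{d_{\mathrm{eff}}-1}$, already justified in the proof of Lemma~\ref{app:lem:phase_factorization}. By multilinearity of the wedge product, the tangential volume form transforms as $\d\bm{b}_{\mathrm{p}}=i^{d_{\mathrm{eff}}-1}\,\d\bm{y}$, which supplies the factor $i^{d_{\mathrm{eff}}-1}$ in~\eqref{eq:integral_factorized}. Combining these substitutions with the phase factorization $\exp(-\Psi(\bm{\zeta}))\approx\bm{w}^{-\bm{r}}\exp(-nq)\exp(-\tfrac{\|\bm{r}\|_2}{2}\,\bm{y}^\transp\mathcal{H}\bm{y})$ from Lemma~\ref{app:lem:phase_factorization} and collecting the prefactors reproduces exactly~\eqref{eq:integral_factorized}.

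The main subtlety is justifying that the integral over the global cycle $\mathcal{C}_{\bm{c}}$ may indeed be replaced by one on a local steepest-descent cycle around $\bm{\zeta}^*$ (so that the same symbol $\mathcal{C}_{\bm{c}}$ in~\eqref{eq:integral_factorized} is interpreted as the local cycle after the coordinate change). Here I would invoke the standard ACSV deformation argument: the minimality of $\bm{w}$ (Lemma~\ref{app:lem:vp_minimal}) forces $|\exp(-\Psi)|$ to peak strictly at $\bm{\zeta}^*$ along a suitably deformed cycle, and positive definiteness of $\mathcal{H}$ (Lemma~\ref{app:lem:phase_factorization}) ensures exponential suppression away from the critical point. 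Together these imply that the non-local part of $\mathcal{C}_{\bm{c}}$ contributes only subleading terms, so that the $\sim$ relation holds. With localization granted by~\cite[Theorem~9.45]{pemantle2024analytic}, the remaining work is routine bookkeeping of Jacobians and signs rather than a new analytic estimate.
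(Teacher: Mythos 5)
Your proposal is correct and follows essentially the same route as the paper's proof: combining the separated integral of Lemma~\ref{app:lem:integral_separation} with the phase factorization of Lemma~\ref{app:lem:phase_factorization}, evaluating $\tilde{P}(\bm{\zeta})\sim P(\bm{w})$ by localization, and tracking the sign from $\d\tilde{Q}=-\d q$ and the factor $i^{d_{\mathrm{eff}}-1}$ from $\d\bm{b}_{\mathrm{p}}=i^{d_{\mathrm{eff}}-1}\d\bm{y}$. Your extra care with the localization/deformation argument is consistent with what the paper itself defers to Remark~\ref{rem:local_nature} and the proof of Lemma~\ref{lem:integral_evaluation}.
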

\begin{proof}
The result follows directly by combining the reduced integral form in Lemma~\ref{app:lem:integral_separation} with the local phase factorization in Lemma~\ref{app:lem:phase_factorization}. Specifically, since the saddle-point contribution is localized to a neighborhood of $\bm{\zeta}^*$ asymptotically, we may evaluate the leading-order amplitude by $\tilde{P}(\bm{\zeta}) \sim P(\bm{w})$, and substitute the variables $(\tilde{Q}, \bm{b}_{\mathrm{p}})$ with $(-q, i\bm{y})$. The factor $i^{d_{\mathrm{eff}}-1}$ arises from the tangential parametrization $\d\bm{b}_{\mathrm{p}} = i^{d_{\mathrm{eff}}-1}\d\bm{y}$, and the leading negative sign results from the orientation change $\d\tilde{Q} = -\d q$.
\end{proof}


\begin{lemma}
\label{lem:integral_evaluation}
Combining the factorized representation in Lemma~\ref{lem:integral_factorization} with the standard Gaussian and Hankel integral evaluations, the coefficient integral $a_{\bm{r}}$ satisfies the asymptotic estimate as
\begin{align}
    a_{\bm{r}} \sim
    \frac{\|\bm{c}\|_2\bm{w}^{-\bm{r}}P(\bm{w})}{\sqrt{(2\pi\|\bm{r}\|_2)^{d_{\mathrm{eff}}-1}\det(\mathcal{H})}} \cdot \frac{1}{\|\nabla_{\log}Q(\bm{w})\|_2}\cdot \frac{n^{\alpha-1}}{\Gamma(\alpha)}.
    \label{eq:final_asymptotics}
\end{align}
\end{lemma}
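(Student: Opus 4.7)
The plan is to apply Fubini's theorem to separate the integrand in~\eqref{eq:integral_factorized} into a tangential Gaussian integral in $\bm{y}$ and a normal Hankel-type integral in $q$, and then to evaluate each factor by its corresponding classical closed-form identity. The steepest-descent setup established in Lemma~\ref{app:lem:phase_factorization} already guarantees $\mathcal{H}\succ 0$, so the tangential slice of the local cycle $\mathcal{C}_{\bm{c}}$ can be replaced by the full real subspace $\mathbb{R}^{d_{\mathrm{eff}}-1}$ at the cost of only exponentially subdominant error; meanwhile, the image of the normal slice under $q=-\tilde Q(\bm\zeta)$ can be deformed, via Cauchy's theorem and the isolation of the critical point (Perron--Frobenius, Lemma~\ref{app:lem:variety_geometry}), to a canonical Hankel contour about the origin.

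The tangential factor then reduces to the standard multivariate Gaussian
\begin{align*}
    \int_{\mathbb{R}^{d_{\mathrm{eff}}-1}}
        \exp\!\left(-\frac{\|\bm{r}\|_2}{2}\,\bm{y}^{\transp}\mathcal{H}\,\bm{y}\right)\d\bm{y}
    \;=\;
    \sqrt{\frac{(2\pi)^{d_{\mathrm{eff}}-1}}{\|\bm{r}\|_2^{d_{\mathrm{eff}}-1}\det(\mathcal{H})}}.
\end{align*}
For the normal factor, I would invoke Hankel's representation
\begin{align*}
    \frac{1}{\Gamma(\alpha)}
    \;=\;
    \frac{1}{2\pi i}\oint_{H} \e^{-w}\,(-w)^{-\alpha}\,\d w
\end{align*}
and rescale $w=nq$ to obtain $\oint_H (-q)^{-\alpha}\,\e^{-nq}\,\d q \;=\; \dfrac{2\pi i\,n^{\alpha-1}}{\Gamma(\alpha)}$. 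This uniformly handles both the Gibbs case ($\alpha=1$, a simple pole) and the Bethe case ($\alpha=1/2$, an algebraic branch point), which is precisely the analytic advantage of the ACSV form $P(\bm{z})Q(\bm{z})^{-\alpha}$.

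Substituting these two evaluations back into~\eqref{eq:integral_factorized} and using $\|\bm{v}\|_2=\|\nabla_{\log}Q(\bm{w})\|_2$ from Definition~\ref{app:def:normal_direction}, the extra $2\pi i$ from the Hankel factor collapses the denominator $(2\pi i)^{d_{\mathrm{eff}}}$ to $(2\pi i)^{d_{\mathrm{eff}}-1}$, whose $i^{d_{\mathrm{eff}}-1}$ cancels the numerator factor $i^{d_{\mathrm{eff}}-1}$, while the remaining global sign is absorbed by the orientation convention $\d\tilde Q=-\d q$. Combining the residual $(2\pi)^{d_{\mathrm{eff}}-1}$ in the denominator with the Gaussian normalization yields the symmetric factor $\bigl[(2\pi\|\bm r\|_2)^{d_{\mathrm{eff}}-1}\det(\mathcal{H})\bigr]^{-1/2}$ appearing in~\eqref{eq:final_asymptotics}. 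The main obstacle is the careful bookkeeping of orientations and powers of $i$, so that all extraneous phases cancel and a positive real asymptotic emerges; a related technical point is verifying that (i)~the shrinking-neighborhood localization near $\bm{\zeta}^*$, (ii)~the rotation $\bm{b}_{\mathrm{p}}\mapsto i\bm{y}$ onto the steepest-descent contour, and (iii)~the deformation of the normal contour to a full Hankel loop each introduce only lower-order corrections, all of which follow from the quadratic lower bound on $\mathrm{Re}(\Psi)$ provided by $\mathcal{H}\succ 0$ together with the spectral gap of $\matr{S}(\bm{w})$.
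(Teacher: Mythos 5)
Your proposal is correct and follows essentially the same route as the paper's proof: Fubini separation of~\eqref{eq:integral_factorized} into a tangential Gaussian integral (evaluated via $\det(\|\bm{r}\|_2\mathcal{H})$) and a normal integral evaluated by rescaling a Hankel-contour representation of $1/\Gamma(\alpha)$, followed by the same bookkeeping of the $(2\pi i)^{-d_{\mathrm{eff}}}$ and $i^{d_{\mathrm{eff}}-1}$ factors. The only cosmetic difference is that the paper keeps the minus sign from $\d\tilde Q=-\d q$ explicit and pairs it with the reversed-orientation form of the Hankel identity, whereas you absorb it into the contour-orientation convention; both yield the identical positive asymptotic.
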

\begin{proof}
Starting from the factorized representation in~\eqref{eq:integral_factorized}, the saddle contribution is asymptotically localized to a neighborhood of the critical point $\bm{\zeta}^*$, which corresponds to $(q;\bm{y})=(0;\bm{0})$ under the local coordinates used in Lemma~\ref{lem:integral_factorization}. Hence, for the leading asymptotics we may restrict attention to this local neighborhood and deform the cycle $\mathcal{C}_{\bm{c}}$ accordingly. In this localized setting, $\mathcal{C}_{\bm{c}}$ can be taken to agree, to leading order, with a product contour in the $(q,\bm{y})$-coordinates: the tangential directions are deformed to the steepest-descent slice $\bm{y}\in\mathbb{R}^{d_{\mathrm{eff}}-1}$, while the normal $q$-path is extended to a (truncated) Hankel contour around the branch cut of $(-q)^{-\alpha}$. Letting the truncation radius tend to infinity only changes the integral by an exponentially small amount with the decay factor $\e^{-nq}$. We then evaluate the two components separately:
\begin{itemize}
    \item \textbf{Tangential Component (Geometry)}: 
    Since $\mathcal{H} \succ 0$, the quadratic form defines a convergent Gaussian integral on $\mathbb{R}^{d_{\mathrm{eff}}-1}$. Following the standard result in~\cite[Theorem 5.3]{pemantle2024analytic}, we have
    \begin{align}
        \int_{\mathbb{R}^{d_{\mathrm{eff}}-1}} \exp\left(-\frac{\|\bm{r}\|_2}{2} \bm{y}^\transp \mathcal{H} \bm{y}\right) \d \bm{y} = \frac{(2\pi)^{(d_{\mathrm{eff}}-1)/2}}{\sqrt{\det(\|\bm{r}\|_2 \mathcal{H})}}.
    \end{align}
    Regarding the Cauchy prefactor $(2\pi i)^{-d_{\mathrm{eff}}}$, its combination with the factor $i^{d_{\mathrm{eff}}-1}$ from the volume element $\d\bm{b}_{\mathrm{p}}$ yields
    \begin{align}
        \frac{1}{(2\pi i)^{d_{\mathrm{eff}}}} i^{d_{\mathrm{eff}}-1} = \frac{1}{(2\pi)^{d_{\mathrm{eff}}-1}} \cdot \frac{1}{2\pi i}.
    \end{align}
    Thus, the tangential part carries the real prefactor $(2\pi)^{-(d_{\mathrm{eff}}-1)}$, while the remaining $(2\pi i)^{-1}$ is reserved for the normal Hankel integral. Combining these terms, the net geometric contribution is
    \begin{align}
        \frac{(2\pi)^{(d_{\mathrm{eff}}-1)/2}}{(2\pi)^{d_{\mathrm{eff}}-1} \sqrt{\det(\|\bm{r}\|_2 \mathcal{H})}} = \frac{1}{\sqrt{(2\pi \|\bm{r}\|_2)^{d_{\mathrm{eff}}-1} \det(\mathcal{H})}}.
    \end{align}

    \item \textbf{Normal Component (Singularity)}: 
    To provide a unified treatment for both algebraic singularities ($\alpha \notin \mathbb{Z}$) and polar singularities ($\alpha \in \mathbb{Z}_{>0}$), we adopt the Hankel contour framework (see, e.g., the transfer theorem in~\cite[Theorem~VI.3]{flajolet2009analytic}). Recalling that the normal variable in~\eqref{eq:integral_factorized} is $q=-\tilde Q(\bm{\zeta})$, the singular factor is $(-q)^{-\alpha}$. In the local analysis we may extend the normal $q$-path to the standard Hankel contour $\mathcal{C}_{\mathrm{H}}$ (formally written with endpoints at infinity); this should be understood in the truncated sense discussed above, with the added tails contributing only exponentially small terms because of the factor $\e^{-nq}$. 
    
    The normal integral, together with the leading minus sign from $\d\tilde{Q}=-\d q$ and the reserved prefactor $(2\pi i)^{-1}$, matches the standard representation of the reciprocal Gamma function~\cite[Theorem~B.1]{flajolet2009analytic} as
    \begin{align}
        \mathcal{I}_{\mathrm{norm}} = -\frac{1}{2\pi i}\int_{\mathcal{C}_{\mathrm{H}}} (-q)^{-\alpha} \e^{-n q} \d q.
    \end{align}
    By substituting $s = n q$, we obtain
    \begin{align}
        \mathcal{I}_{\mathrm{norm}} = n^{\alpha-1} \left( -\frac{1}{2\pi i}\int_{\mathcal{C}_{\mathrm{H}}} (-s)^{-\alpha} \e^{-s} \d s \right) = \frac{n^{\alpha-1}}{\Gamma(\alpha)}.
    \end{align}
\end{itemize}

Finally, combining the invariance and geometric factor $\|\bm{c}\|_2 / \|\bm{v}\|_2$, the constant amplitude $\bm{w}^{-\bm{r}} P(\bm{w})$, the tangential Gaussian result, and the explicit evaluation of $\mathcal{I}_{\mathrm{norm}}$, we obtain the final asymptotic formula~\eqref{eq:final_asymptotics}.
\end{proof}

\begin{remark}
\label{rem:grad-scaling-test}
We verify the correctness of the gradient factor by checking invariance under the rescaling $Q \mapsto Q_c \defeq c\,Q$ for any constant $c>0$.
Since the generating function scales as $F_c(\bm{z}) = P(\bm{z})(cQ(\bm{z}))^{-\alpha} = c^{-\alpha}F(\bm{z})$, the asymptotic coefficient must satisfy the exact scaling law
\begin{align}
\label{eq:target-scaling}
    a_{\bm{r}}^{(c)} = c^{-\alpha}\,a_{\bm{r}}.
\end{align}
We now examine how the components of the derived asymptotic formula transform:
\begin{itemize}
    \item The gradient norm scales linearly as $\|\nabla_{\log}Q_c(\bm{w})\|_2 = c\,\|\nabla_{\log}Q(\bm{w})\|_2$.
    \item The scaling parameter $n = \|\bm{r}\|_2 / \|\nabla_{\log}Q(\bm{w})\|_2$ transforms inversely as
    \begin{align*}
         n_c = \frac{\|\bm{r}\|_2}{\|\nabla_{\log}Q_c(\bm{w})\|_2} = c^{-1}n.
    \end{align*}
    \item Consequently, the normal Hankel integral contribution scales as
    \begin{align*}
        \frac{n_c^{\alpha-1}}{\Gamma(\alpha)} = c^{-(\alpha-1)} \frac{n^{\alpha-1}}{\Gamma(\alpha)}.
    \end{align*}
\end{itemize}
To recover the required total scaling $c^{-\alpha}$ in \eqref{eq:target-scaling}, the remaining geometric prefactor must contribute exactly a factor of $c^{-\alpha} / c^{-(\alpha-1)} = c^{-1}$.
Since the gradient norm scales as $c^1$, it must appear with \emph{power one} in the denominator
\begin{align*}
    \frac{1}{\|\nabla_{\log}Q_c(\bm{w})\|_2} = c^{-1} \frac{1}{\|\nabla_{\log}Q(\bm{w})\|_2}.
\end{align*}
In contrast, a squared denominator would scale as $c^{-2}$, leading to an incorrect total scaling of $c^{-(\alpha+1)}$.
This confirms that the formula in the first edition \cite[Theorem~9.5.4]{pemantle2013analytic} (power one) is dimensionally consistent, whereas the expression in the second edition \cite[Theorem~9.45]{pemantle2024analytic} (squared norm) is incorrect.
\end{remark}

\subsection{Specialization to Gibbs and Bethe Partition Sums}
\label{app:subsec:specialization}

We now apply the general asymptotic formula from Lemma~\ref{lem:integral_evaluation} to the Gibbs and Bethe cases to complete the proof of Proposition~\ref{prop:block_asymp}. In both instances, the invariance vector $\bm{c}\!=\!(\bm{1}_m;-\bm{1}_m)$ yields the geometric factor $\|\bm{c}\|_2 \!=\! \sqrt{2m}$. The analytic part $P(\bm{w})$ is evaluated at the critical point where $\lambda_1(\bm{w})\!=\!1$, as specified below.

\subsubsection{Gibbs case ($\alpha\!=\!1$)}
Here the amplitude is $P\!=\!\PGn$ with
\begin{align*}
    \PGn(\bm{w})=\prod_{i=2}^m \frac{1}{1\!-\!\eratio_i}.
\end{align*}
Since $\alpha\!=\!1$, the singularity is a simple pole. The normal component contribution from Lemma~\ref{lem:integral_evaluation} is
\begin{align*}
    \frac{n^{\alpha-1}}{\Gamma(\alpha)} = \frac{n^{0}}{\Gamma(1)} = 1.
\end{align*}
Substituting these into the general formula yields
\begin{align}
    \ZGn \sim
    \frac{n \sqrt{2m} \cdot \bm{w}^{-\bm{r}}}{(2\pi)^{m-1} \|\bm{r}\|_2^m \sqrt{\det(\mathcal{H})}}
    \cdot
    \prod_{i=2}^{m}\frac{1}{1\!-\!\eratio_i},
    \label{app:eq:ZGn}
\end{align}
where we used $\|\nabla_{\log} Q(\bm{w})\|_2 = \|\bm{r}\|_2/n$, which recovers the first expression in Proposition~\ref{prop:block_asymp}.

\subsubsection{Bethe double-cover case ($\alpha\!=\!1/2$)}
Here the amplitude is $P\!=\!\PDBn$, and evaluating at $\lambda_1(\bm{w})\!=\!1$ gives
\begin{align*}
    \PDBn(\bm{w}) = \sqrt{\e} \cdot \prod_{i=2}^m \sqrt{\frac{\e^{\eratio_i}}{1\!-\!\eratio_i}}.
\end{align*}
Since $\alpha\!=\!1/2$, the singularity is of square-root type. The normal component contribution from Lemma~\ref{lem:integral_evaluation} is
\begin{align*}
    \frac{n^{\alpha-1}}{\Gamma(\alpha)} = \frac{n^{-1/2}}{\Gamma(1/2)} = \frac{1}{\sqrt{\pi n}}.
\end{align*}
Combining these components yields
\begin{align}
    \ZDBn \sim
    \frac{n \sqrt{2m} \cdot \bm{w}^{-\bm{r}}}{(2\pi)^{m-1} \|\bm{r}\|_2^m \sqrt{\det(\mathcal{H})}}
    \cdot
    \sqrt{\frac{\e}{\pi n}}
    \cdot
    \prod_{i=2}^{m}\sqrt{\frac{\e^{\eratio_i}}{1\!-\!\eratio_i}},
    \label{app:eq:ZBn}
\end{align}
again using $\|\nabla_{\log} Q(\bm{w})\|_2 = \|\bm{r}\|_2/n$, which recovers the second expression in Proposition~\ref{prop:block_asymp}.

\qed

To end this section of proving Proposition~\ref{prop:block_asymp}, we discuss a special case to verify the consistency of our result.

\begin{example}
\label{app:ex:all_one_matrix}
Consider the case where the matrix of interest $\matr{A}$ is the all-one matrix $\Ones_n$, where the base matrix is the all-one matrix $\matr{B} = \Ones_m$. We evaluate the asymptotic Gibbs permanent of the block matrix $\matr{A}$ constructed with uniform block sizes $k_i = \ell_j = \bar{n} \defeq n/m$. In this setting, $\matr{A}$ is effectively the all-one matrix of size $n \times n$. While setting $m\!=\!1$ is the most direct approach, keeping a general $m$ serves as a robust consistency check to verify that the parameter cancels out in the final asymptotic expression.

The defining function simplifies to
\begin{align*}
    Q(\bm{z}) = 1 \!-\! \Biggl(\sum_{i\in[m]} t_i\Biggr)\cdot \Biggl(\sum_{j\in[m]} u_j\Biggr).
\end{align*}
By symmetry and scaling invariance, we set the critical point to be $\bm{w} \!=\! m^{-1} \bm{1}_{2m}$. We summarize the key analytic quantities evaluated at this point:
\begin{itemize}
    \item Amplitude: $\PGn(\bm{w}) = 1$.
    \item Gradient norm: $\|\nabla_{\log} Q(\bm{w})\|_2 \!=\! \|\bm{v}\|_2 \!=\! \sqrt{2/m}$.
    \item Spectral factor: Since $\lambda_1\!=\!1$ is the only non-zero eigenvalue of the base matrix $\matr{B}=\Ones_m$, the product of spectral factors simplifies to $\prod_{i=2}^m (1\!-\!\lambda_i)^{-1} = 1$.
\end{itemize}

To determine the Hessian determinant, we explicitly compute the derivatives in the log-domain at the symmetric critical point $\bm{w} = m^{-1}\bm{1}_{2m}$. First, for the gradient $\bm{v} = \nabla_{\log} Q(\bm{w})$, the components satisfy
\begin{align}
    v_{t_k} 
    &= \frac{\partial}{\partial \log t_k}\left(1 - \Biggl(\sum_{i\in[m]} t_i\Biggr)\cdot \Biggl(\sum_{j\in[m]} u_j\Biggr)\right) \nonumber \\
    &= -t_k \left(\sum_{j\in[m]} u_j\right)  = -\frac{1}{m}.
\end{align}
By symmetry, $v_{u_k} = -1/m$, yielding the gradient vector $\bm{v} = -m^{-1}(\bm{1}_m; \bm{1}_m)$ with norm $\|\bm{v}\|_2 = \sqrt{2/m}$.

Next, we evaluate the entries of the Hessian matrix $\matr{H}_{\bm{\zeta}^*} = \nabla^2 \tilde{Q}(\bm{\zeta}^*)$. The diagonal blocks ($\bm{t}$-$\bm{t}$ interactions) and off-diagonal blocks ($\bm{t}$-$\bm{u}$ interactions) are derived as
\begin{align}
    \frac{\partial^2 Q}{\partial \log t_i \, \partial \log t_j}
    & =  \frac{\partial}{\partial \log t_j} \left( \!-t_i \sum_{k} u_k\! \right) 
    \overset{\mathrm{(b)}}{=} -\frac{1}{m} \delta_{ij}, \\
    \frac{\partial^2 Q}{\partial \log t_i \, \partial \log u_j}
    &= \frac{\partial}{\partial \log u_j} \left( \!-t_i \sum_{k} u_k\! \right) 
    \overset{\mathrm{(c)}}{=} -t_i u_j = -\frac{1}{m^2},
\end{align}
where step $\mathrm{(b)}$ uses the linearity of the first derivative ($\partial_{t_j} t_i = \delta_{ij} t_i$), and step $\mathrm{(c)}$ captures the cross-variable interaction. Consequently, the full Hessian matrix assembles into the block form
\begin{align}
    \matr{H}_{\bm{\zeta}^*} = 
    \begin{pmatrix}
        -\frac{1}{m} \matr{I}_{m} & -\frac{1}{m^2} \Ones_m \\[6pt]
        -\frac{1}{m^2} \Ones_m & -\frac{1}{m} \matr{I}_{m}
    \end{pmatrix},
\end{align}
where $\matr{I}_m$ denotes the identity matrix of size $m \!\times\!m$ and $\Ones_m$ denotes the all-one matrix of size $m \!\times\! m$ (see Notation in Section~\ref{sec:intro}).

We project the full Hessian matrix onto the tangent space orthogonal to $\mathrm{span}(\bm{v}, \bm{c})$. Observing that the gradients and constraints span the block-wise constant vectors, i.e., $\mathrm{span}\{(\bm{1}_m; \bm{1}_m), (\bm{1}_m; -\bm{1}_m)\} = \mathrm{span}\{(\bm{1}_m; \bm{0}), (\bm{0}; \bm{1}_m)\}$, the tangent space decouples into two independent zero-sum subspaces for $\bm{t}$ and $\bm{u}$.

To construct the projection explicitly, let $\matr{E} \in \mathbb{R}^{m \times (m-1)}$ be a matrix whose columns form an orthonormal basis for the zero-sum subspace $\{\bm{x} \in \mathbb{R}^m \mid \bm{1}_m^\transp \bm{x} = 0\}$. The global projection matrix $\matr{V} \in \mathbb{R}^{2m \times (2m-2)}$ is then block-diagonal as
\begin{align}
    \matr{V} \defeq 
    \begin{pmatrix}
        \matr{E} & \matr{0} \\
        \matr{0} & \matr{E}
    \end{pmatrix}.
\end{align}
Utilizing the properties $\matr{E}^\transp \matr{E} = \matr{I}_{m-1}$ and $\Ones_m \matr{E} = \matr{0}$, the projection eliminates the off-diagonal blocks of $\matr{H}_{\bm{\zeta}^*}\!$ while preserving the identity components of the diagonal blocks. This yields the projected Hessian
\begin{align}
    \matr{H}_{\mathrm{p}} = \matr{V}^\transp \matr{H}_{\bm{\zeta}^*} \matr{V} 
    = \begin{pmatrix}
        -\frac{1}{m} \matr{I}_{m-1} & \matr{0} \\
        \matr{0} & -\frac{1}{m} \matr{I}_{m-1}
    \end{pmatrix}
    = -\frac{1}{m} \matr{I}_{2m-2}.
\end{align}
Finally, we compute the normalized Hessian matrix $\mathcal{H} = -\|\bm{v}\|_2^{-1} \matr{H}_{\mathrm{p}}$. With the scaling factor $\|\bm{v}\|_2^{-1} = \sqrt{m/2}$, the matrix becomes a scaled identity $\mathcal{H} = (2m)^{-1/2} \matr{I}_{2m-2}$, leading to the determinant
\begin{align}
    \det(\mathcal{H}) = \left( (2m)^{-1/2} \right)^{2m-2} = (2m)^{-(m-1)}.
\end{align}

We now apply the general asymptotic formula derived in~\eqref{app:eq:ZGn}. Substituting the specific geometric quantities derived above—namely the partition norm $\|\bm{r}\|_2 = \bar{n}\sqrt{2m}$, the saddle point value $\bm{w}^{-\bm{r}} = m^{2n}$, and the Hessian determinant $\det(\mathcal{H}) = (2m)^{-(m-1)}$ allows us to evaluate the coefficient $\ZGn$ directly as
\begin{align}
    \ZGn 
    &\sim \frac{n\sqrt{2m}\cdot\bm{w}^{-\bm{r}}}{(2\pi)^{m-1}\|\bm{r}\|_2^m\sqrt{\det(\mathcal{H})}} \cdot \prod_{i=2}^{m}\frac{1}{1-\eratio_i} \nonumber \\
    &\overset{\mathrm{(a)}}{=} \frac{n\sqrt{2m} \cdot m^{2n}}{(2\pi)^{m-1}\bigl(\bar{n}\sqrt{2m}\bigr)^m\sqrt{(2m)^{-(m-1)}}} \nonumber \\
    &\overset{\mathrm{(b)}}{=} \frac{n\sqrt{2m} \cdot m^{2n}}{(2\pi)^{m-1}\bar{n}^m\sqrt{(2m)^{m-(m-1)}}} \nonumber \\
    &\overset{\mathrm{(c)}}{=} \frac{m \cdot m^{2n}}{(2\pi \bar{n})^{m-1}},
\end{align}
where step $\mathrm{(a)}$ substitutes the parameter values, step $\mathrm{(b)}$ simplifies the $(2m)$ factor, and step $\mathrm{(c)}$ uses $n/\bar{n}^m = m/\bar{n}^{m-1}$.

Finally, we bridge the asymptotic coefficient to the permanent using the relation established in Lemma~\ref{lem:bridge-ogf-nfg}. Applying the Stirling approximation for the factorials $\bm{k}! \cdot \bm{\ell}! \sim (2\pi \bar{n})^m (\bar{n}/\e)^{2n}$ alongside our derived coefficient yields
\begin{align}
    \perm(\matr{A})^2 
    &\sim \bm{k}! \cdot \bm{\ell}! \cdot \ZGn \nonumber \\
    &\sim \left( (2\pi \bar{n})^m \left(\frac{\bar{n}}{\e}\right)^{2n} \right) \cdot \left( \frac{m \cdot m^{2n}}{(2\pi \bar{n})^{m-1}} \right) \nonumber \\
    &= (2\pi \bar{n}) \cdot m \cdot m^{2n} \cdot \left(\frac{\bar{n}}{\e}\right)^{2n} \nonumber \\
    &= 2\pi n \left(\frac{n}{\e}\right)^{2n},
\end{align}
where the last equality follows by substituting $\bar{n} = n/m$, demonstrating the complete cancellation of the auxiliary parameter $m$ via the identities $m^{-2n} \cdot m^{2n} = 1$ and $(n/m) \cdot m = n$. Taking the square root recovers
\begin{align}
    \perm(\matr{A}) \sim \sqrt{2\pi n} \left(\frac{n}{\e}\right)^{n} \sim n!.
\end{align}
\end{example}

\begin{remark}
\label{rem:direct_expansion}
For the simple case of the all-one matrix discussed in Example~\ref{app:ex:all_one_matrix}, where $P_G(\bm{z}) = 1$ and $Q(\bm{z}) = 1 \!-\! \bigl(\sum_{i\in[m]} t_i\bigr)\bigl(\sum_{j\in[m]} u_j\bigr)$, the generating function $F_G(\bm{z}) = \PGn(\bm{z})/Q(\bm{z})$ admits a direct expansion using the multinomial theorem
\begin{align}
    F_G(\bm{z}) & = \frac{\PGn(\bm{z})}{Q(\bm{z})} = \frac{1}{Q(\bm{z})} \nonumber\\
    &= \sum_{n=0}^{\infty} \left( \Biggl(\sum_{i\in[m]} t_i\Biggr) \Biggl(\sum_{j\in[m]} u_j\Biggr) \right)^n \nonumber \\
    &= \sum_{n=0}^{\infty} \left( \sum_{\bm{k}} \binom{n}{\bm{k}} \bm{t}^{\bm{k}} \right) \left( \sum_{\bm{\ell}} \binom{n}{\bm{\ell}} \bm{u}^{\bm{\ell}} \right),
\end{align}
where $\bm{t} = (t_1; \dots; t_m)$ and $\bm{u} = (u_1; \dots; u_m)$. The inner sums range over all multi-indices $\bm{k}, \bm{\ell} \in \mathbb{Z}_{\ge 0}^m$ that satisfy the multinomial degree constraint $|\bm{k}| = |\bm{\ell}| = n$, where $|\bm{k}| \defeq \sum_{i\in[m]} k_i$. Extracting the coefficient for $\bm{t}^{\bm{k}}\bm{u}^{\bm{\ell}}$ and multiplying by $\bm{k}! \cdot \bm{\ell}!$ yields the exact squared permanent
\begin{align}
    \perm(\matr{A})^2 = \left( \binom{n}{\bm{k}} \binom{n}{\bm{\ell}} \right) \cdot \bm{k}! \cdot \bm{\ell}! = (n!)^2.
\end{align}
This exact result matches the Stirling approximation derived via ACSV. It is worth noting that such a direct combinatorial expansion is feasible only for trivial base matrices like $\matr{B}=\Ones_m$. For general block matrices, the generating function lacks such simple structure, making the ACSV framework essential for deriving asymptotic approximations.
\end{remark}

\section{Deriving $(\bm{t^*}\!,\bm{u^*}\!)$ from Sinkhorn's Theorem}
\label{app:fp-derive-tu-m2}

\newcommand*{\vl}{\bm{\overleftarrow{V}}}
\newcommand*{\vr}{\bm{\overrightarrow{V}}}
\newcommand*{\vleft}[1]{\overleftarrow{V}_{\mkern-6mu #1}}
\newcommand*{\vright}[1]{\overrightarrow{V}_{\mkern-6mu #1}}

It turns out that the ACSV saddle-point parameters $(\bm{t^*}\!, \bm{u^*}\!)$ can be obtained with the help of Sinkhorn's theorem. We restrict attention to the two-type block-constant setting $m=2$, where $\matr{B}=(b_{i,j})\in\mathbb{R}_{>0}^{2\times2}$ and the expanded matrix $\matr{A}$ consists of four constant blocks of sizes $k_i\times \ell_j$ ($i,j\in\{1,2\}$), with every entry in block $(i,j)$ equal to $b_{i,j}$. Note, however, that the results in this appendix can be straightforwardly generalized to any positive integer $m$.

In this section, we use $\partial_x \defeq \partial/\partial x$ and $\partial_y \defeq \partial/\partial y$ for derivatives with respect to logarithmic coordinates $x,y \in \{\log t_1,\dots,\log t_m,\log u_1,\dots,\log u_m\}$. To avoid ambiguity, we specify that the operator $\partial$ applies strictly to the single object immediately following it.

We begin by recalling the classical Sinkhorn theorem~\cite{sinkhorn1964relationship}, which characterizes the diagonal scaling of strictly positive matrices.

\begin{lemma}
\label{app:lem:sinkhorn-general}
For any strictly positive square matrix $\matr{A}$, there exists a unique doubly stochastic matrix $\matr{U}$ expressible in the form
\begin{align}
    \matr{U} = \matr{D}_1 \matr{A} \matr{D}_2,
    \label{app:eq:sinkhorn-scaling-form}
\end{align}
where $\matr{D}_1$ and $\matr{D}_2$ are diagonal matrices with strictly positive diagonal entries. While $\matr{U}$ is unique, the scaling matrices $\matr{D}_1$ and $\matr{D}_2$ are unique up to a common scalar factor.
\end{lemma}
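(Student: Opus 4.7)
The plan is to establish existence via a strictly convex variational principle and then deduce uniqueness from the strict positivity of the entries of $\matr{A}$.

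For existence and uniqueness of $\matr{U}$, I would consider the functional $\Phi(\matr{U}) \defeq \sum_{i,j} U_{i,j}\log\!\bigl(U_{i,j}/a_{i,j}\bigr)$ on the set $\mathcal{U}$ of doubly stochastic matrices of the same size as $\matr{A}$, with the usual convention $0\log 0 = 0$. The set $\mathcal{U}$ is compact and convex, and $\Phi$ is continuous, strictly convex, and finite-valued (the finiteness relies on $a_{i,j} > 0$). Hence $\Phi$ attains a unique minimizer $\matr{U}^*$ on $\mathcal{U}$. Attaching Lagrange multipliers $(\alpha_i)$ and $(\beta_j)$ to the row- and column-sum constraints, the first-order stationarity condition yields $U_{i,j}^* = a_{i,j}\,d_{1,i}\,d_{2,j}$ with positive scalars $d_{1,i}, d_{2,j} > 0$. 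Because $a_{i,j} > 0$, the right-hand side is strictly positive at every $(i,j)$, so $\matr{U}^*$ lies in the relative interior of $\mathcal{U}$ and no boundary case needs separate treatment. Setting $\matr{D}_1 \defeq \diag(d_{1,i})$ and $\matr{D}_2 \defeq \diag(d_{2,j})$ then gives the required factorization $\matr{U}^* = \matr{D}_1 \matr{A} \matr{D}_2$, and uniqueness of $\matr{U}$ follows from the strict convexity of $\Phi$ together with the fact that (by the linear-constraint qualification) the Karush-Kuhn-Tucker conditions are both necessary and sufficient, so any diagonal scaling that happens to yield a doubly stochastic matrix must coincide with this unique minimizer.

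For uniqueness of the pair $(\matr{D}_1, \matr{D}_2)$ up to a common scalar, suppose two positive diagonal pairs $(\matr{D}_1, \matr{D}_2)$ and $(\matr{D}_1', \matr{D}_2')$ produce the same $\matr{U}$. Setting $\matr{E}_1 \defeq (\matr{D}_1')^{-1}\matr{D}_1$ and $\matr{E}_2 \defeq \matr{D}_2' \matr{D}_2^{-1}$ transforms $\matr{D}_1 \matr{A} \matr{D}_2 = \matr{D}_1' \matr{A} \matr{D}_2'$ into $\matr{E}_1 \matr{A} = \matr{A}\matr{E}_2$. Reading this entry-wise yields $(\matr{E}_1)_{i,i}\,a_{i,j} = a_{i,j}\,(\matr{E}_2)_{j,j}$ for every $i, j$, and strict positivity of $a_{i,j}$ permits cancellation to $(\matr{E}_1)_{i,i} = (\matr{E}_2)_{j,j}$. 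Since this equality holds for all $i, j$ simultaneously, both $\matr{E}_1$ and $\matr{E}_2$ equal $c\matr{I}$ for some $c > 0$, which rearranges to $\matr{D}_1' = c^{-1}\matr{D}_1$ and $\matr{D}_2' = c\matr{D}_2$; this is exactly the claimed one-parameter ambiguity.

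The main obstacle is the existence half. Sinkhorn's original argument proceeds constructively by alternately renormalizing rows and columns of $\matr{A}$ and establishing convergence through a delicate compactness/monotonicity argument (e.g., with the permanent or an entropy functional as Lyapunov function). The variational approach sketched above bypasses this convergence analysis by jumping directly to the minimizer of $\Phi$, but it then shifts the burden to verifying that the interior first-order conditions are applicable, and this is precisely where the strict positivity of $\matr{A}$ is indispensable: it simultaneously ensures finiteness of $\Phi$ on all of $\mathcal{U}$, forces the minimizer into the relative interior, and produces finite positive diagonal entries for $\matr{D}_1$ and $\matr{D}_2$.
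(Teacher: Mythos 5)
The paper does not actually prove this lemma: it is stated as the classical Sinkhorn theorem and discharged by a citation to Sinkhorn's 1964 paper, whose original argument is the iterative row/column renormalization together with a convergence analysis. Your variational route---minimizing the relative entropy $\Phi(\matr{U})=\sum_{i,j}U_{i,j}\log(U_{i,j}/a_{i,j})$ over the Birkhoff polytope---is a legitimate, well-known alternative that trades the iteration for a convexity argument, and your treatment of the uniqueness of $(\matr{D}_1,\matr{D}_2)$ up to a common scalar (cancelling $a_{i,j}>0$ in $\matr{E}_1\matr{A}=\matr{A}\matr{E}_2$) is complete and correct.

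There is, however, one genuine circularity in the existence half. You derive the form $U^*_{i,j}=a_{i,j}\,d_{1,i}\,d_{2,j}$ from stationarity with multipliers attached only to the row- and column-sum equality constraints, and then conclude from the positivity of that expression that $\matr{U}^*$ lies in the relative interior of $\mathcal{U}$. But that stationarity condition is only valid once the nonnegativity constraints are known to be inactive, i.e., interiority must be established \emph{before} the Lagrange analysis, not read off from its conclusion. The missing (standard) step is: if $U^*_{i,j}=0$ for some $(i,j)$, perturb $\matr{U}^*$ a small amount toward the uniform doubly stochastic matrix; since $\frac{d}{dx}\bigl[x\log(x/a_{i,j})\bigr]\to-\infty$ as $x\to 0^{+}$ while every other term of $\Phi$ has a bounded one-sided derivative along this direction (finiteness being exactly where $a_{i,j}>0$ enters), the directional derivative of $\Phi$ is $-\infty$, contradicting minimality. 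With that argument inserted, the interior first-order conditions apply and the rest of your proof goes through.
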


Since $\matr{U}$ is doubly stochastic, its row and column sums must equal $1$. Combined with Eq.~\eqref{app:eq:sinkhorn-scaling-form}, this implies constraints on the diagonal entries of $\matr{D}_1$ and $\matr{D}_2$. Leveraging the block-constant structure of $\matr{A}$ described in Assumption~\ref{assump:general_block}, we can explicitly characterize these constraints.

\begin{lemma}
\label{app:lem:sinkhorn-block-structure}
Let $\matr{A}$ be the block-constant matrix defined above. There exist positive scalars $\vright{1}, \vright{2}, \vleft{1}, \vleft{2}$ satisfying the coupled system
\begin{align}
    \vright{1} & = \frac{1}{\ell_1b_{1,1}\vleft{1}\!+\!\ell_2b_{1,2}\vleft{2}},\label{app:eq:fp-m2-vr1-derive}\\
    \vright{2} & = \frac{1}{\ell_1b_{2,1}\vleft{1}\!+\!\ell_2b_{2,2}\vleft{2}},\label{app:eq:fp-m2-vr2-derive}\\
    \vleft{1} & = \frac{1}{k_1b_{1,1}\vright{1}\!+\!k_2b_{2,1}\vright{2}},\label{app:eq:fp-m2-vl1-derive}\\
    \vleft{2} & = \frac{1}{k_1b_{1,2}\vright{1}\!+\!k_2b_{2,2}\vright{2}},\label{app:eq:fp-m2-vl2-derive}
\end{align}
such that the diagonal scaling matrices in \eqref{app:eq:sinkhorn-scaling-form} are given by
\begin{align}
    \matr{D}_1 & = \diag(\underbrace{\vright{1},\dots,\vright{1}}_{k_1}, \underbrace{\vright{2},\dots,\vright{2}}_{k_2}),
    \\
    \matr{D}_2 & = \diag(\underbrace{\vleft{1},\dots,\vleft{1}}_{\ell_1}, \underbrace{\vleft{2},\dots,\vleft{2}}_{\ell_2}).
\end{align}
\end{lemma}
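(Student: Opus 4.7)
The plan is to invoke Sinkhorn's theorem to obtain diagonal scaling matrices, then use the invariance of $\matr{A}$ under within-block row and column permutations together with the essential uniqueness of the Sinkhorn factorization to force $\matr{D}_1$ and $\matr{D}_2$ to be block-constant, and finally read off the four fixed-point equations from the row- and column-sum conditions on $\matr{D}_1\matr{A}\matr{D}_2$.

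First, Lemma~\ref{app:lem:sinkhorn-general} applied to the strictly positive matrix $\matr{A}$ immediately yields diagonal matrices $\matr{D}_1,\matr{D}_2$ with strictly positive entries such that $\matr{U} \defeq \matr{D}_1 \matr{A} \matr{D}_2$ is doubly stochastic, and the pair $(\matr{D}_1,\matr{D}_2)$ is unique up to a common positive scalar.

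The central step is to upgrade this pair to a block-constant one. Let $\matr{P}$ be any permutation matrix that permutes rows within a single row block of $\matr{A}$ and fixes all other rows. Because the rows within that block of $\matr{A}$ are identical, $\matr{P}\matr{A}=\matr{A}$ and equivalently $\matr{P}^\transp\matr{A}=\matr{A}$, so
\begin{align*}
    (\matr{P}\matr{D}_1\matr{P}^\transp)\,\matr{A}\,\matr{D}_2
    \;=\; \matr{P}\matr{D}_1(\matr{P}^\transp\matr{A})\matr{D}_2
    \;=\; \matr{P}\matr{D}_1\matr{A}\matr{D}_2
    \;=\; \matr{P}\matr{U},
\end{align*}
and $\matr{P}\matr{U}$ is still doubly stochastic since permuting rows preserves both row and column sums. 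The essential uniqueness clause of the Sinkhorn factorization then forces $\matr{P}\matr{D}_1\matr{P}^\transp = c\,\matr{D}_1$ and $\matr{D}_2 = c^{-1}\matr{D}_2$ for some $c>0$; the second identity pins down $c=1$, so $\matr{P}\matr{D}_1\matr{P}^\transp = \matr{D}_1$. Ranging $\matr{P}$ over all within-block row swaps shows that the diagonal entries of $\matr{D}_1$ are constant on each row block, and the mirror argument with within-block column permutations $\matr{Q}$ (using $\matr{A}\matr{Q}=\matr{A}$) does the same for $\matr{D}_2$. I would then label the four common values as $\vright{1},\vright{2},\vleft{1},\vleft{2}$ and substitute this block-constant ansatz into $\matr{U}=\matr{D}_1\matr{A}\matr{D}_2$: any row in row block $i$ of $\matr{U}$ then has sum $\vright{i}\bigl(\ell_1 b_{i,1}\vleft{1}+\ell_2 b_{i,2}\vleft{2}\bigr)$, and setting this equal to $1$ yields \eqref{app:eq:fp-m2-vr1-derive}--\eqref{app:eq:fp-m2-vr2-derive}; the dual computation on column sums in column block $j$ yields \eqref{app:eq:fp-m2-vl1-derive}--\eqref{app:eq:fp-m2-vl2-derive}.

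The main obstacle I anticipate is the uniqueness step: it is essential to pin down $c=1$ rather than merely $c>0$, since this is precisely what upgrades the ``unique up to a common scalar'' clause of Lemma~\ref{app:lem:sinkhorn-general} into the literal equality $\matr{P}\matr{D}_1\matr{P}^\transp=\matr{D}_1$ that delivers block-constancy of the scaling. Once this is secured, everything else is bookkeeping, and the same argument extends verbatim to general $m$ with obvious notational changes.
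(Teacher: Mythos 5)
Your proposal is correct and follows essentially the same route as the paper: invoke Sinkhorn's theorem for the strictly positive matrix $\matr{A}$, use the uniqueness of the scaling to force the diagonal factors of $\matr{D}_1$ and $\matr{D}_2$ to be constant within each row/column type, and then read off the four fixed-point equations from the row- and column-sum constraints on $\matr{U}=\matr{D}_1\matr{A}\matr{D}_2$. The only difference is that you make the uniqueness step explicit via conjugation by within-block permutations and pin down the scalar $c=1$, whereas the paper asserts the within-type uniformity in a single sentence; your version supplies detail the paper glosses over rather than taking a different approach.
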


\begin{proof}
The block-constant structure of $\matr{A}$ implies that all rows of type $i$ are identical, as are all columns of type $j$. By the uniqueness of the Sinkhorn scaling, the associated diagonal factors in $\matr{D}_1$ and $\matr{D}_2$ must be uniform within each type. Substituting these uniform values into the row-sum constraint $\sum_{j\in[n]} u_{1,j} = 1$ yields
\begin{align*}
    1 & = \sum_{j\in[n]} \vright{1} A_{1,j} (\matr{D}_2)_{jj} \\
      & = \vright{1} \left( \sum_{j \in \text{ Type } 1} b_{1,1} \vleft{1} + \sum_{j \in \text{ Type } 2} b_{1,2} \vleft{2} \right) \\
      & = \vright{1} (\ell_1 b_{1,1} \vleft{1} + \ell_2 b_{1,2} \vleft{2}).
\end{align*}
Rearranging this equation recovers \eqref{app:eq:fp-m2-vr1-derive}. The remaining equations \eqref{app:eq:fp-m2-vr2-derive}--\eqref{app:eq:fp-m2-vl2-derive} follow symmetrically from the constraints for row type 2 and both column types.
\end{proof}

Theorem~\ref{app:thm:sinkhorn-to-tu-m2} below expresses the ACSV parameters in terms of these Sinkhorn scaling factors.

\begin{theorem}
\label{app:thm:sinkhorn-to-tu-m2}
Let $m\!=\!2$ and $\matr{B}\in\mathbb{R}_{>0}^{2\times 2}$. Fix $\bm{r}=(k_1;k_2;\ell_1;\ell_2)$ with $k_1\!+k_2=\ell_1\!+\ell_2=n$. Let $\vright{1},\vright{2},\vleft{1},\vleft{2}$ be the positive scalars given by Lemma~\ref{app:lem:sinkhorn-block-structure}. Construct the ACSV parameters $(\bm{t^*}\!,\bm{u^*}\!)$ explicitly as
\begin{align}
    t_1^* = k_1 \vright{1}^2,\quad t_2^* = k_2 \vright{2}^2,\quad
    u_1^* = \ell_1 \vleft{1}^2,\quad u_2^* = \ell_2 \vleft{2}^2.
    \label{eq:thm-def-tu-explicit}
\end{align}
The choice $\bm{w}=(\bm{t^*}\!;\bm{u^*}\!)$ constitutes a solution to the general ACSV critical point equations \eqref{app:eq:critical_eq} with scaling factor $\eta=n$. Specifically:

\begin{itemize}
    \item \textbf{Singularity constraint:} The condition $Q(\bm{w})\!=\!0$ corresponds to the unit spectral radius requirement for the symmetric state-transition matrix $\matr{S}(\bm{t^*}\!,\bm{u^*}\!)$ in Definition~\ref{def:S-general}, i.e.,
    \begin{align}
        \lambda_1(\bm{t^*}\!,\bm{u^*}\!) = 1.
    \end{align}

    \item \textbf{Gradient alignment:} The condition $\bm{r} = -\eta \nabla_{\log} Q(\bm{w})$ is satisfied componentwise, meaning the scaled logarithmic derivatives of the dominant eigenvalue match the respective block dimensions. That is, with $\eta=n$, for all $i,j\in\{1,2\}$,
    \begin{align}
        n \left. \frac{\partial \log \lambda_1}{\partial \log t_i} \right|_{\bm{w}} = k_i,
        \qquad
        n \left. \frac{\partial \log \lambda_1}{\partial \log u_j} \right|_{\bm{w}} = \ell_j.
    \end{align}
\end{itemize}
\end{theorem}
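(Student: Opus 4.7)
The plan is to produce an explicit Perron eigenpair of $\matr{W}(\bm{t}^*,\bm{u}^*)$ from the Sinkhorn data of Lemma~\ref{app:lem:sinkhorn-block-structure}, thereby verifying $\lambda_1=1$ by direct substitution, and then to read off $\partial\log\lambda_1/\partial\log t_i$ and $\partial\log\lambda_1/\partial\log u_j$ via first-order spectral perturbation. First, I would take $\bm{p}\in\mathbb{R}_{>0}^{2}$ with $p_j \defeq 1/\vleft{j}$ as a candidate right eigenvector. Substituting $t_i^*=k_i\vright{i}^2$ and $u_j^*=\ell_j\vleft{j}^2$ into $(\matr{W}\bm{p})_j=\sum_{j'}\sum_i b_{i,j}\,t_i^*\,b_{i,j'}\,u_{j'}^*\,p_{j'}$ and collapsing the inner sum $\sum_{j'}\ell_{j'}b_{i,j'}\vleft{j'}=1/\vright{i}$ by the row-sum identities~\eqref{app:eq:fp-m2-vr1-derive}--\eqref{app:eq:fp-m2-vr2-derive}, followed by $\sum_i k_i b_{i,j}\vright{i}=1/\vleft{j}$ from the column-sum identities~\eqref{app:eq:fp-m2-vl1-derive}--\eqref{app:eq:fp-m2-vl2-derive}, produces $\matr{W}\bm{p}=\bm{p}$. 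Strict positivity of $\bm{p}$ together with Perron--Frobenius then certifies $\lambda_1(\bm{t}^*,\bm{u}^*)=1$, i.e., the singularity condition.

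For the gradient alignment, I would invoke the first-order perturbation identity $\partial_{t_i}\lambda_1 = \bm{q}^\transp(\partial_{t_i}\matr{W})\bm{p}$ for the simple eigenvalue $\lambda_1$, with $\bm{q}$ the left eigenvector of $\matr{W}$ normalized so that $\bm{q}^\transp\bm{p}=1$. Passing through the symmetrization $\matr{S}=\diag(\bm{u}^*)^{1/2}\matr{W}\diag(\bm{u}^*)^{-1/2}$ of Definition~\ref{def:S-general}, the principal unit eigenvector of $\matr{S}$ has components $s_j=\sqrt{\ell_j/n}$, which after transforming back to $\matr{W}$ and applying the $\bm{q}^\transp\bm{p}=1$ normalization fixes $p_j=1/(\sqrt{n}\,\vleft{j})$ and $q_j = \ell_j\vleft{j}/\sqrt{n}$. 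Using $\partial W_{j,k}/\partial t_i = b_{i,j}b_{i,k}u_k^*$, the bilinear form factorizes as $\bigl(\sum_j q_j b_{i,j}\bigr)\bigl(\sum_k b_{i,k}u_k^* p_k\bigr)$, and each factor reduces to $1/(\sqrt{n}\,\vright{i})$ via the same Sinkhorn identities, so $\partial_{t_i}\lambda_1|_*=1/(n\vright{i}^2)$. Multiplying by $t_i^*/\lambda_1 = k_i\vright{i}^2$ then yields $n\,\partial\log\lambda_1/\partial\log t_i|_* = k_i$, and the companion identity for $u_j^*$ follows from the symmetric argument in the column variables.

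The computations are routine once $\bm{p}$ and $\bm{q}$ are pinned down, so the main obstacle is bookkeeping rather than ingenuity: namely, keeping the row/column accounting consistent through $\matr{W}=\matr{B}^\transp\diag(\bm{t})\matr{B}\diag(\bm{u})$, fixing the biorthogonal normalization via the symmetric $\matr{S}$, and ensuring that each bilinear factor $\sum_j q_j b_{i,j}$ and $\sum_k b_{i,k}u_k^* p_k$ is recognized as a rescaled row/column constraint from Lemma~\ref{app:lem:sinkhorn-block-structure}. As an internal consistency check, summing the resulting $k_i/n$ over $i\in\{1,2\}$ gives $1$, which matches Euler's identity for the separate degree-$1$ homogeneity of $\lambda_1$ in $\bm{t}$ that was exploited in Lemma~\ref{app:lem:eta_value} to conclude $\eta=n$.
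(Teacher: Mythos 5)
Your proposal is correct and follows essentially the same route as the paper's proof: it exhibits the explicit Perron eigenpair induced by the Sinkhorn fixed-point identities, invokes Perron--Frobenius to certify $\lambda_1=1$, and then applies first-order eigenvalue perturbation with the biorthogonally normalized left/right eigenvectors to obtain $\partial\log\lambda_1/\partial\log t_i = k_i/n$ and $\partial\log\lambda_1/\partial\log u_j = \ell_j/n$. The only cosmetic difference is that the paper carries out the computation on an auxiliary similar matrix $\matr{M}$ with eigenvectors written in the Sinkhorn scaling variables, whereas you work with $\matr{W}$ (and the symmetrization $\matr{S}$) directly; the eigenvectors agree under the corresponding diagonal conjugation.
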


\begin{proof}
We first define the scaling vectors
\begin{align*}
    \vr \defeq 
    \begin{pmatrix}
        \vright{1}\\ 
        \vright{2}
    \end{pmatrix},
    \qquad
    \vl \defeq 
    \begin{pmatrix}
        \vleft{1}\\ 
        \vleft{2}
    \end{pmatrix}.
\end{align*} 
The Sinkhorn system \eqref{app:eq:fp-m2-vr1-derive}--\eqref{app:eq:fp-m2-vl2-derive} can be rewritten in matrix form as
\begin{align}
    \begin{pmatrix}
        \frac{1}{\vright{1}}\\[2pt]
        \frac{1}{\vright{2}}
    \end{pmatrix}
    &=
    \matr{B}
    \begin{pmatrix}
        \ell_1 & 0\\
        0 & \ell_2
    \end{pmatrix}
    \vl,
    \label{app:eq:fp-mat-left-derive}\\[6pt]
    \begin{pmatrix}
        \frac{1}{\vleft{1}}\\[2pt]
        \frac{1}{\vleft{2}}
    \end{pmatrix}
    &=
    \matr{B}^{\transp}
    \begin{pmatrix}
        k_1 & 0\\
        0 & k_2
    \end{pmatrix}
    \vr.
    \label{app:eq:fp-mat-right-derive}
\end{align}
To facilitate the analysis, we introduce the auxiliary matrix $\matr{M}(\bm{t},\bm{u})$, given as
\begin{align}
    \matr{M}(t_1,t_2,u_1,u_2)
    & \defeq
    \begin{pmatrix}
        \ell_1^{-1}u_1 & 0\\
        0 & \ell_2^{-1}u_2
    \end{pmatrix}
    \matr{B}^{\transp}
    \begin{pmatrix}
        k_1 & 0\\
        0 & k_2
    \end{pmatrix}\nonumber\\
    &\qquad \cdot
    \begin{pmatrix}
        k_1^{-1}t_1 & 0\\
        0 & k_2^{-1}t_2
    \end{pmatrix}
    \matr{B}
    \begin{pmatrix}
        \ell_1 & 0\\
        0 & \ell_2
    \end{pmatrix}.
    \label{app:eq:def-M-m2-derive}
\end{align}
Note that $\matr{M}$ is similar to the weighted transition matrix $\matr{S}(\bm{t},\bm{u})$ and thus shares its eigenvalues.

Next, we claim that $\bm{\phi} \defeq \vl$ is the right Perron eigenvector of $\matr{M}(\bm{t^*}\!,\bm{u^*}\!)$ with the eigenvalue $1$. Substituting the constructed parameters into the definition of $\matr{M}$ yields
\begin{align*}
    \matr{M}\vl
    &=
    \begin{pmatrix}
        \ell_1^{-1}u_1^* & 0\\
        0 & \ell_2^{-1}u_2^*
    \end{pmatrix}
    \matr{B}^{\transp}
    \begin{pmatrix}
        k_1 & 0\\
        0 & k_2
    \end{pmatrix}\\
    & \qquad\cdot
    \begin{pmatrix}
        k_1^{-1}t_1^* & 0\\
        0 & k_2^{-1}t_2^*
    \end{pmatrix}
    \matr{B}
    \begin{pmatrix}
        \ell_1 & 0\\
        0 & \ell_2
    \end{pmatrix}
    \vl \\
    &\overset{\mathrm{(a)}}{=}
    \begin{pmatrix}
        \ell_1^{-1}u_1^* & 0\\
        0 & \ell_2^{-1}u_2^*
    \end{pmatrix}
    \matr{B}^{\transp}
    \begin{pmatrix}
        k_1 & 0\\
        0 & k_2
    \end{pmatrix}
    \vr \\
    &\overset{\mathrm{(b)}}{=}
    \vl,
\end{align*}
where step $\mathrm{(a)}$ follows from applying \eqref{app:eq:fp-mat-left-derive} to the right-most matrix-vector product, and step $\mathrm{(b)}$ follows from applying \eqref{app:eq:fp-mat-right-derive} to the remaining expression. Thus, $\matr{M}\vl = \vl$, establishing that $1$ is an eigenvalue of $\matr{M}(\bm{t^*}\!,\bm{u^*}\!)$. To identify this eigenvalue as the dominant one, we invoke the Perron--Frobenius theorem. Since $\matr{M}$ is strictly positive, the Perron--Frobenius theorem implies that the spectral radius of $\matr{M}$ is the unique eigenvalue admitting a strictly positive eigenvector, and the associated positive eigenvector is unique up to scaling. Because $\vl>0$ and $\matr{M}\vl=\vl$, we must have $\lambda_1(\bm{t^*}\!,\bm{u^*}\!)=1$.

To analyze the gradient alignment constraints in~\eqref{app:eq:eta-component-k}--\eqref{app:eq:eta-component-ell}, we need to compute the logarithmic derivatives of the dominant eigenvalue at $\bm{w}=(\bm{t^*};\bm{u^*}\!)$. Define the row vector $\bm{\psi}^{\transp} \defeq (\ell_1/\vleft{1}, \ell_2/\vleft{2})$. Its normalization with respect to $\bm{\phi}$ is
\begin{align*}
    \bm{\psi}^\transp \bm{\phi}
    =
    \begin{pmatrix}
        \ell_1/\vleft{1} & \ell_2/\vleft{2}
    \end{pmatrix}
    \begin{pmatrix}
        \vleft{1}\\ \vleft{2}
    \end{pmatrix}
    =\ell_1+\ell_2=n.
\end{align*}
A similar direct calculation using \eqref{app:eq:fp-mat-left-derive}--\eqref{app:eq:fp-mat-right-derive} verifies that $\bm{\psi}^\transp \matr{M} = \bm{\psi}^\transp$. Thus, $\bm{\psi}^{\transp}$ is the left Perron eigenvector of $\matr{M}$ with the eigenvalue $1$.

We seek to evaluate the sensitivity of the eigenvalue with respect to the parameter $x = \log t_1$. Although the ACSV gradient alignment condition requires the logarithmic derivative $\partial \log \lambda_1 / \partial \log t_1$, the manifold constraint $\lambda_1(\bm{t^*}\!,\bm{u^*}\!)=1$ ensures that this coincides with the derivative $\partial_x\lambda_1$ at the solution $\bm{w}\!=\!(\bm{t^*}\!;\bm{u^*}\!)$. We determine this value using standard eigenvalue perturbation theory (see, e.g.,~\cite{kato2013perturbation}).

Consider a small perturbation $\varepsilon$ in the parameter $x = \log t_1$. This induces perturbations in the matrix $\matr{M} \to \matr{M} \!+\! \varepsilon \partial_x\matr{M} \!+\! O(\varepsilon^2)$, the eigenvalue $\lambda_1 \to \lambda_1 \!+\! \varepsilon \partial_x\lambda_1 \!+\! O(\varepsilon^2)$, and the right Perron eigenvector $\bm{\phi} \to \bm{\phi} + \varepsilon \partial_x\bm{\phi} + O(\varepsilon^2)$. The eigenvalue equation for the perturbed system is
\begin{align*}
    & \bigl(\matr{M} + \varepsilon \partial_x\matr{M} + O(\varepsilon^2)\bigr) \bigl(\bm{\phi} + \varepsilon \partial_x\bm{\phi} + O(\varepsilon^2)\bigr) \\
    & = 
    \bigl(\lambda_1 + \varepsilon \partial_x\lambda_1 + O(\varepsilon^2)\bigr) \bigl(\bm{\phi} + \varepsilon \partial_x\bm{\phi} + O(\varepsilon^2)\bigr).
\end{align*}
Expanding both sides and retaining only terms linear in $\varepsilon$, we obtain
\begin{align*}
    \matr{M}\bm{\phi} + \varepsilon(\matr{M}\partial_x\bm{\phi} \!+\! \partial_x\matr{M}\bm{\phi}) 
    \approx 
    \lambda_1\bm{\phi} + \varepsilon(\lambda_1\partial_x\bm{\phi} \!+\! \partial_x\lambda_1\bm{\phi}).
\end{align*}
Since $\matr{M}\bm{\phi} = \lambda_1\bm{\phi}$, the zero-order terms cancel. Equating the coefficients of $\varepsilon$ yields
\begin{align}
    \matr{M}\partial_x\bm{\phi} + \partial_x\matr{M}\bm{\phi} = \lambda_1\partial_x\bm{\phi} + \partial_x\lambda_1\bm{\phi}.
\end{align}

To isolate $\partial_x\lambda_1$, we left-multiply the equation by the left Perron eigenvector $\bm{\psi}^\transp$, yielding
\begin{align*}
    \bm{\psi}^\transp \matr{M}\partial_x\bm{\phi}
    + \bm{\psi}^\transp \partial_x\matr{M}\bm{\phi}
    &=
    \lambda_1 \bm{\psi}^\transp \partial_x\bm{\phi}
    + \partial_x\lambda_1 \bm{\psi}^\transp \bm{\phi}.
\end{align*}
Using $\bm{\psi}^\transp \matr{M} = \lambda_1 \bm{\psi}^\transp$, the term $\lambda_1 \bm{\psi}^\transp \partial_x\bm{\phi}$ cancels on both sides. Solving for $\partial_x\lambda_1$, we arrive at
\begin{align}
    \partial_x\lambda_1
    =
    \frac{\bm{\psi}^{\transp}\partial_x\matr{M}\bm{\phi}}{\bm{\psi}^{\transp}\bm{\phi}}.
    \label{app:eq:pertur}
\end{align}
Using the explicit form of $\partial_x\matr{M}$ and the parameters, the numerator of the RHS of~\eqref{app:eq:pertur} becomes
\begin{align*}
    \bm{\psi}^{\transp}\partial_x\matr{M}\bm{\phi}
    &=
    \bm{\psi}^{\transp}
    \begin{pmatrix}
        \ell_1^{-1}u_1^* & 0\\
        0 & \ell_2^{-1}u_2^*
    \end{pmatrix}
    \matr{B}^{\transp}
    \begin{pmatrix}
        k_1 & 0\\
        0 & k_2
    \end{pmatrix} \nonumber \\
    &\quad \cdot
    \begin{pmatrix}
        k_1^{-1}t_1^* & 0\\
        0 & 0
    \end{pmatrix}
    \matr{B}
    \begin{pmatrix}
        \ell_1 & 0\\
        0 & \ell_2
    \end{pmatrix}
    \vl \nonumber \\
    &\overset{\mathrm{(a)}}{=}
    k_1 \vright{1}
    \begin{pmatrix}
        \ell_1/\vleft{1} & \ell_2/\vleft{2}
    \end{pmatrix}
    \matr{B}^{\transp}
    \begin{pmatrix}
        1 \\ 0
    \end{pmatrix} \nonumber \\
    &\overset{\mathrm{(b)}}{=}
    k_1,
\end{align*}
where step $\mathrm{(a)}$ uses the relation derived in \eqref{app:eq:fp-mat-left-derive} that the right-most product yields $(\vright{1}, 0)^\transp$, and step $\mathrm{(b)}$ simplifies the scalar product using the first component of \eqref{app:eq:fp-mat-left-derive}. Consequently, we obtain
\begin{align}
    \partial_x\lambda_1
    =
    \frac{\bm{\psi}^\transp \partial_x\matr{M}\bm{\phi}}{\bm{\psi}^\transp \bm{\phi}}
    =
    \frac{k_1}{n}.
\end{align}
This calculation yields $k_1/n$ in the $\log t_1$ direction, confirming that the parameters derived from the Sinkhorn solution satisfy the componentwise ACSV constraint $k_1 = \eta \frac{\partial \log \lambda_1}{\partial \log t_1}$ with the scaling factor $\eta=n$. The corresponding conditions for $t_2, u_1, u_2$ follow immediately due to the symmetry of the setup.
\end{proof}

Finally, we highlight a connection between the system of equations in Lemma~\ref{app:lem:sinkhorn-block-structure} and the scaled Sinkhorn permanent $\perm_{\scSink}(\matr{A})$. Since the parameters $\vright{i}$ and $\vleft{j}$ are precisely the scaling factors in Sinkhorn's matrix-scaling iterations, the ACSV critical point $\bm{w}\!=\!(\bm{t^*}\!;\bm{u^*}\!)$ is intrinsically linked to the value of $\perm_{\scSink}(\matr{A})$.

We start by introducing the Sinkhorn permanent and its scaled variant. Fundamentally, these quantities are rooted in a variational framework. We refer the reader to~\cite{6804280} and~\cite{pmlr-v134-anari21a} for the explicit expressions of the Sinkhorn permanent $\perm_{\Sink}(\matr{A})$ and the scaled Sinkhorn permanent $\perm_{\scSink}(\matr{A})$ as the minima of the unscaled and scaled Sinkhorn free energy minimization problems, respectively. Following the notation in~\cite{10463076}, the scaled Sinkhorn permanent is related to the (unscaled) Sinkhorn permanent by the scaling factor $\e^{-n}$, i.e.,
\begin{align}
    \perm_{\scSink}(\matr{A}) \defeq \e^{-n}\cdot\perm_{\Sink}(\matr{A}).
    \label{app:eq:scsink-def}
\end{align}

Rearranging terms in~\eqref{app:eq:sinkhorn-scaling-form} yields the decomposition $\matr{A} = \matr{D}_1^{-1} \matr{U} \matr{D}_2^{-1}$. Based on this factorization, the Sinkhorn approximation is defined as the product of the permanents of these inverse scaling matrices, i.e., $\perm_{\Sink}(\matr{A}) \defeq \perm(\matr{D}_1^{-1})\cdot \perm(\matr{D}_2^{-1})$. Since the exact permanent factorizes as $\perm(\matr{A}) = \perm_{\Sink}(\matr{A})\perm(\matr{U})$, the accuracy of this approximation depends entirely on the permanent of the doubly stochastic matrix $\matr{U}$. Utilizing the established bounds $n!/n^n \le \perm(\matr{U}) \le 1$~\cite{Egorychev1981,Falikman1981}, the range for the ratio of the permanent of $\matr{A}$ to its scaled Sinkhorn approximation is
\begin{align}
    \e^{n}\cdot \frac{n!}{n^{n}}
    \;\le\;
    \frac{\perm(\matr{A})}{\perm_{\scSink}(\matr{A})}
    \;\le\;
    \e^{n},
    \label{app:eq:scsink-bounds}
\end{align}
where the lower bound scales as $\e^{n}n!/n^{n}\approx \sqrt{2\pi n}$ by Stirling's formula.

With these results, we can rewrite the exact asymptotic expressions for $\ZGn$ and $\ZDBn$ in Proposition~\ref{prop:block_asymp}. 

\begin{proposition}
\label{prop:acsv_exact_general}
Let $m$ be a positive integer. As $n \to \infty$, we can rewrite $\ZGn(\bm{k},\bm{\ell})$ and $\ZDBn(\bm{k},\bm{\ell})$ in Proposition~\ref{prop:block_asymp} as
\small{
\begin{align}
    \ZGn
    & \!\sim\!
    \frac{n\sqrt{2m}\cdot \bm{k}^{-\bm{k}}\bm{\ell}^{-\bm{\ell}} \!\cdot\! \e^{2n} \!\cdot\! \bigl(\perm_{\scSink}(\matr{A})\bigr)^2}{(2\pi)^{m-1} \|\bm{r}\|_2^m \sqrt{\det(\mathcal{H})}}
    \cdot
    \prod_{i=2}^m \frac{1}{1\!-\!\eratio_i},
    \label{eq:ZGn_explicit_final} \\
    \ZDBn
    & \!\sim\!
    \frac{n\sqrt{2m}\cdot\! \bm{k}^{-\bm{k}}\bm{\ell}^{-\bm{\ell}} \!\cdot\! \e^{2n} \!\cdot\! \bigl(\perm_{\scSink}(\matr{A})\bigr)^2}{(2\pi)^{m-1} \|\bm{r}\|_2^m \sqrt{\det(\mathcal{H})}}
    \!\cdot\!
    \sqrt{\!\frac{\e}{\pi n}}
    \!\cdot\!
    \prod_{i=2}^m \sqrt{\!\frac{\e^{\eratio_i}}{1\!-\!\eratio_i}}.
    \label{eq:ZBn_explicit_final}
\end{align}
}
Here, $\|\bm{r}\|_2 = \sqrt{\sum_{i\in[m]} k_i^2 + \sum_{j\in[m]} \ell_j^2}$, $\det(\mathcal{H})$ denotes the determinant of the Hessian matrix of the constraint function restricted to the tangent space, and $\eratio_i = \lambda_i / \lambda_1$ are the spectral ratios.
\end{proposition}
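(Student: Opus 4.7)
The plan is to rewrite only the cluster $\bm{w}^{-\bm{r}}/\|\nabla_{\log}Q(\bm{w})\|_2$ in the two asymptotic expressions of Proposition~\ref{prop:block_asymp}, since the geometric prefactor $\sqrt{2m}$, the Hessian determinant $\sqrt{\det(\mathcal{H})}$, the spectral products in $\{\eratio_i\}_{i=2}^{m}$, and the $\sqrt{\e/(\pi n)}$ factor that is specific to the Bethe case all transfer verbatim. Matching numerators and denominators of the targeted display lines in Proposition~\ref{prop:acsv_exact_general} against those of Proposition~\ref{prop:block_asymp}, the identity that needs to be established reduces to
\[
    \frac{\bm{w}^{-\bm{r}}}{\|\nabla_{\log}Q(\bm{w})\|_2}
    \;=\;
    \frac{n\cdot\bm{k}^{-\bm{k}}\bm{\ell}^{-\bm{\ell}}\cdot\e^{2n}\cdot\bigl(\perm_{\scSink}(\matr{A})\bigr)^2}{\|\bm{r}\|_2}.
\]

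First I would extract the factor $n/\|\bm{r}\|_2$ directly from the saddle-point geometry. Combining the criticality condition $\bm{r} = -\eta\,\nabla_{\log}Q(\bm{w})$ from Lemma~\ref{app:lem:critical_point} with the closed-form value $\eta=n$ supplied by Lemma~\ref{app:lem:eta_value} yields the scalar identity $\|\bm{r}\|_2 = n\,\|\nabla_{\log}Q(\bm{w})\|_2$. This accounts for the replacement of $1/\|\nabla_{\log}Q(\bm{w})\|_2$ by $n/\|\bm{r}\|_2$ and reduces the remaining task to proving $\bm{w}^{-\bm{r}} = \bm{k}^{-\bm{k}}\bm{\ell}^{-\bm{\ell}}\cdot\e^{2n}\cdot\bigl(\perm_{\scSink}(\matr{A})\bigr)^2$.

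Next I would identify the critical coordinates $(\bm{t^*};\bm{u^*})$ via the natural generalization of Theorem~\ref{app:thm:sinkhorn-to-tu-m2} to arbitrary $m$, which gives $t_i^* = k_i\,\vright{i}^{\,2}$ and $u_j^* = \ell_j\,\vleft{j}^{\,2}$ in terms of the Sinkhorn scaling factors $\{\vright{i}\}_{i\in[m]}$ and $\{\vleft{j}\}_{j\in[m]}$ of $\matr{A}$. Substituting into $\bm{w}^{-\bm{r}} = \prod_i (t_i^*)^{-k_i}\prod_j (u_j^*)^{-\ell_j}$ cleanly separates the combinatorial factor $\bm{k}^{-\bm{k}}\bm{\ell}^{-\bm{\ell}}$ from $\bigl(\prod_i\vright{i}^{-k_i}\prod_j\vleft{j}^{-\ell_j}\bigr)^{2}$. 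Since $\matr{D}_1$ and $\matr{D}_2$ are diagonal, the permanent of a diagonal matrix gives $\perm(\matr{D}_1^{-1}) = \prod_i\vright{i}^{-k_i}$ and $\perm(\matr{D}_2^{-1}) = \prod_j\vleft{j}^{-\ell_j}$, hence $\perm_{\Sink}(\matr{A}) = \prod_i\vright{i}^{-k_i}\prod_j\vleft{j}^{-\ell_j}$. Feeding this together with the normalization $\perm_{\scSink}(\matr{A}) = \e^{-n}\,\perm_{\Sink}(\matr{A})$ from~\eqref{app:eq:scsink-def} converts the squared product into $\e^{2n}\bigl(\perm_{\scSink}(\matr{A})\bigr)^2$, completing the identity and yielding both~\eqref{eq:ZGn_explicit_final} and~\eqref{eq:ZBn_explicit_final}.

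The main obstacle is the general-$m$ extension of Theorem~\ref{app:thm:sinkhorn-to-tu-m2}. Structurally this is unproblematic: the block-constant Sinkhorn system of Lemma~\ref{app:lem:sinkhorn-block-structure} becomes an $m$-dimensional coupled fixed-point system, but the Perron eigenvector verifications $\matr{M}\bm{\phi}=\bm{\phi}$ and $\bm{\psi}^{\transp}\matr{M}=\bm{\psi}^{\transp}$ with $\bm{\phi}=\vl$ and $\bm{\psi}^{\transp}=(\ell_1/\vleft{1},\dots,\ell_m/\vleft{m})$, the Euler homogeneity argument underlying $\eta=n$, and the first-order eigenvalue-perturbation calculation that produced $n\,\partial\log\lambda_1/\partial\log t_i = k_i$ (and its analogue for $u_j$) propagate coordinate-by-coordinate without modification. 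Once that extension is in place the remaining derivation is bookkeeping.
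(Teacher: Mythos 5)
Your proposal is correct and follows essentially the same route as the paper: both proofs substitute the Sinkhorn parametrization $t_i^*=k_i\vright{i}^{2}$, $u_j^*=\ell_j\vleft{j}^{2}$ into $\bm{w}^{-\bm{r}}$ to extract $\bm{k}^{-\bm{k}}\bm{\ell}^{-\bm{\ell}}\,\e^{2n}\bigl(\perm_{\scSink}(\matr{A})\bigr)^2$, and both use the criticality condition with $\eta=n$ to replace $\|\nabla_{\log}Q(\bm{w})\|_2$ by $\|\bm{r}\|_2/n$, leaving the remaining factors of Proposition~\ref{prop:block_asymp} untouched. Your explicit flagging of the general-$m$ extension of Theorem~\ref{app:thm:sinkhorn-to-tu-m2} is slightly more careful than the paper, which simply asserts that the $m=2$ appendix results generalize straightforwardly.
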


\begin{proof}
The results follow directly by evaluating the saddle-point contribution $\bm{w}^{-\bm{r}}$ and substituting it into the general expansions of Proposition~\ref{prop:block_asymp}. Using the Sinkhorn parametrizations $t_i^* = k_i \vright{i}^2$ and $u_j^* = \ell_j \vleft{j}^2$ given in \eqref{eq:thm-def-tu-explicit}, the exponential factor expands as
\begin{align}
    \bm{w}^{-\bm{r}}
    & = \bm{k}^{-\bm{k}} \bm{\ell}^{-\bm{\ell}} \Biggl( \prod_{i\in[m]} \vright{i}^{-k_i} \prod_{j\in[m]} \vleft{j}^{-\ell_j} \Biggr)^2 \nonumber\\
    & = \bm{k}^{-\bm{k}} \bm{\ell}^{-\bm{\ell}} \bigl( \perm_{\Sink}(\matr{A}) \bigr)^2 \nonumber\\
    & = \bm{k}^{-\bm{k}} \bm{\ell}^{-\bm{\ell}} \cdot \e^{2n} \bigl( \perm_{\scSink}(\matr{A}) \bigr)^2.
\end{align}
Substituting this explicit form into the asymptotic formulas in Proposition~\ref{prop:block_asymp} yields \eqref{eq:ZGn_explicit_final} and \eqref{eq:ZBn_explicit_final}.
\end{proof}

With the explicit asymptotic forms for the coefficients $\ZGn$ and $\ZDBn$ established in Proposition~\ref{prop:acsv_exact_general}, we now translate these results back to the Gibbs permanent and Bethe double-cover permanent. The following lemma combines the ACSV outcomes with the factorial relations derived in Lemma~\ref{lem:bridge-ogf-nfg} to yield the final asymptotic formulas.

\begin{lemma}
\label{lem:perm_asymp_explicit}
Let $m$ be a positive integer. By combining the exact combinatorial relations in Lemma~\ref{lem:bridge-ogf-nfg} with the asymptotic expansions in Proposition~\ref{prop:acsv_exact_general}, it holds that
\begin{align}
    \frac{\bigl(\perm(\matr{A})\bigr)^2}{\bigl(\perm_{\scSink}(\matr{A})\bigr)^2}
    & \sim
    \frac{\sqrt{2m} \cdot2\pi n \sqrt{\bm{k}^{\bm{1}} \bm{\ell}^{\bm{1}}}}{\|\bm{r}\|_2^m \sqrt{\det(\mathcal{H})}}
    \!\cdot\!
    \prod_{i=2}^m \frac{1}{1\!-\!\eratio_i},
    \label{eq:perm_asymp_final}\\
    \frac{\bigl(\perm_{\Bethe,2}(\matr{A})\bigr)^2}{\bigl(\perm_{\scSink}(\matr{A})\bigr)^2}
    & \sim
    \frac{\sqrt{2m} \cdot2\pi n \sqrt{\bm{k}^{\bm{1}} \bm{\ell}^{\bm{1}}}}{\|\bm{r}\|_2^m \sqrt{\det(\mathcal{H})}}
    \!\cdot\!
    \sqrt{\frac{\e}{\pi n}} \!\cdot\! \prod_{i=2}^m\sqrt{\frac{\e^{\eratio_i}}{1\!-\!\eratio_i}},
    \label{eq:bethe_asymp_final}
\end{align}
where $\|\bm{r}\|_2 = \sqrt{\sum_{i\in[m]} k_i^2 + \sum_{j\in[m]} \ell_j^2}$, $\eratio_i\!=\!\lambda_i/\lambda_1$ denote the spectral ratios defined in Proposition~\ref{prop:acsv_exact_general}, and we use the multi-index notation $\bm{x}^{\bm{1}} \defeq \prod_i x_i$ (with $\bm{1}$ denoting the all-one vector).
\end{lemma}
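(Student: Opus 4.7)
The plan is to combine the exact combinatorial bridge from Lemma~\ref{lem:bridge-ogf-nfg} with the two ACSV asymptotic expansions from Proposition~\ref{prop:acsv_exact_general}, and then to absorb the multiplicity factor $\bm{k}!\,\bm{\ell}!$ using a componentwise Stirling approximation. There is no hard analytic work remaining at this stage; the entire task reduces to a careful bookkeeping of cancellations between the Stirling factors and the saddle-point factors inherited from the Sinkhorn parametrization $t_i^* = k_i \vright{i}^2$, $u_j^* = \ell_j \vleft{j}^2$ introduced in Theorem~\ref{app:thm:sinkhorn-to-tu-m2}.

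Concretely, I will first write $(\perm(\matr{A}))^2 = \bm{k}!\cdot\bm{\ell}!\cdot \ZGn(\bm{k},\bm{\ell})$ and $(\perm_{\Bethe,2}(\matr{A}))^2 = \bm{k}!\cdot\bm{\ell}!\cdot \ZBn(\bm{k},\bm{\ell})$, using Eqs.~\eqref{eq:partition:1}--\eqref{eq:partition:2}. Next, applying Stirling's formula $k_i!\sim\sqrt{2\pi k_i}\,(k_i/\e)^{k_i}$ to each coordinate and invoking the balance constraints $\sum_i k_i=\sum_j\ell_j=n$ yields
\begin{align*}
\bm{k}!\,\bm{\ell}!
\;\sim\;
(2\pi)^{m}\sqrt{\bm{k}^{\bm{1}}\bm{\ell}^{\bm{1}}}\;\bm{k}^{\bm{k}}\bm{\ell}^{\bm{\ell}}\;\e^{-2n}.
\end{align*}
Multiplying this by the Gibbs asymptotics in~\eqref{eq:ZGn_explicit_final}, the factors $\bm{k}^{\bm{k}}\bm{k}^{-\bm{k}}$, $\bm{\ell}^{\bm{\ell}}\bm{\ell}^{-\bm{\ell}}$, and $\e^{-2n}\cdot\e^{2n}$ all collapse to unity, while the prefactor $(2\pi)^m/(2\pi\|\bm{r}\|_2)^{m-1}$ reassembles to $2\pi\,\|\bm{r}\|_2^{-(m-1)}$. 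Combining with the remaining factor $n/\|\bm{r}\|_2$ from the gradient-norm contribution gives the denominator $\|\bm{r}\|_2^m$, and isolating $(\perm_{\scSink}(\matr{A}))^2$ on the right produces exactly~\eqref{eq:perm_asymp_final}.

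The Bethe case proceeds identically, starting now from~\eqref{eq:ZBn_explicit_final}: all algebraic cancellations are the same, and the only new elements are the extra multiplicative factors $\sqrt{\e/(\pi n)}$ and $\sqrt{\prod_{i=2}^{m}\e^{\eratio_i}/(1-\eratio_i)}$, which propagate unchanged through the Stirling absorption and yield~\eqref{eq:bethe_asymp_final}. There is no genuine obstacle in this step; the only point requiring some care is to verify that the Sinkhorn-permanent scaling factor $\e^{-2n}$ that comes from $\perm_{\scSink}=\e^{-n}\perm_{\Sink}$ inside $\bm{w}^{-\bm{r}}$ cancels precisely against the $\e^{-2n}$ produced by Stirling, and that the $\sqrt{\bm{k}^{\bm{1}}\bm{\ell}^{\bm{1}}}$ factor in the final expression arises solely from the Stirling square-root contributions, not from any part of $\ZGn$ or $\ZBn$. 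Once these identifications are made explicit, both asymptotics follow by direct substitution.
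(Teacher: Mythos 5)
Your proposal is correct and follows essentially the same route as the paper's proof: apply the bridge identities from Lemma~\ref{lem:bridge-ogf-nfg}, expand $\bm{k}!\,\bm{\ell}!$ via componentwise Stirling to $(2\pi)^m\sqrt{\bm{k}^{\bm{1}}\bm{\ell}^{\bm{1}}}\,\bm{k}^{\bm{k}}\bm{\ell}^{\bm{\ell}}\e^{-2n}$, and cancel against the $\bm{k}^{-\bm{k}}\bm{\ell}^{-\bm{\ell}}\e^{2n}$ factors in Proposition~\ref{prop:acsv_exact_general}, with the Bethe case carrying the extra $\sqrt{\e/(\pi n)}$ and spectral factors unchanged. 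The bookkeeping of the $(2\pi)^m/(2\pi\|\bm{r}\|_2)^{m-1}$ and $n/\|\bm{r}\|_2$ terms matches the paper exactly.
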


\begin{proof}
Recall from Lemma~\ref{lem:bridge-ogf-nfg} that
\begin{align*}
    \bigl( \perm(\matr{A}) \bigr)^2 = \bm{k}! \cdot \bm{\ell}! \cdot \ZGn(\bm{k},\bm{\ell}).
\end{align*} 
Applying Stirling's approximation $n!\sim \sqrt{2\pi n} (n/\e)^n$ to $\bm{k}!$ and $\bm{\ell}!$, we obtain
\begin{align}
    \bm{k}! \cdot \bm{\ell}!
    \sim
    (2\pi)^m \cdot\sqrt{\bm{k}^{\bm{1}} \bm{\ell}^{\bm{1}}} \cdot \bm{k}^{\bm{k}} \bm{\ell}^{\bm{\ell}} \cdot \e^{-2n},
    \label{eq:stirling_explicit}
\end{align}
where we used $\sum_{i\in[m]} k_i \!+\! \sum_{j\in[m]} \ell_j = 2n$.

Substituting the explicit asymptotic formula for $\ZGn$ from \eqref{eq:ZGn_explicit_final} into this product, we observe that the terms $\bm{k}^{\bm{k}} \bm{\ell}^{\bm{\ell}} \e^{-2n}$ perfectly cancel the terms $\bm{k}^{-\bm{k}} \bm{\ell}^{-\bm{\ell}} \e^{2n}$ arising from the ACSV expansion. Collecting the remaining geometric terms and simplifying $(2\pi)^m / (2\pi)^{m-1}$ to $2\pi$ yields the expression in \eqref{eq:perm_asymp_final}. The derivation for $\bigl( \perm_{\Bethe,2}(\matr{A}) \bigr)^2$ follows an identical procedure using \eqref{eq:ZBn_explicit_final}.
\end{proof}

To conclude, we highlight two methodological contributions of this derivation beyond the standard ACSV framework. First, our analysis explicitly identifies the effective dimension of the problem. The global constraints inherent in the block-model formulation impose a dependency among the variables, reducing the effective degrees of freedom for the coefficient extraction integral. Correctly incorporating this reduction into the geometric prefactor is crucial for obtaining the exact asymptotic magnitude. Second, the explicit comparison between $\ZGn$ and $\ZDBn$ involves a non-integer exponent shift. While such cases generally fall under the scope of algebraic generating functions in ACSV theory, our derivation adopts a distinct approach by explicitly isolating the singular direction. This separation allows us to resolve the scaling behaviors using classical singularity analysis results in~\cite{flajolet2009analytic}. Specifically, this approach captures the additional decay factor of $n^{-1/2}$ in $\ZDBn$ relative to $\ZGn$, revealing the precise asymptotic ratio between the two components.

\end{document}